\documentclass{article}

\usepackage{multirow}

\usepackage[utf8]{inputenc}
\usepackage[T1]{fontenc}
\usepackage[a4paper, margin=0.7in]{geometry}
\usepackage{tikz}
\usepackage{soul} % For highlighting
\usepackage{enumitem}
\usepackage{subfigure}
\usepackage{array}
\usepackage{float}
\usepackage{booktabs}
\usepackage{geometry}
\usepackage{color}
\usepackage{amsmath, amssymb, amsthm, mathtools, bm, dsfont}
\usepackage{threeparttable}
\usepackage{lmodern}
\usepackage[tracking=true,kerning=true,final]{microtype}
\usetikzlibrary{positioning,calc,arrows.meta}
\usepackage{hyperref}
\hypersetup{colorlinks=true, linkcolor=blue,  anchorcolor=blue, citecolor=blue, filecolor=blue, menucolor=blue, urlcolor=blue}
\hypersetup{pdfauthor={},pdftitle={}}

\usepackage{caption}   
\usepackage{subcaption}
\usepackage{color}
\usepackage{graphicx}
\graphicspath{{./}{./figs/}}

\usepackage{amsfonts}
\usepackage{bm}
\usepackage{mathtools}
\theoremstyle{plain}

\newtheorem{lemma}{Lemma}
\newtheorem{proposition}{Proposition}

\theoremstyle{definition}

\title{Risk-Sensitive Learning in Population Games under Extreme Events: Bifurcations and Chaotic Dynamics}

\author{Konstantinos Metaxas \thanks{ \href{mailto:metaxas@mit.edu}{metaxas@mit.edu}}, Themistoklis P. Sapsis \thanks{ \href{mailto:sapsis@mit.edu}{sapsis@mit.edu}}\\
	Department of Mechanical Engineering,
	\\ Massachusetts Institute of Technology, \\
	77 Massachusetts Ave., Cambridge, MA 02139\\
}
\date{}

\usepackage{amsmath}
\usepackage{amssymb}

\usepackage[normalem]{ulem}

\begin{document}

	\maketitle\ 
	\begin{abstract}
		Inspired by nonequilibrium phenomena in game dynamics and behavioral evidence on the impact of extreme events on decision making, we investigate the nonlinear dynamics of a discrete-time multiagent learning rule in population congestion games under extreme events affecting one of the actions. The population state, following a risk-sensitive variant of the Multiplicative Weights Update (MWU), is coupled with a belief variable capturing the agents perceived risk and updated through an adaptive expectation rule. We perform a two-parameter bifurcation analysis with respect to the agents controlled parameters, identifying regions of qualitatively distinct behavior. Equilibria are studied first from both game-theoretic and dynamical perspectives. The resulting two-dimensional system exhibits complex behavior, including multi-stability among fixed points, invariant curves, periodic and chaotic attractors. Despite this complexity, the attractors can be grouped into distinct families, while the Cesàro averages of the trajectories are shown to converge to the stationary equilibrium. The incorporation of risk associated with the extreme event leads to new dynamical phenomena: attracting invariant curves arise and give rise to phase-locking Arnold tongues, within which the dynamics is qualitatively similar. In this setting, codimension-two resonances are identified as organizing centers, both within individual tongues and along the bifurcation curves associated with the fixed-point family. Chaotic attractors emerge and are destroyed through Feigenbaum cascades and forward or reverse boundary crises, with interior and merging crises also observed, along with transient chaos and narrow periodic windows. For each qualitatively distinct region, representative phase portraits and the associated basins of attraction are examined.
		%Notably, for large learning rates, the dynamics is predictable.

		\textbf{Keywords}: Bifurcations, Nonlinear dynamics, Chaos, Game dynamics, Population games, Extreme events
		
	\end{abstract}
	
	\section{Introduction}\label{intro}
	Congestion games constitute one of the most fundamental and widely studied families of games in game theory \cite{rosenthal,monderer}. They are typically used as models in settings where agents jointly utilize a set of shared resources, with the cost of each resource determined by its level of congestion. Mirroring biological adaptation, multiagent reinforcement learning suggests that agents utilize the outcomes of their past actions to adapt to the game and select their subsequent actions \cite{camerer,sandholm}. In this context, the agents behavior in congestion games can be viewed as aiming to minimize a common potential function \cite{monderer,bielawski2}. In the limit of a very large population of interacting agents, the learning process is studied at the population level and the game is referred to as population or nonatomic. Several results establish convergence to the Nash equilibrium in both continuous-time and discrete-time settings, provided, in the latter case, that the learning rate is sufficiently small \cite{sandholm,palaiopanos,monderer,krichene}.
	
	The behavior of the system as learning rates deviate from this threshold, or more broadly the question of the potential complexity of agents behavior in games, has attracted significant attention. The seminal work of \cite{sato} has shown that even in simple zero-sum two-player games, such as rock-paper-scissors, the continuous-time replicator dynamics of evolutionary game theory \cite{sandholm,hofbauer,allen,borgers,fudenberg,fudenberg2} can exhibit Hamiltonian chaos, while Hamiltonian and Poisson structures of zero-sum replicator dynamics have also been further investigated in \cite{griffin1}. Several subsequent works have further investigated chaotic dynamics in continuous-time learning \cite{sato2,sato3}, while chaos has also been established in fields at the intersection of evolution and game theory \cite{nowak,chattopadhyay,you,krieger,pati}. In fact, the dynamics of any complex system can be reproduced by agents learning a suitable matrix game under replicator dynamics \cite{andrade}. Important works on stochastic variants of replicator dynamics in zero-sum games \cite{engel}, and more generally on stochastic effects in evolutionary game dynamics \cite{li2}, have also been carried out. Related nonlinear evolutionary-game models have investigated the effects of delayed payoffs, memory, mutation feedback, density dependence, and finite-population or spatial effects on the stability and complexity of the dynamics \cite{yuan,bischi_memory,wang1,wang2,griffin2}. Applications of evolutionary game dynamics include social dilemmas, cooperation, public-goods interactions, and resource-management problems \cite{mintz,lin,feng,ding,li,liu_pool,yang_cooperation,yoshioka}.
	
	In discrete time, where players make decisions in a round-by-round manner, the Multiplicative Weights Update (MWU) \cite{freund} and its generalization Follow-the-Regularized-Leader can be viewed as discrete-time analogues of the corresponding continuous-time Hamiltonian dynamics which are recurrent \cite{bailey4,boone}. However, when agents learn using these rules in zero-sum games, the resulting dynamics can exhibit Lyapunov chaos \cite{mertikopoulos1,bailey,cheung1,cheung2}. Complex behavior has also been shown to arise in nonatomic two-strategy congestion games with agents employing MWU \cite{chotibut1,chotibut2,palaiopanos,bielawski}, and persists in heterogeneous settings \cite{bielawski2}, where agents have different learning rates \cite{hadhikanloo} or when learning rates are adaptively updated \cite{vlatakis1}. Related studies of nonlinear evolutionary-game dynamics have also emphasized the role of adaptive update rules, network structure, attractors, and basins of attraction \cite{hakhamanesh,sun}. In more complex games with two or more players, chaos has been observed when agents utilize the Experience-Weighted Attraction rule of behavioral game theory, although memory loss, acting as a perturbation to standard MWU dynamics, can lead to more predictable behavior \cite{galla,sanders,bielawski3,pangallo2}. Chaotic behavior has also been observed in auctions \cite{cheung3} and in dynamics induced by fictitious play learning rules \cite{sparrow,vanStrien}, gradient descent-ascent algorithms \cite{bailey2}, as well as in discretized replicator dynamics \cite{mukhopadhyay,pandit,vilone}. Beyond learning and evolutionary dynamics, nonlinear strategic adjustment models in economic games, including duopoly, triopoly, oligopoly and Cournot-type games, have been known to exhibit complex bifurcation structures and chaos \cite{puu,kopel,tramontana,matouk,andaluz,wei}.
	
	According to behavioral decision theory, low-probability, high-impact outcomes may affect individuals responses by shaping their perceived risk, memory, and subsequent decision making \cite{kahnemann1,kahnemann2}. Thus, an extreme event not only has a material consequence but also adaptively affects an individual's perception and memory. From the perspective of decisions from description and experience, the influence of extreme events on decision making is closely tied to memory and learning from realized outcomes \cite{plonsky,hertwig}. In the context of learning in games, extreme events become particularly relevant in real-world settings where agents adapt their strategies under uncertainty, and the learning process becomes coupled with their internal perception of risk \cite{camerer,camerer2}. In fact, consistent with experience-based decision making, the behavioral impact of extreme events extends to the agents adaptive learning process, making them an important endogenous factor \cite{plonsky2}. Despite extensive research on complex dynamics induced by learning in games and behavioral game theory, the effect of rare, high-impact events on population dynamics has received limited direct attention. This raises the question of what new bifurcations, attractors and dynamical phenomena may arise from the behavioral impact of such events. 
	
	Motivated by the above discussion, the present work investigates the nonlinear dynamics and bifurcations induced by learning in the presence of relatively extreme events in the simplest class of population games, namely two-action congestion games. It is assumed that an extreme event affects the first resource, which agents perceive through a shared signal. To illustrate the complexity of the learning dynamics, it suffices to consider a simple linear dependence of this signal on the population state. Consistent with observations from human behavior, the population state is augmented by a belief variable capturing the agents perception or memory of the risk associated with the first action. This variable is updated through a standard discrete-time adaptive expectation rule \cite{hommes}. Each agent follows a risk-sensitive variant of the discrete-time reinforcement learning MWU algorithm, in which the belief about the risk enters the cost, making the system two dimensional.
	
	We first study the equilibria of the system from both dynamical and game-theoretic perspectives. Similar to traditional MWU dynamics, the system admits two trivial equilibria corresponding to the boundary cases of the population state. The equilibrium associated with the full-population state is always a saddle, while the one corresponding to the zero-population state is stable only in parameter regimes associated with non-extreme events. In this case, the dynamics is trivial, since the equilibrium is globally attracting and also game-theoretically stationary. When this equilibrium is a saddle, a third interior equilibrium exists, which is always stationary. Using numerical continuation techniques \cite{matcont,datseris}, we compute bifurcation curves in a two-dimensional parameter space defined by the key parameters controlled by the agents, namely the learning rate and the belief adjustment rate, thereby identifying regions of qualitatively distinct behavior.
	%In addition to detailed bifurcation diagrams, we analyze representative cases of multistability through basins of attraction and phase portraits. 
	The system under study exhibits complex behavior, including chaotic attractors emerging through Feigenbaum cascades, crisis bifurcations \cite{crisis,crisis2,crisis3,lichtenberg1} leading to the destruction, creation, or expansion of strange attractors, stable invariant curves, phase-locked periodic orbits, codimension-two resonances, and multistability among these attractors. In addition to detailed bifurcation diagrams, we analyze representative cases of multistability through basins of attraction and phase portraits. We also calculate the associated basin entropy ($S_b$) and boundary basin entropy ($S_{bb}$), which quantify the unpredictability and mixing properties of the basins \cite{entropy1,entropy2,entropy3}. In this way, fractal structure of the basin boundaries is inferred. A case of fully mixed ($S_b=S_{bb}$) non-Wada basins is also identified. Transient chaos and narrow periodic windows in parameter space are observed in the presence of chaotic saddles. Despite the complexity of the system, the attractors form well-defined families and exhibit qualitatively similar dynamical phenomena. In particular, the codimension-two structure of the periodic families is consistent across different families, while the basins of attraction are typically highly intermingled. Finally, the Cesàro time averages of the orbits are shown to converge to the stationary equilibrium, even in cases exhibiting complex and chaotic dynamics. 
	
	Note that the two-dimensional nature of the system allows for several phenomena that are absent from the MWU variants studied so far. Indicative examples include Neimark--Sacker bifurcations, Arnold tongues, and codimension-two resonances, which have not been identified in traditional MWU dynamics for population games, even in the heterogeneous setting \cite{bielawski2}. Moreover, the stationary interior equilibrium may be stable without being the global attractor, coexisting with both chaotic and periodic attractors. Finally, for large values of the learning rate, the system exhibits predictable periodic behavior, in stark contrast to the one-dimensional MWU case.
	
	This paper is organized as follows. In Section~\ref{section1}, we first briefly discuss traditional learning in congestion games under MWU, and then introduce and study the fundamental properties of the system induced by the risk-sensitive learning rule. Technical details for this Section are provided in Appendix~\ref{app}. In Section~\ref{section2}, a detailed bifurcation analysis of the system is conducted, qualitatively distinct attractor families are identified, and the complex and chaotic dynamics of the system are discussed in depth. Finally, Section~\ref{section3} concludes the article.

	\section{Model and fundamental properties}
	\label{section1}
	In this Section, we introduce and discuss the fundamental properties of the risk-sensitive variant of the Multiplicative Weights Update algorithm (MWU), whose nonlinear dynamics we study in detail in Section~\ref{section2}. We first briefly discuss the setting of learning congestion games under the standard multi-agent reinforcement learning algorithm MWU \cite{palaiopanos,arora}.
	
	\subsection{Learning congestion games with MWU}\label{subsec2.1}
	We consider a two-strategy non-atomic congestion game, i.e., a game with a continuum of agents \cite{bailey,cheung2}. Let \(x_n\in[0,1]\) denote the fraction of agents choosing action \(1\) at time \(n\), so that the corresponding flows on actions \(1\) and \(2\) are \(Nx_n\) and \(N(1-x_n)\), respectively, where \(N>0\) is the total population mass. The realized stage costs are given by
	\begin{equation}
		c_1=N\gamma x_n, \qquad c_2=Nb(1-x_n),
		\label{eq00}
	\end{equation}
	where the parameters \(\gamma,b>0\) are the coefficients of proportionality and satisfy \(\gamma+b=1\) \cite{bielawski3}. When the agents learn according to MWU, their choices evolve as
	\begin{align} x_{n+1}&=\frac{x_n \exp{\left(-\lambda c_1(x_n) \right)}}{x_n\exp{\left(-\lambda c_1(x_n) \right)}+(1-x_n) \exp{\left(-\lambda c_2(1-x_n) \right)}} \nonumber\\ &=\frac{x_n}{x_n+(1-x_n) \exp{\left(a(x_n-b) \right)}}, \label{trad_mwu} 
	\end{align}
	where \(\lambda=-\log(1-\epsilon)\) describes the intensity of choice, \(a=N\lambda\) denotes the population intensity of choice, and \(\epsilon\in(0,1)\) is the common learning rate.
	
	The dynamical system \eqref{trad_mwu} has the boundary fixed points \(0\) and \(1\), as well as the interior fixed point \(b\). The boundary fixed points are always repelling, while the interior fixed point is globally attracting if \(a\leq\frac{2}{b(1-b)}\) \cite{chotibut1}. If this condition does not hold, then complex behavior may arise, including periodic orbits and chaos for sufficiently large values of $a$ \cite{chotibut1}. However, the Cesàro mean of any orbit always converges to \(b\) \cite{chotibut1}-- cf. the analogous result described in Proposition \ref{prop1} for the risk-sensitive case we will be considering.
	
	From a game-theoretic perspective a \emph{Nash equilibrium} is a mixed strategy determined by a fraction $x^N$, with the first action played with probability $x^N$ and the second with probability $1-x^N$, such that no agent can improve their payoff by unilaterally deviating to another strategy. Recalling \eqref{eq00}, one can immediately see that the interior equilibrium $b$ is always a Nash equilibrium.
	
	The intricate dynamics of this simple and fundamental model in multi-agent reinforcement learning have led to the investigation of several variants of the standard MWU learning rule \cite{chotibut2,palaiopanos,bielawski,bielawski2,bielawski3,vlatakis1}, including heterogeneous learning rates \cite{bielawski2,hadhikanloo} and adaptively updated learning rates \cite{vlatakis1}. These works indicate that complex behavior persists under several extensions, while discounting through memory loss may, under certain conditions, lead to the avoidance of chaos \cite{bielawski3}.
	
	Despite the plethora of studied extensions of MWU, as also discussed in Section~\ref{intro}, the complexity of the dynamics when agents incorporate perceived risk and memory, associated with the costs of actions, stemming from a low-probability extreme event has not received much direct attention. This is particularly relevant because extreme events are known to affect the decision-making and adaptation process by coupling it with risk perception \cite{camerer,camerer2,plonsky,hertwig}. Moreover, from a dynamical-systems perspective, the consideration of such a model offers the possibility of new phenomena that are absent from the existing literature on the dynamics of MWU. The purpose of this paper is to offer a first step towards addressing this gap. To this end, in the next subsections of Section~\ref{section1}, we introduce and discuss the risk-sensitive model whose dynamics and bifurcations we study in detail in Section~\ref{section2}.
	
	\subsection{Risk-sensitive MWU variant}\label{risk_sens}
	In this subsection, we introduce the risk-sensitive variant considered in this paper. We consider a non-atomic congestion game as in Section~\ref{subsec2.1} and use the same notation. However, we now suppose that an extreme event affects action $1$, i.e., the realized stage costs are given by
	\begin{equation}
		C_1=N(\gamma x_n+MU_n), \qquad C_2=Nb(1-x_n),
		\label{eq0}
	\end{equation}
	where \(U_n\in\{0,1\}\) is a random variable corresponding to an  extreme (and, thus, low-probability) event affecting action \(1\) and $M>1$ corresponds to its intensity. The parameters $\gamma, \; b>0$ are as in Section \ref{subsec2.1}. We suppose that the agents, instead of knowing explicitly the law of $U_n,$ perceive the occurrence of such an event through a (shared) signal generated by the environment. To illustrate the complexity induced in the learning dynamics by this mechanism, it is sufficient to consider the simple model that represents the probability (risk) for an extreme event
	\begin{equation}
		p(x_n)=p_0+p_1x_n,
		\label{eq_sig}
	\end{equation}
	where \(p_0,p_1\ge 0\) (of the order of $10^{-3}$ or less). Here \(p_0\) represents the exogenous baseline risk of action \(1\), while the term \(p_1x_n\) captures the endogenous dependence of the perceived risk on the use of that action. Thus, $p_0+p_1x_n$ is interpreted as the perceived risk signal for extreme event, associated with action \(1\) at time $n.$ 
	
	To model risk perception and belief, we introduce a homogeneous belief (or memory) variable $q_n\in[0,1]$, representing the agents common internal assessment of the extreme-event risk at time $n$. Learning and remembering the perceived risk of an action and making decisions based on it is a common modeling approach in decision-making from experience motivated by the human behavior \cite{prelec,hertwig}. Outside of the connection to extreme events, the $q$ state can be thought of as representing an internal variable that is recursively updated and that, together with the observable state $x$, determines the players choices, a modeling approach commonly used in learning and behavioral dynamics \cite{camerer}.
	
	The evolution of this belief is assumed to follow the standard adaptive rule
	\begin{equation}
		q_{n+1}=(1-k)q_n+k(p_0+p_1x_n),
		\qquad k\in(0,1],
		\label{eq_q}
	\end{equation}
	where the next iterate belief is a convex combination of the current internal assessment and the perceived risk signal. The parameter \(k\) governs the speed of belief adjustment. Equivalently, it can be seen as the standard discrete-time form of adaptive expectations or exponential smoothing. Rather than reacting directly to the instantaneous signal $p_0+p_1x_n$, agents evaluate the extreme-event component of action $1$ using $q_n$ as their current perceived event probability and amplify the intensity $M$ by a factor $1/(1-\alpha),\; \alpha \in (0,1)$. This constitutes essentially a simplified version of the Conditional Value-at-Risk measure at confidence level $\alpha$; more details on this matter are given in Appendix \ref{sec2_app}. Accordingly, the perceived cost of action \(1\) is
	\begin{equation}
		\tilde c_1(x_n,q_n)
		=
		N\left(\gamma x_n+cq_n\right),
		\label{eq:c1tilde}
	\end{equation}
	where $c=M/(1-\alpha)$ is the effective intensity. The perceived cost of action \(2\) is
	\begin{equation}
		\tilde c_2(x_n)=Nb(1-x_n).
		\label{eq_c1}
	\end{equation}
	Under the normalization \(\gamma+b=1\), the perceived relative cost becomes
	\begin{equation}
		\Delta(x_n,q_n)=\tilde c_1(x_n,q_n)-\tilde c_2(x_n)
		=
		N\left(x_n-b+c q_n\right).
		\label{eq_c2}
	\end{equation}
	Let \(\epsilon\in(0,1)\) denote the common learning rate and set $\lambda=-\log(1-\epsilon), \; a=N\lambda$. Then, following the same computations as in the derivation of \eqref{trad_mwu}, the risk-sensitive Multiplicative Weights dynamics is given by 
	\begin{align}
		x_{n+1}&=\frac{x_n}{x_n+(1-x_n)\exp\!\left(
			a\left(x_n-b+cq_n \right)
			\right)} \nonumber\\
		q_{n+1}&=(1-k)q_n+k(p_0+p_1x_n).
		\label{sys}
	\end{align}
	We refer to the previous dynamics as \emph{risk-sensitive}, due to the coupling of the population state $x$ with the perceived risk modeled by $q$, resulting from the evaluation of risk in the cost of the first resource. In the following sections we study the complex dynamics induced by the learning rule defining the system \eqref{sys}, which we denote by $(x,q)\mapsto f(x,q)=(f_1(x,q),f_2(x,q)).$

	\subsection{Equilibria and fundamental properties}\label{sec2.2}
	In this subsection we determine the system's equilibria (both dynamically and game-theoretically) and discuss the fundamental properties of system \eqref{sys}, including the convergence of the time averages of every orbit to the stationary equilibrium, a property similar to that of the traditional MWU discussed in Section \ref{subsec2.1}.
	
	The dynamical system \eqref{sys} has the trivial fixed points $(0,p_0), \; (1,p_0+p_1)$. The first equilibrium is stable if and only if $b<cp_0,$ while the second one is always a saddle, as it is unstable in the $x-$direction. If $b_0<cp_0,$ then the dynamics is trivial, since the fixed point $(0,p_0)$ attracts all orbits, as shown in Lemma \ref{lem1} in the Appendix \ref{sec1_app}. If $b>cp_0$ the system has the interior equilibrium $(x^\star, q^\star)$, where
	\begin{equation}
		x^\star=\frac{b-cp_0}{1+cp_1}, \; q^\star=\frac{p_0+p_1b}{1+cp_1}, \; c=\frac{M}{1-\alpha}.
	\end{equation}  
	In that case, regarding the boundary equilibria, we also have the following property (shown in Lemma \ref{lem_delta}): for every orbit there is a $\delta>0$ such that $x_n \in (\delta,1-\delta)$ for all $n$. The more interesting questions about the stability and the related bifurcations of this equilibrium are studied in detail in the next sections. 
	
	In terms of game-theoretic equilibria, 
	%for a fixed value of $q$, a Nash equilibrium is a mixed strategy determined by a probability $x^N$, with the first action played with probability $x^N$ and the second with probability $1-x^N$, such that no agent can improve their payoff by unilaterally deviating to another strategy. 
	a stationary equilibrium with consistent beliefs is a pair $(x^N,q^N)$ with $x^N$ a Nash equilibrium (cf. Section \ref{subsec2.1}) and $q^N=p_0+p_1x^N$. Stationary equilibria form a subset of the fixed points of the system. It is easy to check that $(1,p_0+p_1)$ is never a stationary equilibrium. If $b<cp_0$, i.e., the $(0,p_0)$ equilibrium is stable and the interior equilibrium does not exist, then $(0,p_0)$ is also a stationary equilibrium. Thus, in this trivial case, where the extreme event is sufficiently strong and relatively likely to occur, all trajectories converge to the Nash equilibrium of the system. In the more interesting case of $b>cp_0$, the trivial equilibrium $(0,p_0)$ is not a stationary equilibrium of the system, while the interior one, $(x^\star,q^\star)$, satisfies both the condition of Nash equilibrium $\Delta(x^\star, q^\star)=0$ and $q^\star=p_0+p_1x^\star$, and is therefore the only stationary equilibrium.
	
	%Let $\tilde{f}$ denote the map defining \eqref{sys}, but without the minimum operator in the equation, i.e.,
	%\begin{equation}
	%\tilde{f}(x,q)=\left(\frac{x_n}{x_n+(1-x_n)\exp\!\left(
		%a\left(x_n-b+Mq/(1-\alpha)\right)
		%\right)}, \;(1-k)q+k(p_0+p_1x)  \right).
	%\label{ftilde}
	%\end{equation}
	As noted in the previous Section, system \eqref{sys} is closely related to the one obtained when agents evaluate the perceived risk using the CVaR$_\alpha$ risk measure. The dynamics of the two systems are also closely related. These connections are discussed in Appendix \ref{sec2_app}. 
	
	We now prove the convergence of time-averages of the trajectories of the system \eqref{sys} to the stationary equilibrium, even when the induced dynamics is complex, as is shown in the next sections. This result can be viewed as an analogue of the corresponding result for traditional MWU.
	
	\begin{proposition}\label{prop1}
		If $x_0\in (0,1),$ then we have the following convergences 
		\begin{equation}
			\lim_{n\to \infty} \frac{1}{n+1} \sum_{i=0}^n x_i =x^\star, \; \lim_{n\to \infty} \frac{1}{n+1} \sum_{i=0}^n q_i =q^\star. 
		\end{equation}
	\end{proposition}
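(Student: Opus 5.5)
The plan is to follow the classical telescoping argument for MWU (as in \cite{chotibut1}), working with the log-odds variable. Set $y_n=\log\frac{x_n}{1-x_n}$, which is well defined for all $n$ since $x_0\in(0,1)$ and the map preserves $(0,1)$. From \eqref{sys},
\begin{equation*}
\frac{x_{n+1}}{1-x_{n+1}}=\frac{x_n}{1-x_n}\exp\!\left(-a\left(x_n-b+cq_n\right)\right),
\end{equation*}
so $y_{n+1}-y_n=-a(x_n-b+cq_n)$. Summing from $0$ to $n$ gives
\begin{equation*}
\frac{1}{n+1}\sum_{i=0}^n\left(x_i+cq_i\right)=b-\frac{y_{n+1}-y_0}{a(n+1)}.
\end{equation*}

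Next I will show that $y_{n+1}/(n+1)\to 0$, i.e.\ that $y_n$ is bounded. In the case $b>cp_0$ this follows from Lemma \ref{lem_delta}: $x_n\in(\delta,1-\delta)$ for all $n$, so $|y_n|\le\log\frac{1-\delta}{\delta}$. Hence the averages satisfy $\bar x_n+c\bar q_n\to b$, where $\bar x_n$ and $\bar q_n$ denote the Cesàro means.

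The second relation comes from the belief equation. Summing $q_{i+1}-q_i=k(p_0+p_1x_i-q_i)$ over $i=0,\dots,n$ and dividing by $n+1$ gives
\begin{equation*}
\frac{q_{n+1}-q_0}{n+1}=k\left(p_0+p_1\bar x_n-\bar q_n\right).
\end{equation*}
The left side tends to $0$ because $q_n\in[0,1]$ (indeed $q_n$ stays between $\min(q_0,p_0)$ and $\max(q_0,p_0+p_1)$ by convexity), and $k>0$, so $\bar q_n-p_0-p_1\bar x_n\to0$. We now have a linear system of asymptotic relations. Substituting the second into the first yields $\bar x_n(1+cp_1)+cp_0\to b$, hence $\bar x_n\to\frac{b-cp_0}{1+cp_1}=x^\star$. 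Then $\bar q_n\to p_0+p_1x^\star=\frac{p_0+p_1b}{1+cp_1}=q^\star$, a short algebraic check.

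The main obstacle is the boundedness of $y_n$, which rests entirely on Lemma \ref{lem_delta}. Without it one only has $\bar x_n\in[0,1]$, and the telescoped term could in principle grow linearly. If one prefers not to invoke that lemma, I would argue directly with a one-step bound. When $x_n$ is small, the exponent satisfies $x_n-b+cq_n\le x_n-b+c(p_0+p_1x_n)+c|q_n-q^\ast_{\text{loc}}|$, which is negative once $q_n$ has approached the $[p_0,p_0+p_1]$ band, since $b>cp_0$; thus $y_n$ increases. When $x_n$ is near $1$, the exponent $x_n-b+cq_n>0$ makes $y_n$ decrease. Combined with the bounded increments $|y_{n+1}-y_n|\le a(1+c)$, this confines $y_n$ to a bounded interval. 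The statement implicitly assumes $b>cp_0$; otherwise $x^\star$ is not in $(0,1)$.
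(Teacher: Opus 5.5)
Your proof is correct and is essentially the paper's argument: the log-odds telescoping $y_{n+1}-y_0=-a\sum_{i=0}^n(x_i-b+cq_i)$ is the paper's closed-form expression for $x_{n+1}$ in terms of $\exp\bigl(a\sum_{i=0}^n(x_i-b_i)\bigr)$, which is bounded via Lemma~\ref{lem_delta}, and the averaging of the belief equation and the resulting linear relations are the same as in the paper.

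Your fallback sketch in the last paragraph is not needed, and as written it is imprecise. Having $q_n$ in or near the band $[p_0,p_0+p_1]$ does not make $x_n-b+cq_n$ negative when $c(p_0+p_1)>b$. This happens for the paper's parameters, where $c(p_0+p_1)\approx 0.505>0.3=b$. What is actually required is that $q_n$ relaxes close to $p_0$ within a uniformly bounded number of steps while $x_n$ stays small, which is exactly the $L$-step argument in the proof of Lemma~\ref{lem_delta}.
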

	\begin{proof}
		The case $b\leq cp_0$ is trivial, as the stationary equilibrium is a global attractor of the system. Suppose that $b>cp_0$. Inductively we have that 
		$$
		x_{n+1}=\frac{x_0}{x_0+(1-x_0)\exp{ \left(a \sum_{i=0}^n (x_i-b_i) \right)}}, \; b_i=b-cq_i.
		$$
		From Lemma \ref{lem_delta}, there exists a $\delta\in (0,1)$ such that $\delta<x_n<1-\delta$ for all $n$. Thus,
		$$
		\frac{x_0}{1-\delta}<x_0+(1-x_0)\exp{ \left(a \sum_{i=0}^n (x_i-b_i) \right)}<\frac{x_0}{\delta}
		$$
		and so
		$$
		\delta^2<\frac{\delta^2}{1-x_0}<\frac{x_0\delta}{1-\delta}<\exp{ \left(a \sum_{i=0}^n (x_i-b_i) \right)}<\frac{x_0(1-\delta)}{\delta(1-x_0)}<\frac{1}{\delta}.
		$$
		As a result,
		$$
		\lim_{n \to \infty}\frac{1}{n+1} \sum_{i=0}^n (x_i+cq_i)=b.
		$$
		Now let 
		$$
		A_n=\frac{1}{n+1} \sum_{i=0}^n x_i, \qquad B_n=\frac{1}{n+1} \sum_{i=0}^n q_i.
		$$
		From \eqref{eq_q} we have that
		$$
		\frac{q_{n+1}-q_0}{n+1}=k(p_0+p_1A_n-B_n),
		$$
		and thus
		$$
		\lim_{n \to \infty} (1+cp_1)A_n+p_0-\frac{q_{n+1-}-q_0}{k(n+1)} =b.
		$$
		This implies, since $q_n$ is bounded, that 
		$$
		A_n=\frac{1}{n+1} \sum_{i=0}^n x_i \xrightarrow{n \to \infty} x^\star.
		$$
		The convergence of the time average of $q_n$ follows.
	\end{proof}
	The previous property suggests immediately one further regularity result: if $\mu$ is a probability measure on $[0,1]\times[0,p_0+p_1]$ invariant under $f$ and such that $\mu\left(\{ 0,1\} \times [0,p_0+p_1] \right)=0,$ then Birkhoff's ergodic theorem implies that 
	\[
	\int (x,q) \;\mathrm{d}\mu= (x^\star, q^\star).
	\]

	\section{Bifurcation analysis and chaotic dynamics}\label{section2}
	
	\begin{table}[!h]
		\centering
		\begin{threeparttable}
			\caption{Primary attractor families discussed in the text. The bifurcation curves refer to those shown in Figs.~\ref{bifs1}, \ref{bifs2}, and \ref{bifs3}. The descriptor ``t'' denotes an attracting invariant curve. When the original attractor is periodic or a fixed point, the last column indicates its period. Saddle-node bifurcations are denoted by SN, and Neimark--Sacker bifurcations by NS. All families are destroyed through the corresponding boundary crisis, except for family VII.}
			\label{tbl1}
			
			\begin{tabular*}{.85\linewidth}{@{\extracolsep{\fill}} l l l @{}}
				\toprule
				Family & Birth & Original Family's Period\\
				\midrule
				I  & SN1  &  3\\
				II & --  &   1  \\
				II--t  & NS$_2^1$ or NS$_2^2$ or inverse boundary crisis &   --\\
				III & SN3  & 9\\
				IV & SN4 & 4\\
				V & SN5 & 7\\
				II--$i$ & SN$_{2-i}$ or NS$_{2-5}$ & $i$\\
				VI& SN$_{6-j}$ or NS$_{6-j}$\tnote{1} & $j$\\
				VII& SN$_{7-5}$& 5\\
				\midrule
				I-IV-V& Interior or inverse boundary crisis& --\\
				I/II& Interior crisis& --\\
				I/II/IV& Interior crisis&--\\
				\bottomrule
			\end{tabular*}
			
			\begin{tablenotes}
				\item[1] The SN$_{6-j}$ curves are denoted by SN$_j$ in Figs.~\ref{bifs2} and \ref{bifs3} for simplicity, while the labels NS$_{6-j}$ have been omitted from the corresponding curves in the figures for clarity. 
			\end{tablenotes}
			
		\end{threeparttable}
	\end{table}

	In this Section, we study in detail the nonlinear dynamics and the associated bifurcations of the system described by \eqref{sys}. The analysis is based on numerical techniques, since analytical tools (such as the center manifold theorem or normal form theory) are only valid locally around the corresponding equilibrium or cycle and would not reveal the global dynamics. We consider the parameters of the extreme event $p_0,p_1,M$ and the parameters $b,\alpha$ fixed, and study how the parameters controlled by the agents, namely $a$ and $k$, affect the learning dynamics. We assume, as stated that $b>cp_0$, which induces non-trivial dynamics, as otherwise the $(0,p_0)$ equilibrium is the global attractor of the system. In the following, let $p_0=10^{-5},\; p_1=10^{-3}, \; M=15, \; \alpha=0.97, \; b=0.3$. Different choices of these parameters would result in quantitative differences, retaining the qualitative picture discussed next. In the last subsection we elaborate briefly on this matter. The bifurcation curves have been computed via continuation techniques following  \cite{matcont,datseris}.
	
	Despite the regularity results discussed in the previous Section, the dynamics is fairly complex and for this reason we divide the $(a,k)$ parameter space into distinct regions, on which we focus. Although most bifurcations are contained in, and can be described within, the first two regions, we also study a third region to highlight differences in the stability, coexistence and merging of the attractors depending on the order of the occurring bifurcations. Importantly, we show in the following that, despite the complexity of the underlying dynamics, the attractors are organized into specific families, and the mechanisms governing their birth and destruction follow similar qualitative patterns. Table \ref{tbl1} summarizes the main attractor families of the system, with regular numerals replacing the Latin ones in the respective figures for the sake of clarity. In what follows saddle-node bifurcations are denoted by SN, period-doubling bifurcations by PD, and Neimark-Sacker bifurcations by NS.

	\subsection{Dynamics within the first region}\label{sec3.1}
	
	\begin{figure}[!t]
		\centering
		\includegraphics[width=4.5in]{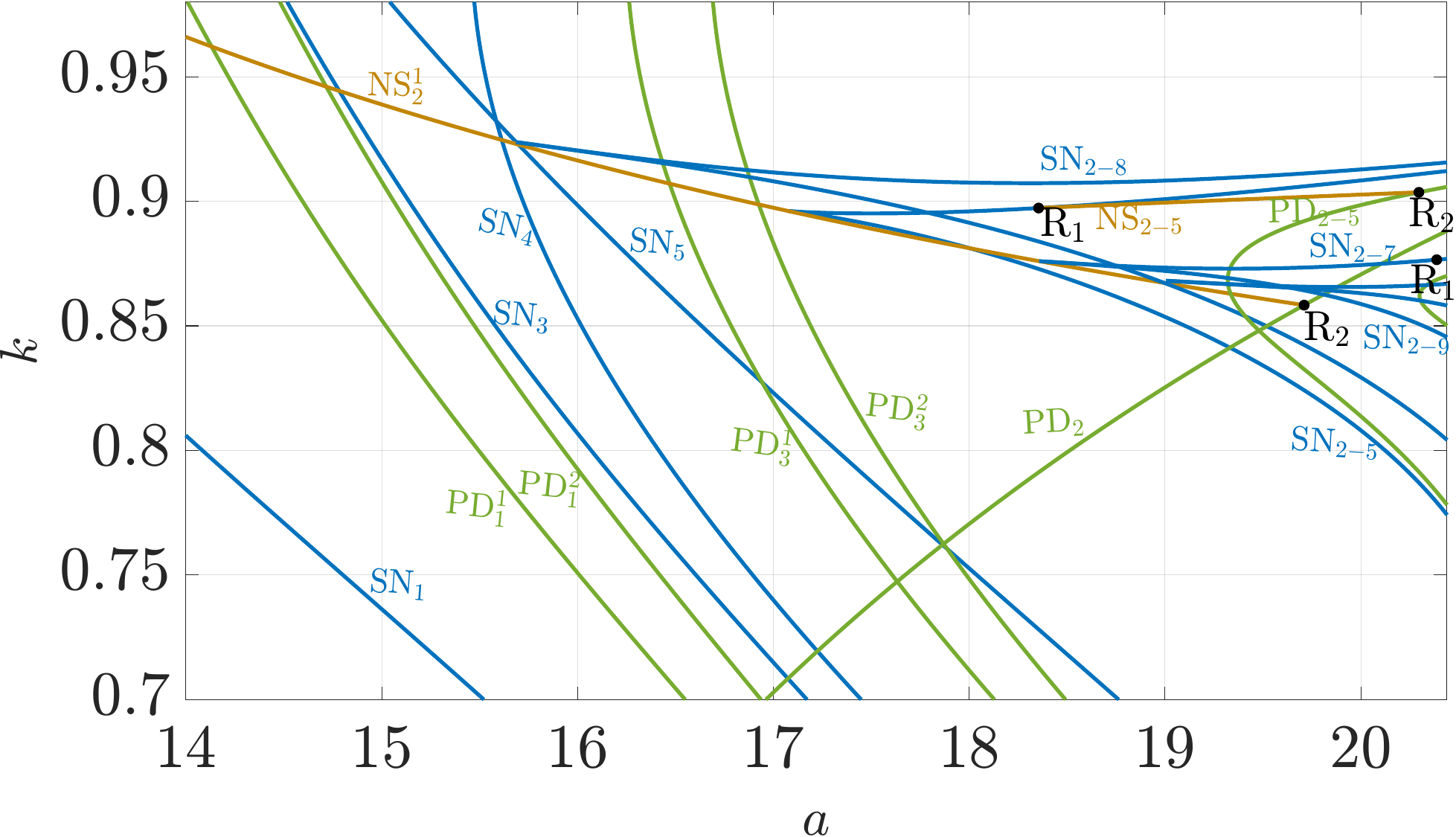}
		\caption{Bifurcation curves in the first region. Saddle-node bifurcations are denoted by SN, period-doubling bifurcations by PD, and Neimark-Sacker bifurcations by NS. Bifurcation curves of the same type share the same color. The index of each curve refers to the attractor family of Table~\ref{tbl1}. The superscript in the PD and NS curves refers to their order in the respective family. The labels R$_1$ correspond to $1:1$ strong resonances, while the labels R$_2$ to $1:2$ strong resonances.}
		\label{bifs1}
	\end{figure}
	We begin with the region in Fig. \ref{bifs1}, containing the primary bifurcation curves of the periodic orbits. To illustrate the qualitatively distinct attractors, suppose we fix $k=0.865$ and gradually increase $a$. Fig. \ref{bifa1} depicts the resulting bifurcations as $a$ varies. At low values of the parameter $a$ the Nash equilibrium is the unique attractor of the system. As the learning rate and, thus $a$, increases, a stable period-three cycle (family I, c.f. Table \ref{tbl1}) emerges corresponding to the crossing of the SN$_1$ curve of Fig. \ref{bifs1}. Its basin of attraction is shown in Fig. \ref{basins1} (a). This cycle undergoes a cascade of period-doubling bifurcations, the first two of which correspond to the PD$_1^1$, PD$_1^2$ curves of Fig. \ref{bifs1}. As SN$_3$ is crossed a period 9 cycle is born (family III), which in turn experiences a Feigenbaum route to chaos. The basins of attraction and the phase space, when both strange attractors and the stable equilibrium are present, are illustrated in Fig. \ref{basins1}(b) and Fig. \ref{phspace1}(a), respectively. It is worth noting that the basin boundaries have become intermingled and are of fractal type. A natural way to quantify this observation is via determining the basin entropy ($S_b$) and boundary basin entropy $S_{bb}$ \cite{entropy1,entropy2,entropy3}, reported in Table \ref{tblentropy}. The former quantifies the overall unpredictability of the system in terms of which attractor an initial condition converges to, while the latter measures the intensity of mixing at the boundary. Since $S_{bb}$ satisfies the fractality criterion proposed in \cite{entropy2,entropy3}, the basin boundaries have indeed fractal structure, in contrast to the boundaries in Fig. \ref{basins1} (a).
	
	\begin{figure}[t!]
		\centering
		\subfigure[]{
			\includegraphics[width=0.31\linewidth]{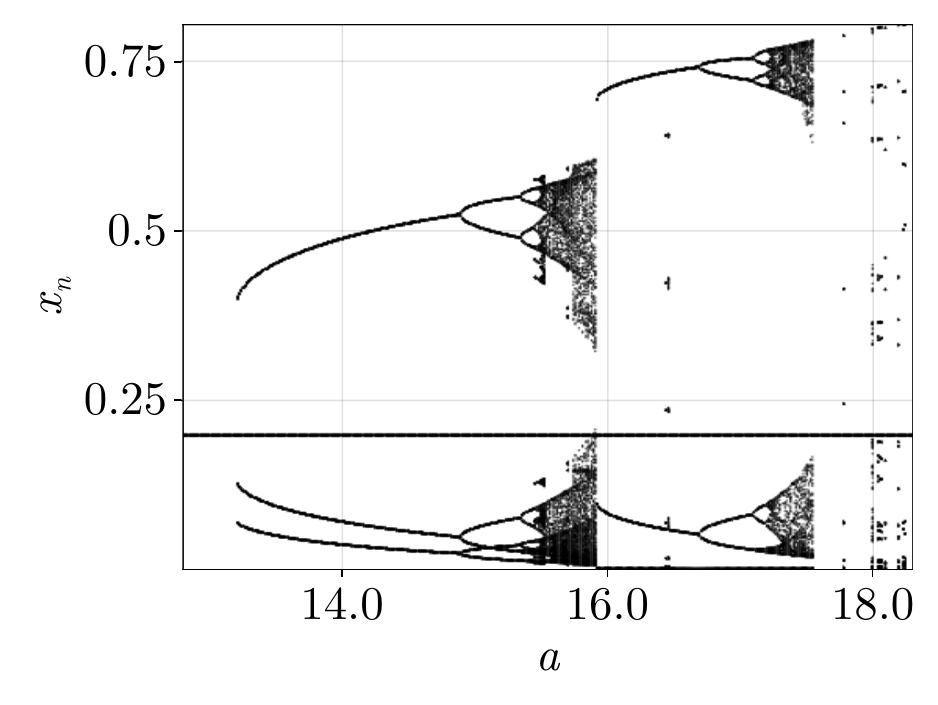}
		}
		\hfill
		\subfigure[]{
			\includegraphics[width=0.31\linewidth]{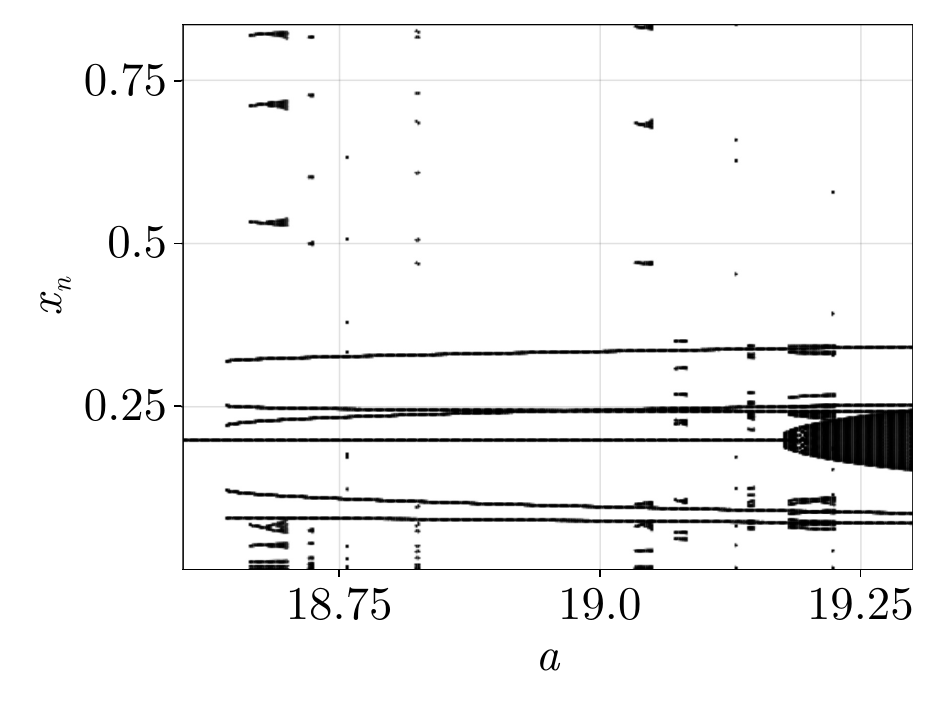}
		}
		\subfigure[]{
			\includegraphics[width=0.31\linewidth]{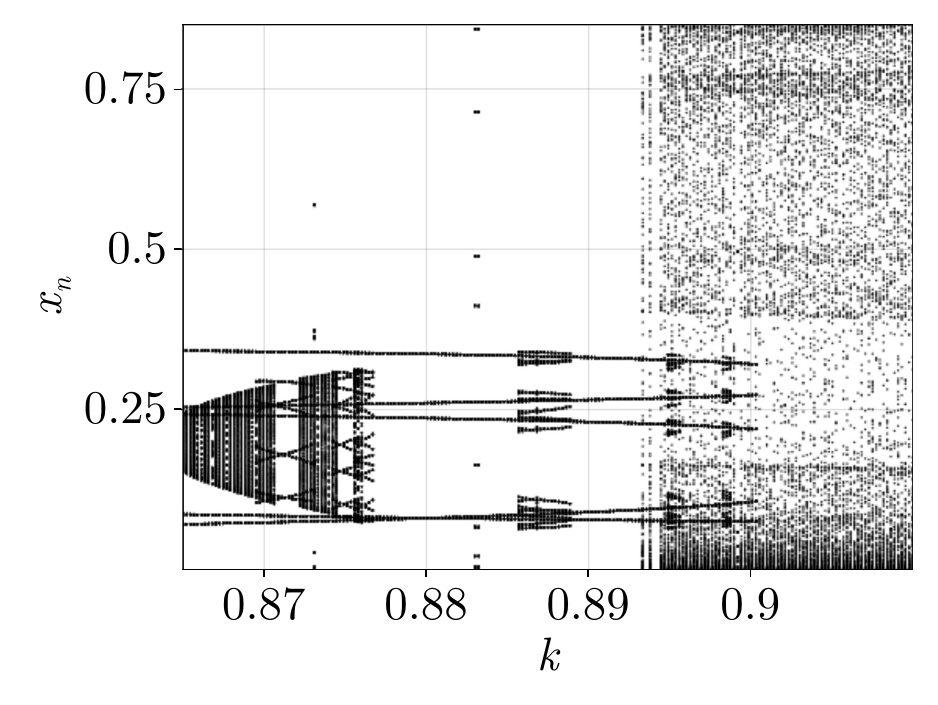}
		}
		\caption{(a), (b) The various emerging attractors as $k=0.865$ is held fixed and $a$ is gradually increased. Each attractor birth corresponds to the crossing of a bifurcation curve of Fig.~\ref{bifs1} or to a crisis. (c) Attractors as $a=19.3$ is fixed and $k$ varies.}
		\label{bifa1}
		%\vspace*{12pt}
	\end{figure}
	
	\begin{table}[!t]
		\centering
		\caption{Basin and boundary basin entropy. According to the fractal criterion proposed in \cite{entropy2,entropy3}, the basin boundary has fractal structure if $S_{bb}>0.439$.}
		\label{tblentropy}
		
		\begin{tabular*}{0.85\linewidth}{@{\extracolsep{\fill}} l c c l c c @{}}
			\toprule
			Figure & $S_b$ & $S_{bb}$ &
			Figure & $S_b$ & $S_{bb}$\\
			\midrule
			
			Fig.~\ref{basins1}(a) & 0.110 & 0.350 &
			Fig.~\ref{basins3}(a) & 0.790 & 0.810\\
			
			Fig.~\ref{basins1}(b) & 0.386 & 0.572 &
			Fig.~\ref{basins3}(b) & 0.589 & 0.611\\
			
			Fig.~\ref{basins1}(c) & 0.198 & 0.540 &
			Fig.~\ref{basins3}(c) & 0.879 & 0.898\\
			
			Fig.~\ref{basins2}(a) & 0.220 & 0.563 &
			Fig.~\ref{basins3}(d) & 0.620 & 0.639\\
			
			Fig.~\ref{basins2}(b) & 0.490 & 0.577 &
			Fig.~\ref{coex_16_25}(b) & 0.731 & 0.731\\
			
			Fig.~\ref{basins2}(c) & 0.790 & 0.826 &
			Fig.~\ref{basins4}(a) & 0.012 & 0.423\\
			
			Fig.~\ref{basins2}(d) & 0.650 & 0.635 &
			Fig.~\ref{basins4}(b) & 0.001 & 0.220\\
			
			Fig.~\ref{basins2}(e) & 0.795 & 0.826 &
			Fig.~\ref{basins4}(c) & 0.006 & 0.556\\
			
			Fig.~\ref{basins2}(f) & 0.588 & 0.620 &
			Fig.~\ref{basins4}(d) & 0.096 & 0.241\\
			
			\bottomrule
		\end{tabular*}
		
	\end{table}
	
	Varying further the parameters, the chaotic attractor originating from the nine-period cycle experiences a basin crisis bifurcation, resulting in its conversion to a chaotic saddle. A part of its saddle set collides with the basin of the chaotic attractor of family I, leading to its expansion, as shown in Fig. \ref{phspace1}(b) and Fig. \ref{basins1}(c) with the associated basin and boundary basin entropies reported in Table \ref{tblentropy}. We keep referring to this expanded family as I.

	Crossing SN$_4$ of Fig. \ref{bifs1} results in the birth of stable four-period cycle (family IV), while almost simultaneously another boundary crisis bifurcation involving the chaotic attractor of family II results in its transformation into a chaotic saddle (c.f. Fig. \ref{bifa1}(a)). A stable period-seven cycle (family V) appears at SN$_5$, coexisting with the family II and family IV and which within a very narrow parameter region, period-doubles to chaos and in turn transforms to a chaotic saddle through a boundary crisis. Similar to family I, family IV undergoes a similar pattern of Feigenbaum cascades, expansion through an interior crisis and eventual destruction through a boundary crisis. The phase space and the basin of attraction before the destruction of the IV family chaotic attractor are shown in Fig. \ref{phspace1}(c) and Fig. \ref{basins2}(a). Although the families I, IV and V have lost their stability, they affect the dynamics of the system through the infinitely many unstable periodic orbits living inside the chaotic saddle. In fact, the chaotic saddles of these families become connected, while the embedded unstable periodic orbits lead to the emergence of high-period stable periodic orbits--whose family we call, abusing notation, I-IV-V--within small regions of the parameter space, which in turn undergo period-doubling bifurcations before they get destroyed via boundary crises. Typical examples are shown in Fig. \ref{bifa1}(a), where, within narrow regions of the parameter space, high-period cycles follow the described pattern. It is worth noting that the basins of the I-IV-V-type attractors are highly intermingled--cf. Fig. \ref{basins2}(b) and Table \ref{tblentropy}. The phase space, when a high-period orbit of the I-IV-V family coexists with the stable equilibrium is shown in Fig. \ref{phspace1}(d). Another manifestation of the chaotic saddle is that, despite the stationary equilibrium being the sole attractor after the boundary crisis, the system is transiently chaotic with orbits spending large amount of time near the chaotic saddle, and thus having positive finite Lyapunov exponent, as illustrated in Fig. \ref{basins2} (c). We note that none of the I,III,IV and V periodic attractors exhibit any codimension-2 bifurcations.

	\begin{figure}[t!]
		\centering
		\subfigure[]{
			\includegraphics[width=0.31\linewidth]{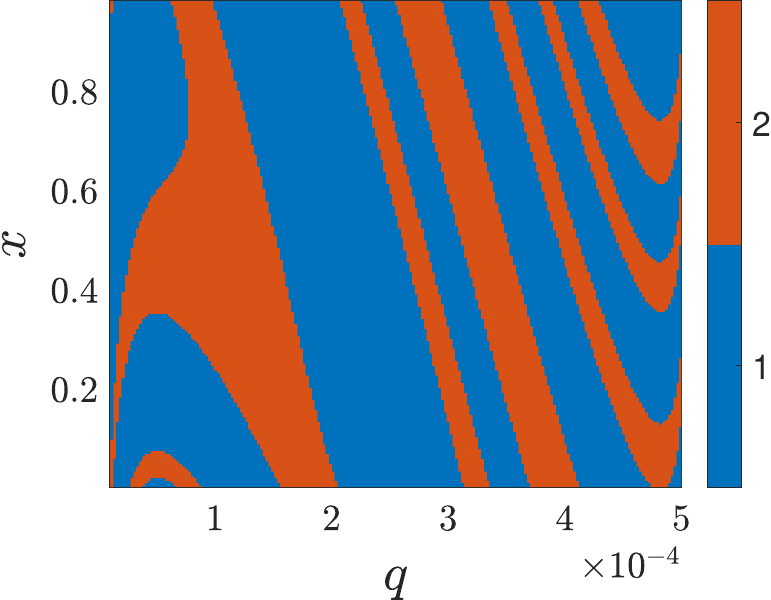}
		}
		\hfill
		\subfigure[]{
			\includegraphics[width=0.31\linewidth]{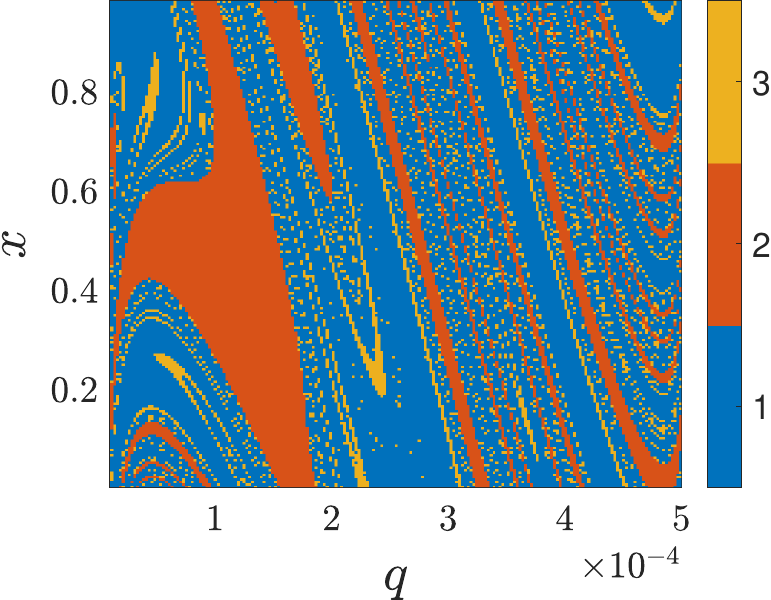}
		}
		\hfill
		\subfigure[]{
			\includegraphics[width=0.31\linewidth]{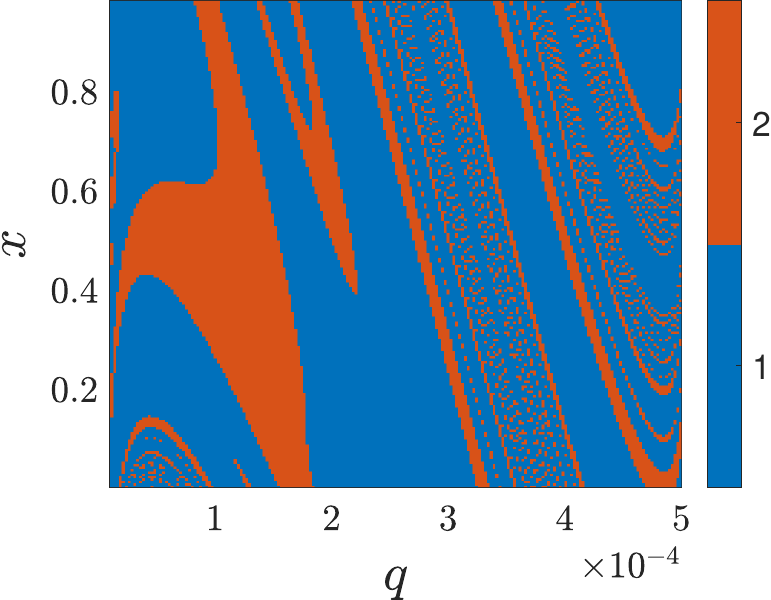}
		}
		
		\caption{Basins of attraction of the coexisting attractors with $k=0.865$ fixed, corresponding to Fig. \ref{bifa1}(a). The numerical labels $1,2,\ldots$ correspond to the Roman-numeral family indices I, II, \ldots, defined in Table~\ref{tbl1}. (a) $a=14.02:$ The period three orbit and the equilibrium coexist. (b) $a=15.5225:$ The chaotic attractor of family I, the equilibrium and the chaotic attractor of family III coexist. (c) $a=15.8:$ The chaotic attractor of family I and the equilibrium coexist. Note the intermingling of the basins in (b) and (c).}
		\label{basins1}
		%\vspace*{12pt}
	\end{figure}

	\begin{figure}[t!]
		\centering
		\subfigure[]{
			\includegraphics[width=0.23\linewidth]{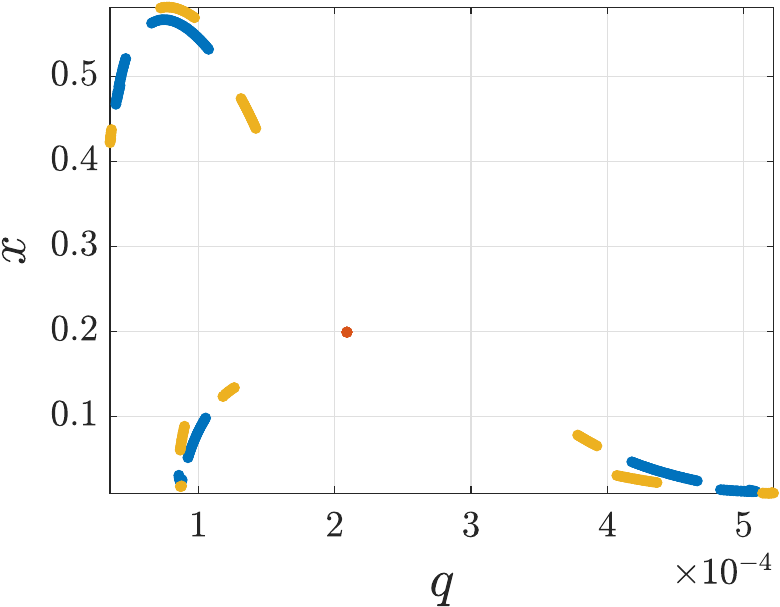}
		}
		\hfill
		\subfigure[]{
			\includegraphics[width=0.23\linewidth]{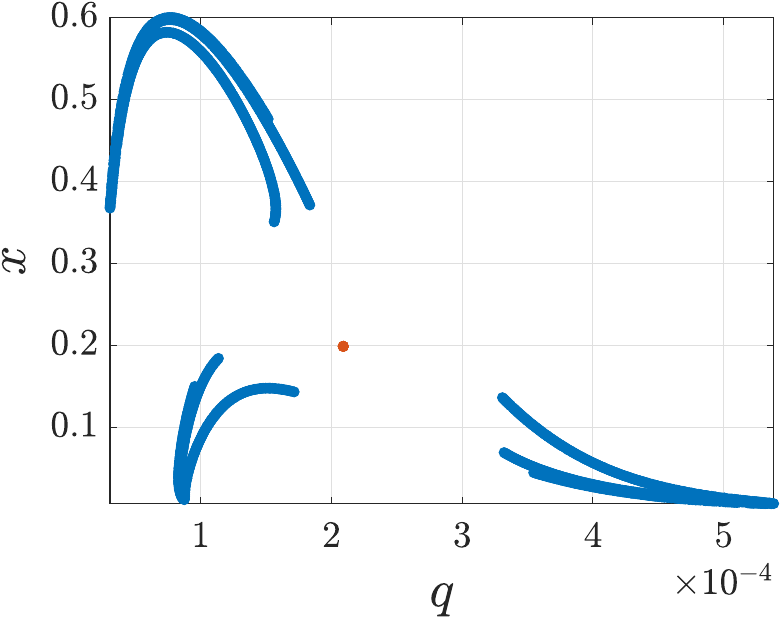}
		}
		\hfill
		\subfigure[]{
			\includegraphics[width=0.23\linewidth]{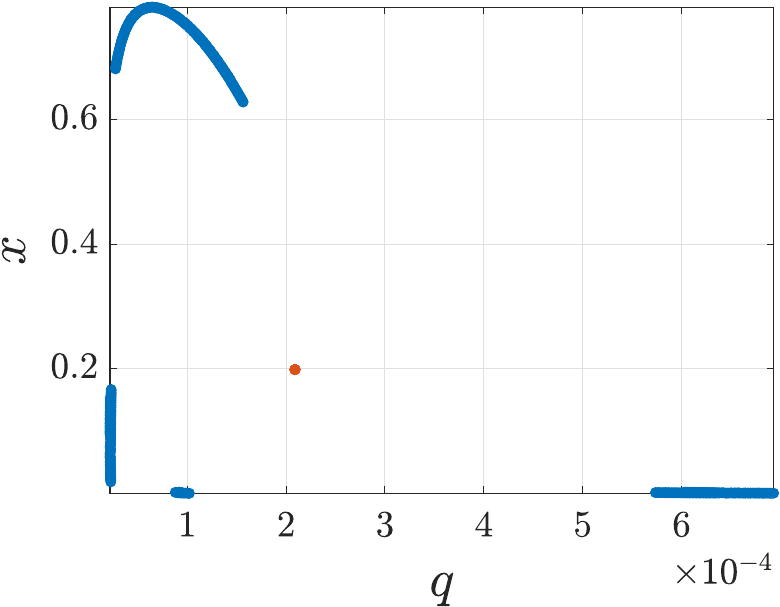}
		}
		\hfill
		\subfigure[]{
			\includegraphics[width=0.23\linewidth]{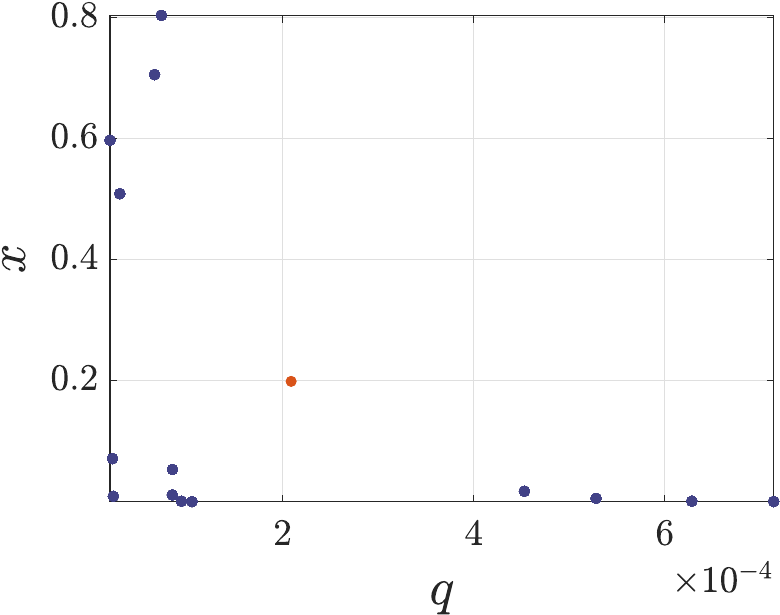}
		}
		
		\caption{Coexisting attractors in the phase space. (a) $a=15.5225:$ The chaotic attractor of family I, the equilibrium and the chaotic attractor of family III coexist. (b) $a=15.8:$ The chaotic attractor of family I and the equilibrium coexist. (c) $a=17.54:$ The merged chaotic attractor I-IV-V and the equilibrium coexist. (d) $a=18.245:$ A high-period orbit of the I-IV-V family coexists with the stable equilibrium. In all cases $k=0.865$ fixed.}
		\label{phspace1}
		%\vspace*{12pt}
	\end{figure}
	
	\begin{figure}[h!]
		\centering
		\subfigure[]{
			\includegraphics[width=0.3\linewidth]{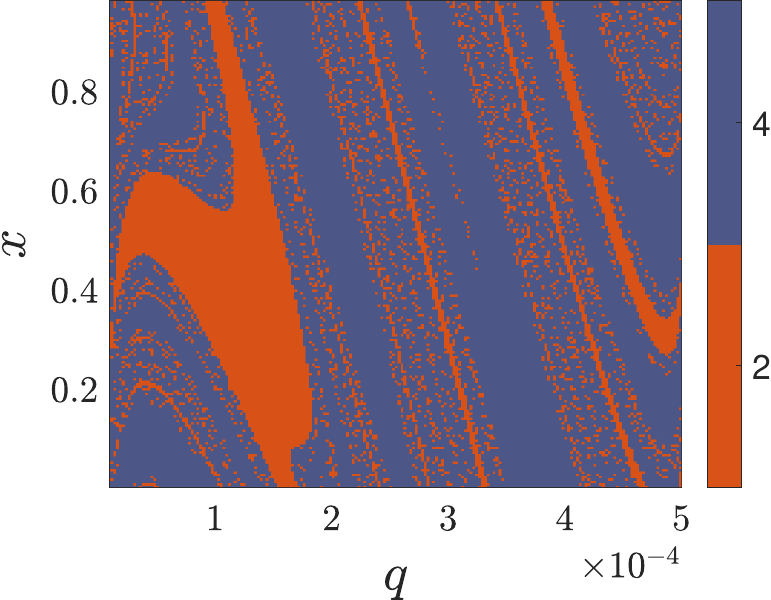}
		}
		\hfill
		\subfigure[]{
			\includegraphics[width=0.33\linewidth]{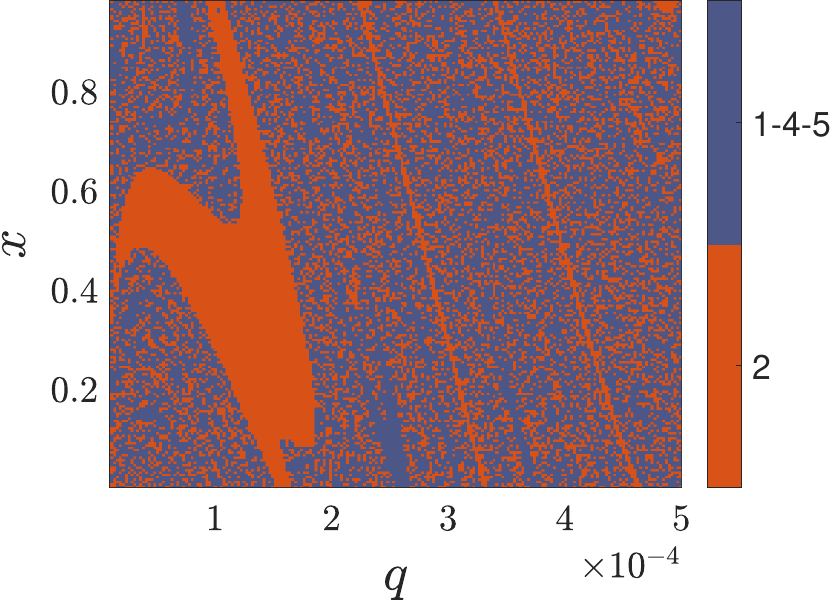}
		}
		\hfill
		\subfigure[]{
			\includegraphics[width=0.3\linewidth]{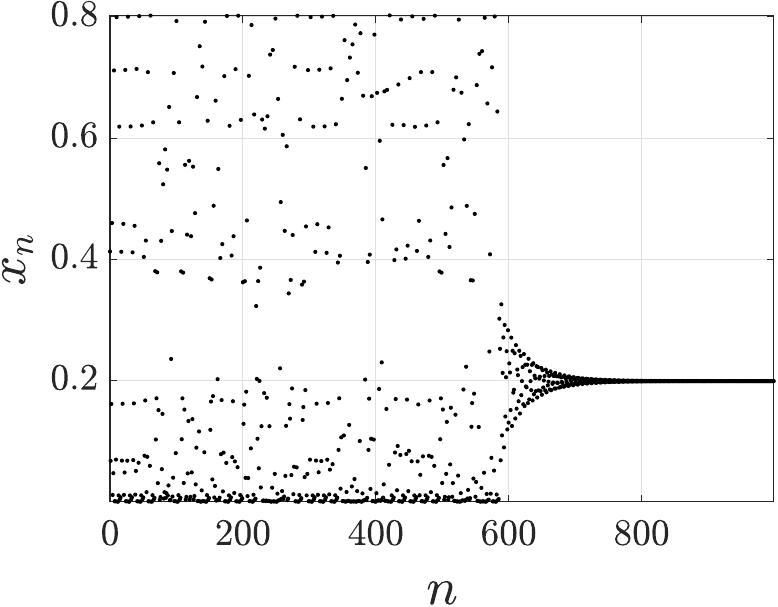}
		}
		%   \hfill
		%  \subfigure[]{
			%    \includegraphics[width=0.3\linewidth]{figures/%basins_eq_5_8_a_19.0725_k_0.865-eps-converted-to.pdf}
			% }
		%  \hfill
		%  \subfigure[]{
			%    \includegraphics[width=0.3\linewidth]{figures/basins_tr_5_15_a_19.2213_k_0.865-eps-converted-to.pdf}
			%  }
		%   \hfill
		%  \subfigure[]{
			%    \includegraphics[width=0.3\linewidth]{figures/basins_tor_5_a_19.2282_k_0.865-eps-converted-to.pdf}
			%  }
		\caption{ (a) Basins of attraction with $a=17.54$ and $k=0.865$: the chaotic attractor of family IV and the equilibrium coexist. (b) Basins of attraction with $a=18.245$ and $k=0.865$: a high-period orbit of the I-IV-V family and the equilibrium coexist. In both cases the numerical labels correspond to the Roman-numeral family indices, defined in Table~\ref{tbl1}. Note that the basins are intermingled. (c) Illustration of the transient chaos of an orbit converging to the equilibrium due to the I-IV-V chaotic saddle.}
		\label{basins2}
		%\vspace*{12pt}
	\end{figure}
	
	%\begin{figure}[t!]
	%    \centering
	%    \includegraphics[width=0.4\linewidth]{figures/ch_tr_18.0927_a_k_0.865-eps-converted-to.pdf}
	%    \caption{Transient chaos of an orbit converging to the equilibrium due to the I-IV-V chaotic saddle.}
	%    \label{tr_time}
	%\end{figure}
	\begin{figure}[t!]
		\centering
		\subfigure[]{
			\includegraphics[width=0.23\linewidth]{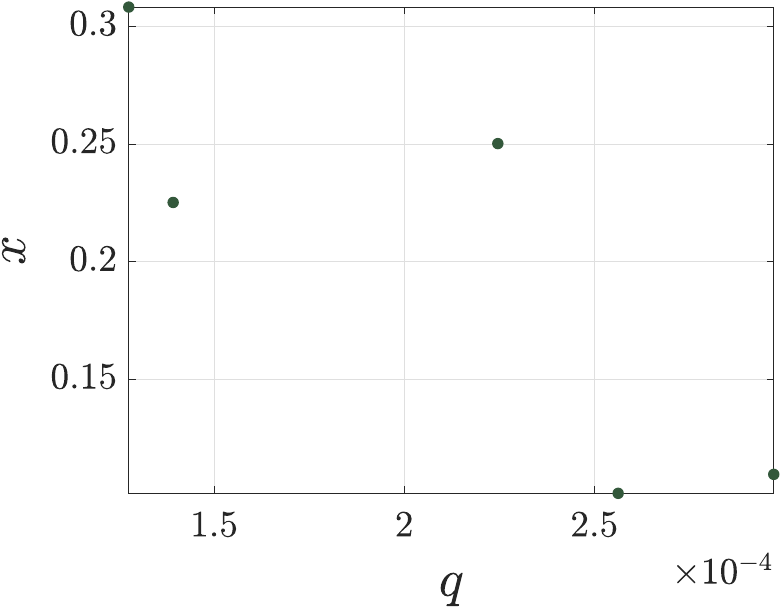}
		}
		\hfill
		\subfigure[]{
			\includegraphics[width=0.23\linewidth]{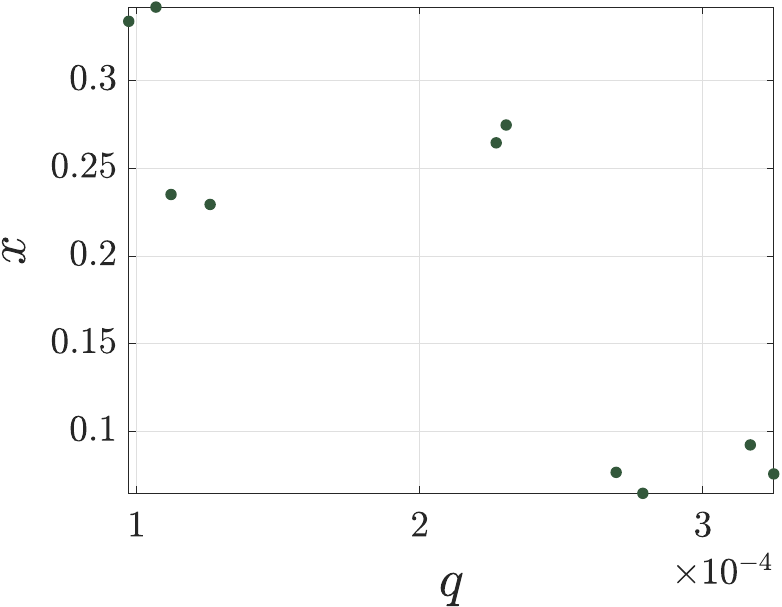}
		}
		\hfill
		\subfigure[]{
			\includegraphics[width=0.23\linewidth]{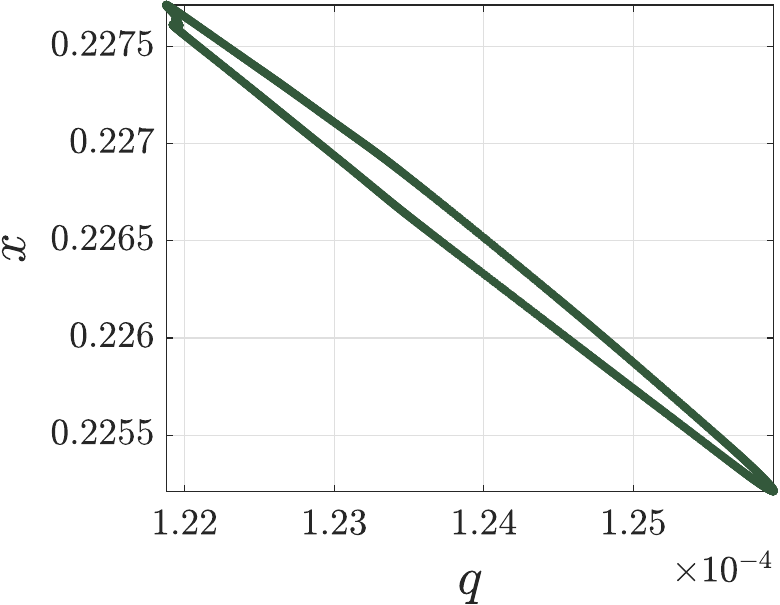}
		}
		\hfill 
		\subfigure[]{
			\includegraphics[width=0.23\linewidth]{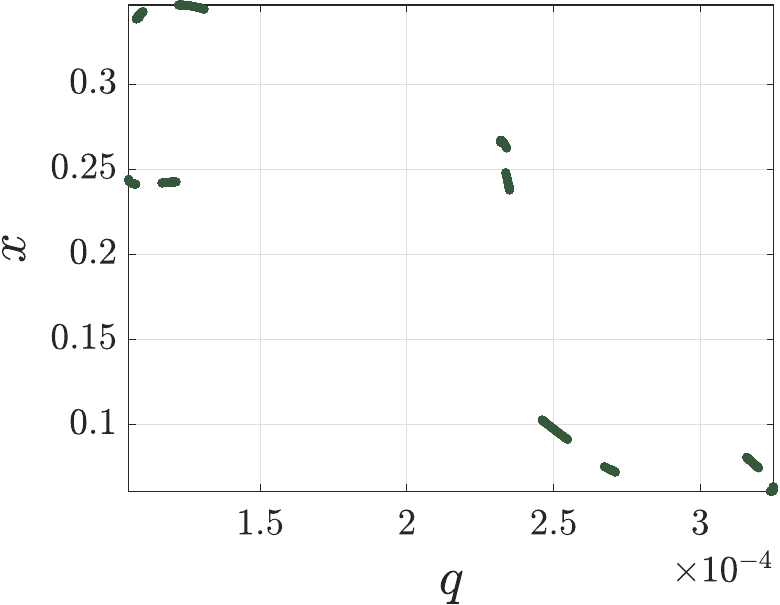}
		}
		\caption{The various attractors of the II--$5$ family in the phase space. (a) The attractor is a period-five orbit. (b) The attractor is a period-ten orbit, born at the PD${2-5}$ curve of Fig.~\ref{bifs1}. (c) The attractor is an invariant curve for the tenth iterate of the map, born at the NS${2-5}$ curve of Fig.~\ref{bifs1}. (d) The attractor is chaotic after the period-doubling cascade.}
		\label{5phspace}
		%\vspace*{12pt}
	\end{figure}
	
	\begin{figure}[t!]
		\centering
		\subfigure[]{
			\includegraphics[width=0.31\linewidth]{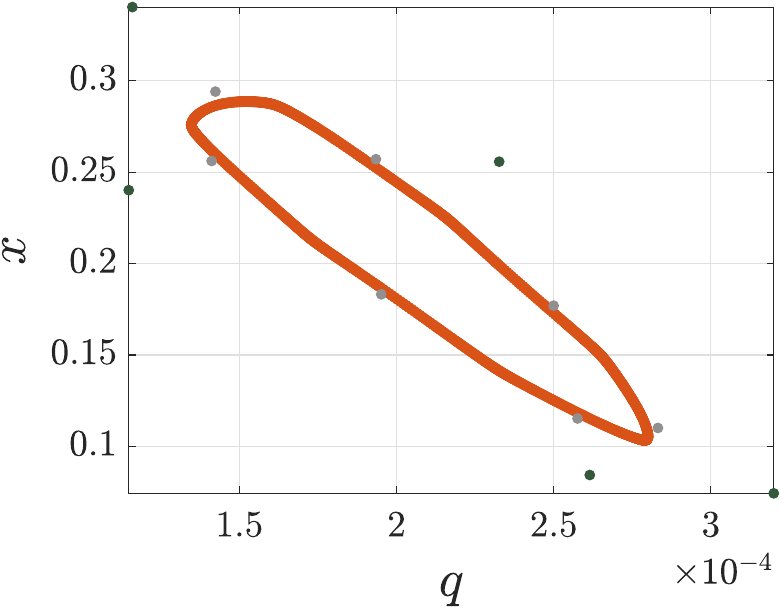}
		}
		\hfill
		\subfigure[]{
			\includegraphics[width=0.31\linewidth]{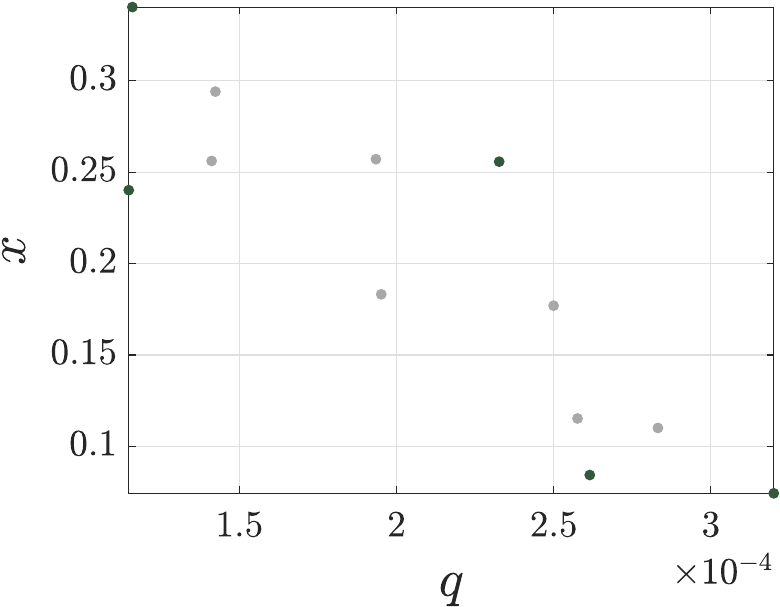}
		}
		\hfill
		\subfigure[]{
			\includegraphics[width=0.31\linewidth]{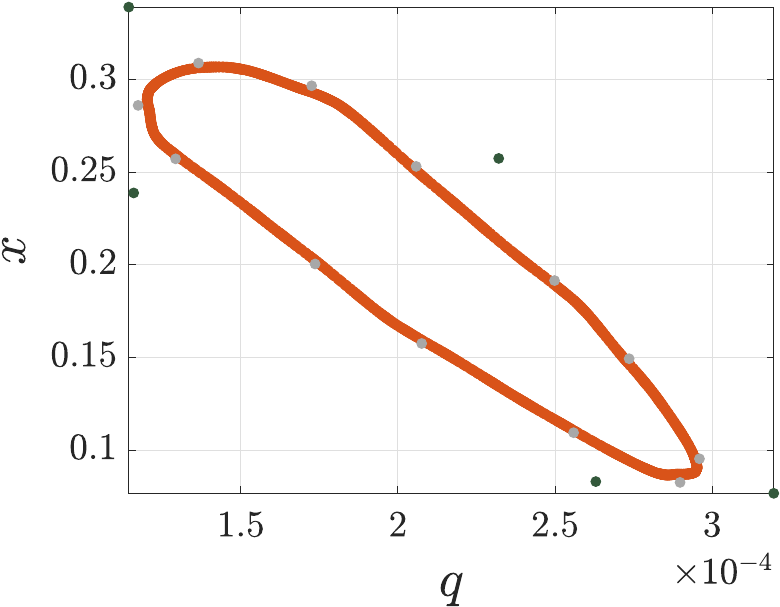}
		}
		\hfill
		\subfigure[]{
			\includegraphics[width=0.31\linewidth]{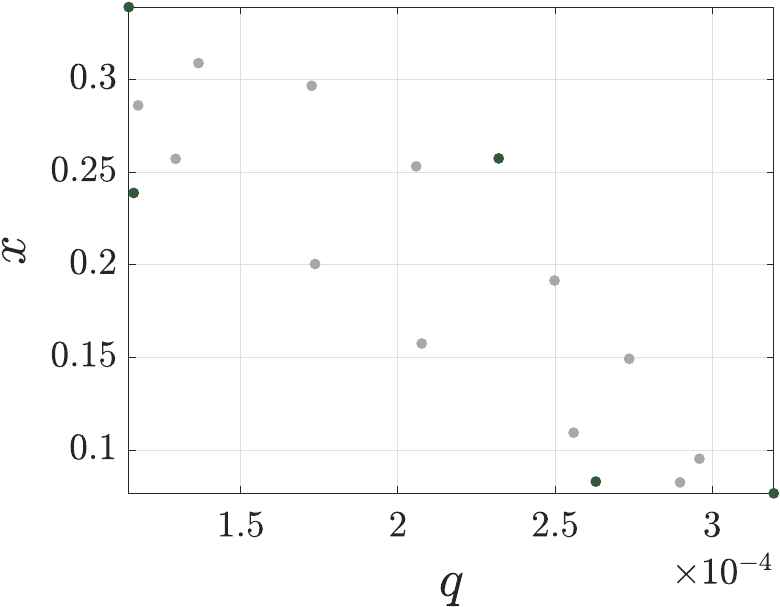}
		}
		\hfill
		\subfigure[]{
			\includegraphics[width=0.31\linewidth]{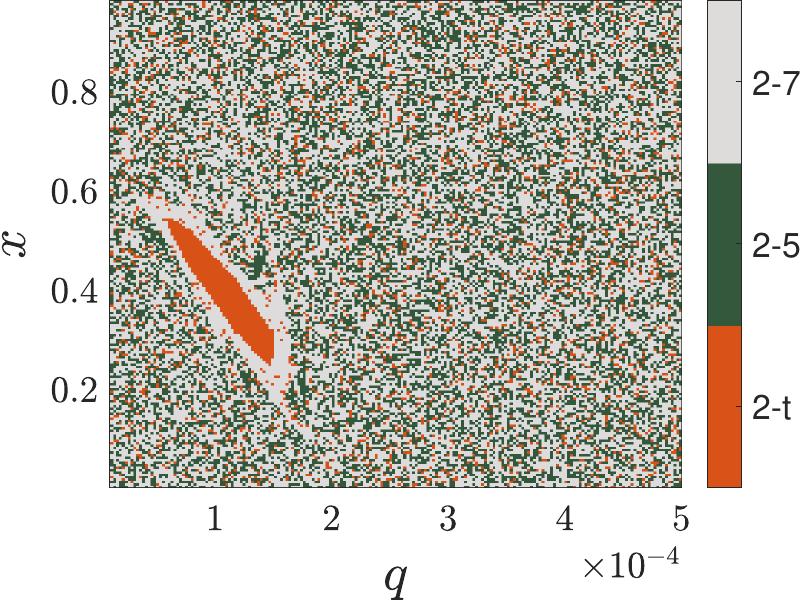}
		}
		\hfill
		\subfigure[]{
			\includegraphics[width=0.31\linewidth]{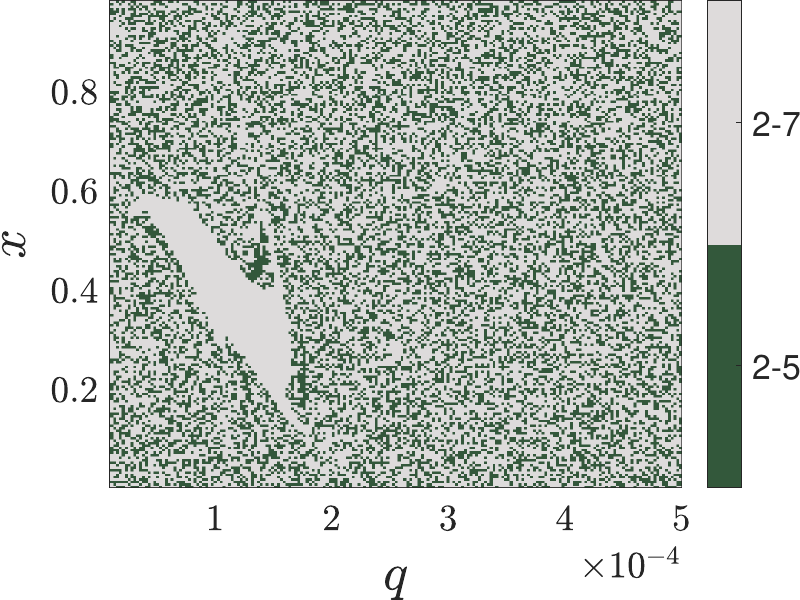}
		}
		\subfigure[]{
			\includegraphics[width=0.32\linewidth]{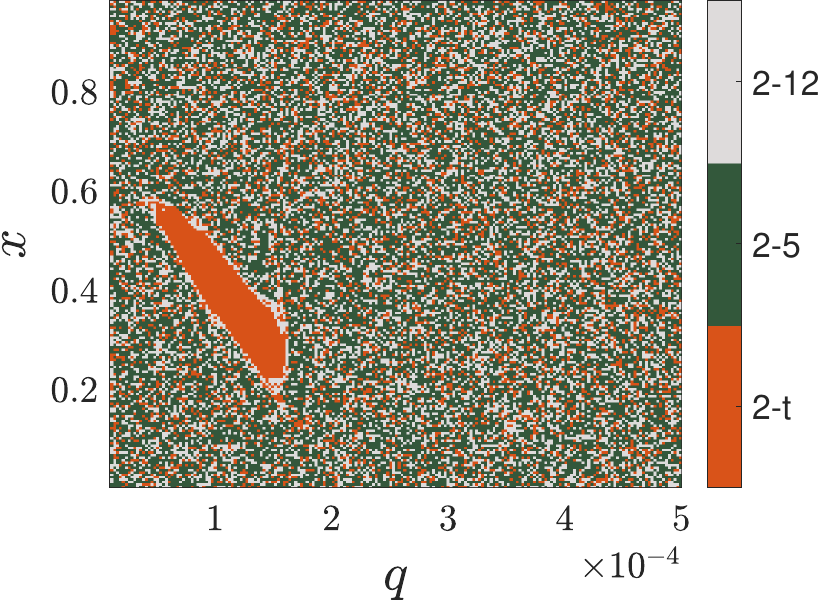}
		}
		%\hfill
		\subfigure[]{
			\includegraphics[width=0.32\linewidth]{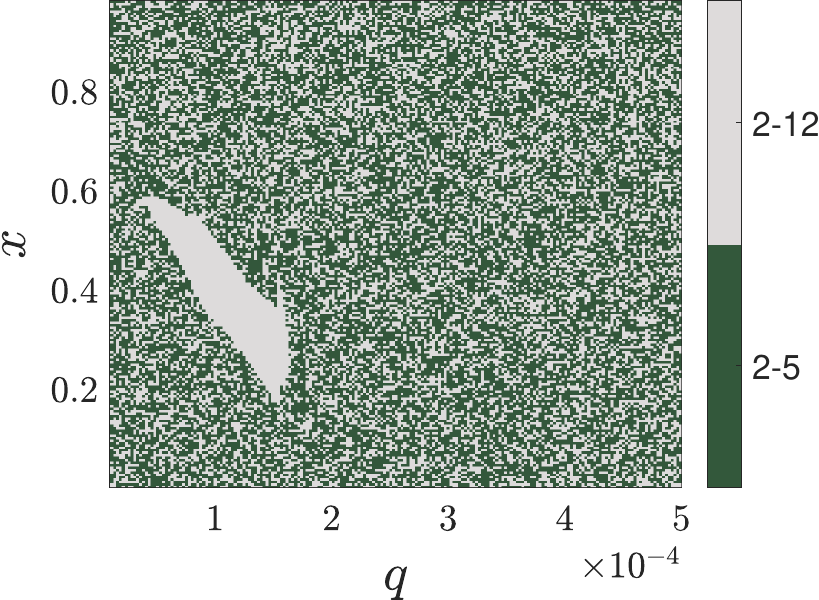}
		}
		\caption{Phase space (a)-(d) and associated basins of attraction (e)-(h) with $a=19.3$ fixed, corresponding to Fig.~\ref{bifa1}(c). The numerical labels $1,2,\ldots$ correspond to the Roman-numeral family indices, defined in Table~\ref{tbl1}. (a) and (e): $k=0.8706,$ the period-five attractor of the II--$5$ family, the period-seven attractor of the II--$7$ family, and the stable invariant curve coexist. (b) and (f): $k=0.8708,$ the period-five attractor of the II--$5$ family and the period-seven attractor of the II--$7$ family coexist; the invariant curve has been destroyed. (c) and (g): $k=0.8744,$ the period-five attractor of the II--$5$ family, the period-twelve attractor of the II--$12$ family, and the reemerged invariant curve coexist. (d) and (h): $k=0.8747,$ the period-five attractor of the II--$5$ family and the period-twelve attractor of the II--$12$ family coexist; the invariant curve has again been destroyed. In all cases, note again the intermingling of the basins.}
		\label{basins3}
		%\vspace*{12pt}
	\end{figure}
	
	At NS$_2^1$ (c.f. Fig. \ref{bifs1}) the stable equilibrium loses stability and a stable invariant curve emerges, whose family in Table \ref{tbl1} is referred to as II--t (essentially, it belongs to II, but we wish to emphasize its emergence). This invariant curve can be seen in Fig. \ref{bifa1}(b), (c), Fig. \ref{basins2.5}, and Fig. \ref{basins3}; more detailed discussion on these figures will follow next. Saddle-node bifurcations originating from phase-locking on the primary NS$_2^1$ curve are crossed (cf. the saddle-node SN$_{2-i}$ curves in Fig. \ref{bifs1}), resulting in the appearance of cycles of various periods. We call the families of these orbits II--$i$ in Table \ref{tbl1}, with $i$ representing their period, to stress that they are associated with the invariant curve, which in turn is connected to the equilibrium. In Fig. \ref{bifs1} the eight, five, seven and nine period tongues are illustrated. Near NS$_2^1$, the Arnold tongues do not overlap; however, as the parameters move farther away from it, global bifurcation phenomena may lead to the coexistence of attractors originating from different tongues. These dynamical phenomena are qualitatively similar across all II--$i$ type attractors. For this reason, we first describe the generic scenario and then provide representative examples of specific II--$i$ orbits in what follows. 
	
	First note that the Arnold tongues are wider as the period of the resulting periodic orbit is lower. A $1:1$ resonance with positive normal-form coefficient--denoted by R$_1$ in Fig. \ref{bifs1}--, i.e. a co-dimension-2 point where both multipliers are equal to 1, is found on the upper branch of the tongue. Co-dimension-2 resonances act as organizing centers for the dynamics of the system. Crossing the saddle-node curves defining the upper branch from the left of R$_1$, a stable cycle is born, which, as we move around R$_1$, is destroyed in a NS bifurcation emanating from the R$_1$ point and a saddle invariant curve is born. We note that homoclinic structure are present near this point implying the birth and destruction of infinitely many long-period saddle cycles \cite{kuznetsov1}. Inside the tongue, the corresponding periodic orbit--because it is possible for other orbits to coexist as the tongues overlap--undergoes a period-doubling route to chaos, and ultimately the family is destroyed through a boundary crisis. In Figs. \ref{bifs1} and \ref{bifs2}, the first period-doubling curves in the Feigenbaum cascade of the five, seven, eight and nine Arnold tongues are shown. It is worth noting that a $1:2$ resonance with negative normal form coefficient--denoted by R$_2$ in Fig. \ref{bifs1}--, i.e. a co-dimension-2 point where both multipliers are equal to -1, is found on the upper branch of the PD curve. Similar to the R$_1$ resonance, this point acts as an organizing center for the dynamics of the system. The NS curve, crossing which results in the destruction of the stable cycle and the birth of a saddle invariant curve, connects the two co-dimension-2 points of the same family. Crossing the PD curve above the R$_2$ point results in the creation of an unstable period-doubled orbit, which gains stability as another NS curve emanating from the R$_2$ point is crossed. Within the region defined by the PD and (the second) NS curves, a stable invariant curve for the iterate of the map corresponding to the double period of the tongue is present. Similar to the $1:1$ resonance, near the R$_2$ point infinitely many long-period unstable cycles appear and disappear due to the involved homoclinic structures \cite{kuznetsov1}. The region around the NS curve of the fixed point is filled with overlapping resonance structures of the above-described type, thus implying multistability among the corresponding periodic cycles.

	%\begin{figure}[h!]
	%  \centering
	%  \subfigure[]{
		%    \includegraphics[width=0.3\linewidth]{figures/fig5d_phsp-eps-converted-to.pdf}
		%  }
	%  \hfill
	% \subfigure[]{
		%   \includegraphics[width=0.31\linewidth]{figures/fig5e_phsp-eps-converted-to.pdf}
		% }
	% \hfill
	% \subfigure[]{
		%   \includegraphics[width=0.3\linewidth]{figures/fig6a_phsp-eps-converted-to.pdf}
		%  }
	%   \hfill
	%  \subfigure[]{
		%    \includegraphics[width=0.3\linewidth]{figures/fig6b_phsp-eps-converted-to.pdf}
		% }
	%  \hfill
	%  \subfigure[]{
		%    \includegraphics[width=0.3\linewidth]{figures/fig6c_phsp-eps-converted-to.pdf}
		%  }
	%   \hfill
	%  \subfigure[]{
		%    \includegraphics[width=0.3\linewidth]{figures/fig6d_phsp-eps-converted-to.pdf}
		%  }
	%  \caption{asas}
	%\label{phsp_new}
	%\vspace*{12pt}
	%\end{figure}

	\begin{figure}[t!]
		\centering
		\subfigure[]{
			\includegraphics[width=0.31\linewidth]{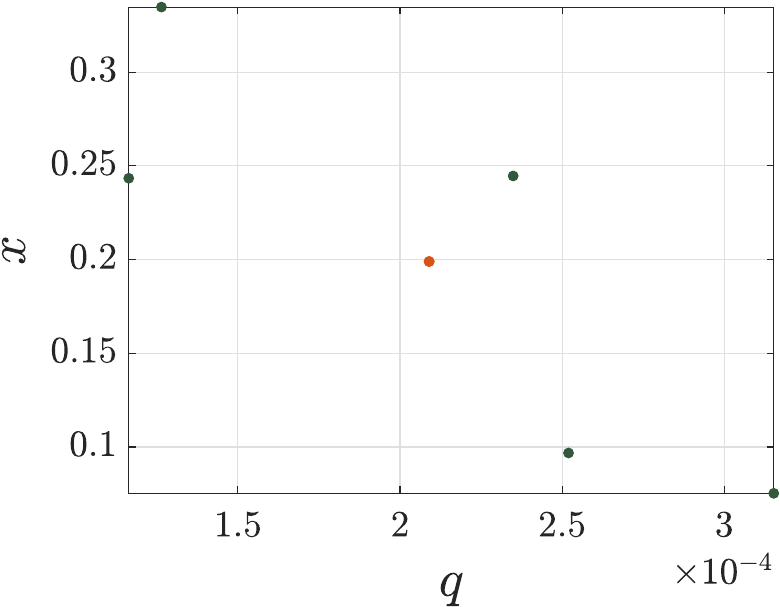}
		}
		\hfill
		\subfigure[]{
			\includegraphics[width=0.31\linewidth]{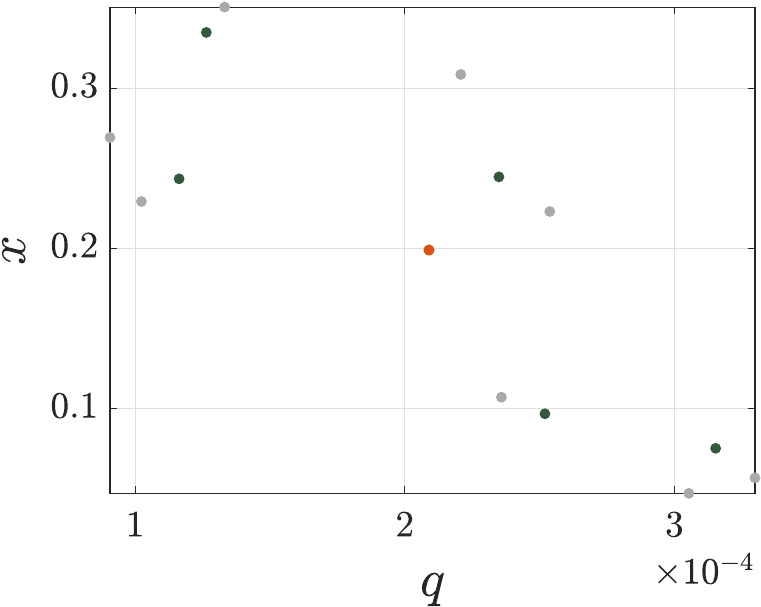}
		}
		\hfill
		\subfigure[]{
			\includegraphics[width=0.31\linewidth]{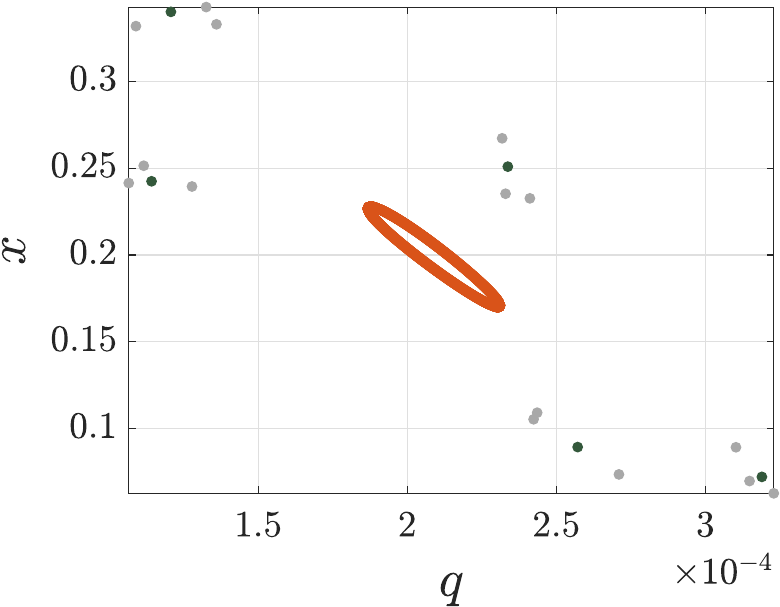}
		}
		\hfill
		\subfigure[]{
			\includegraphics[width=0.31\linewidth]{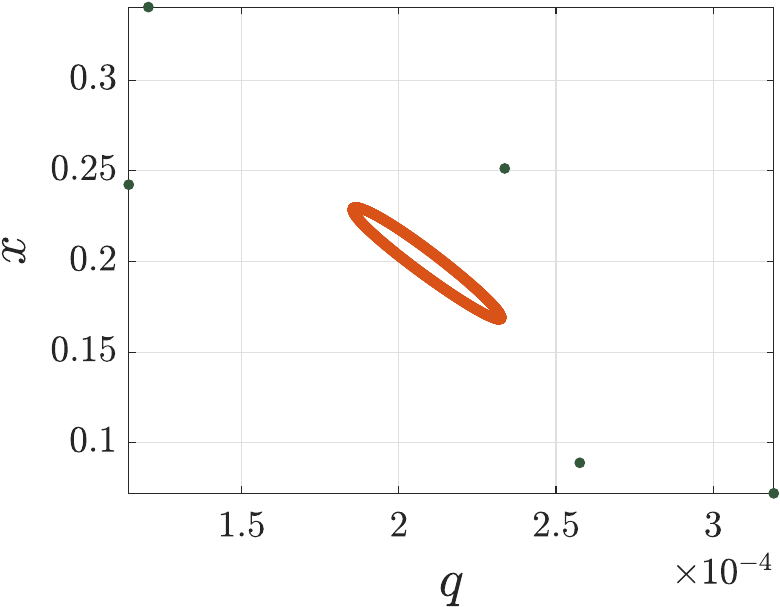}
		}
		\hfill
		\subfigure[]{
			\includegraphics[width=0.31\linewidth]{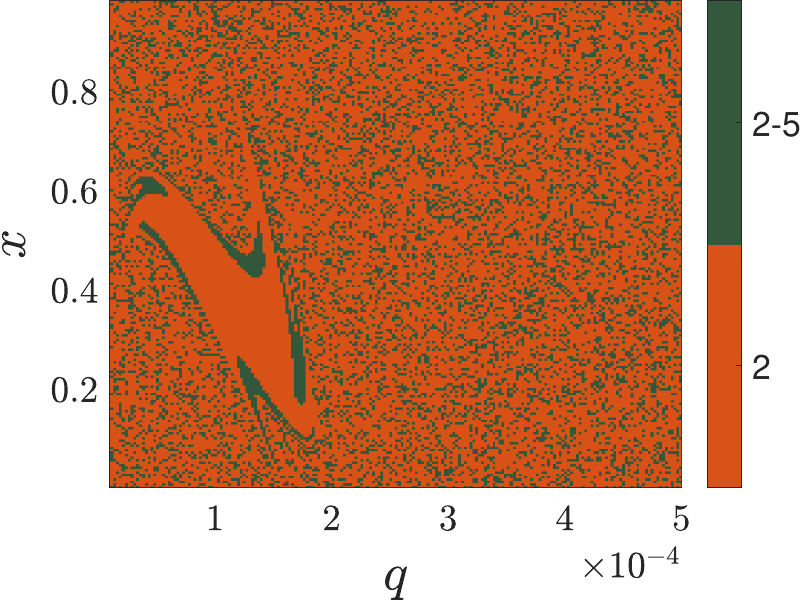}
		}
		\hfill
		\subfigure[]{
			\includegraphics[width=0.31\linewidth]{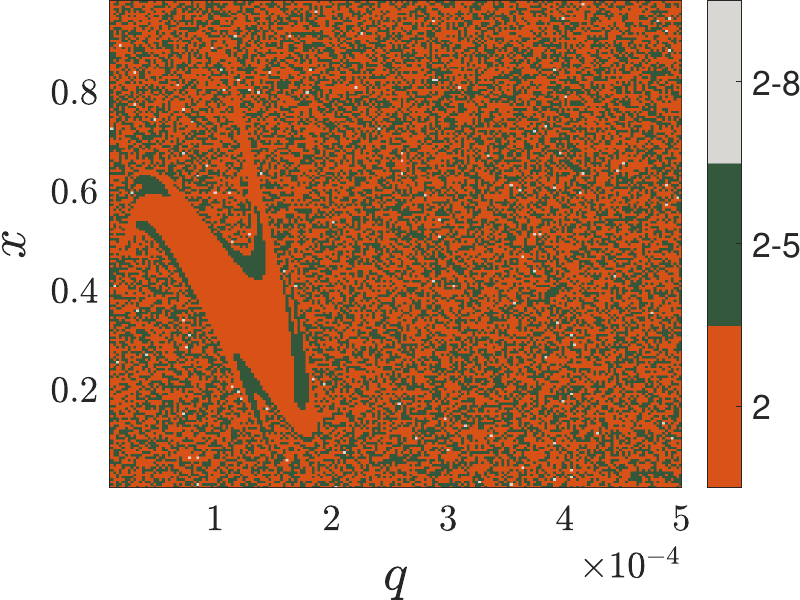}
		}
		\hfill
		\subfigure[]{
			\includegraphics[width=0.32\linewidth]{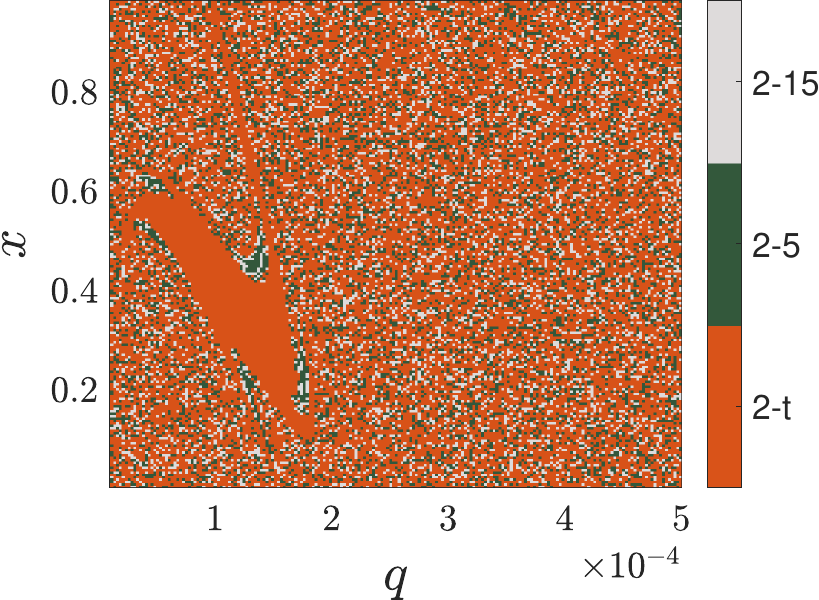}
		}
		%\hfill
		\subfigure[]{
			\includegraphics[width=0.32\linewidth]{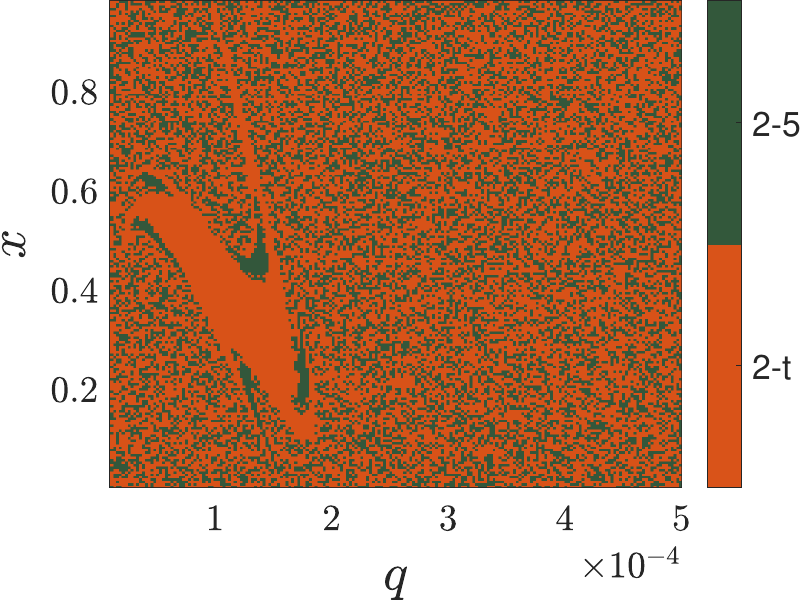}
		}
		\caption{Phase space (a)-(d) and associated basins of attraction (e)-(h) with $k=0.865$ fixed, corresponding to Fig. \ref{bifa1}(b). The numerical labels $1,2,\ldots$ correspond to the Roman-numeral family indices, defined in Table~\ref{tbl1}. (a) and (e): $a=19,$ the period-five attractor of the II--$5$ family and the equilibrium coexist. (b) and (f): $a=19.0725,$ the period-five attractor of the II--$5$ family, the period-eight attractor of the II--$8$ family, and the equilibrium coexist. (c) and (g): $a=19.221$ the period-five attractor of the II--$5$ family, the period-fifteen attractor of the II--$15$ family, and the equilibrium coexist. (d) and (h): $a=19.2282,$ the period-five attractor of the II--$5$ family and the equilibrium coexist. In all cases, note the intermingling of the basins.
		}
		\label{basins2.5}
		%\vspace*{12pt}
	\end{figure}
	
	To illustrate the bifurcations involving each of the II-$i$ cycles described above, we consider, indicatively, the five-period cycle, illustrated in Fig. \ref{5phspace}(a), created upon crossing SN$_{2-5}$, which is also the first resonance attractor in Fig. \ref{bifa1}. Referring to Fig. \ref{bifs1}, this cycle exists in the region defined by the two branches of the five-Arnold tongue, the curves PD$_{2-5}$ and NS$_{2-5}$ connecting the two codimension-2 resonance points. When the parameters cross SN$_{2-5}$ comprising the tongue, the cycle collides with a saddle orbit and disappears, while when the parameters cross NS$_{2-5}$, a saddle invariant curve for the fifth iterate of the map is created. The crossing of PD$_{2-5}$ results in a cycle of ten period shown in Fig. \ref{5phspace}(b). The transverse homoclinic tangle implies the existence of horseshoe-like dynamics near the R$_1$ point, but the chaotic invariant set is non-attracting due to the sign of the normal-form coefficient. From the R$_2$ resonance point lying on PD$_{2-5}$, apart from NS$_{2-5}$ connecting it with the R$_1$ point, another NS curve emanates (not shown in Fig. \ref{bifs1}), which upon crossing results in the birth of stable invariant curve for the tenth iterate of the map, illustrated in Fig. \ref{5phspace}(c). The Feigenbaum cascade of the period-five cycle leads to the formation of the chaotic attractor, shown in Fig. \ref{5phspace}(d). We close noting that the dynamical phenomena described above are shared by all the attractors of II--$i$ type, as also indicated by inspecting the bifurcation curves in Fig. \ref{bifs1}.

	The Neimark-Sacker curve NS$_2^1$ of the fixed point has itself a codimension-2 R$_2$ resonance point, which, as previously discussed, also acts as an organizing center. Beyond this point, NS$_2^1$ essentially no longer corresponds to a Neimark-Sacker bifurcation, but rather to a neutral saddle. As shown in Fig. \ref{bifs1}, a PD curve--denoted by PD$_{2}$--crosses NS$_2^1$ at the R$_2$ point; crossing it from below R$_2$ results in the emergence of a stable period-two orbit. However, crossing it from above results in the creation of a saddle period-two orbit, which gains stability after crossing another NS curve (NS$_2^2$) emanating from the R$_2$ point and connecting it with an R$_1$ point on the PD curve of the period-two cycle (c.f. Fig. \ref{bifs2}). At this curve, a period-doubled cycle that was born upon crossing PD$_{2}$ below R$_2$ is correspondingly destroyed. The NS$_2^2$ curve in question will also be discussed in the next subsection, in the examination of the second region of the parameter space, since for the parameter value $k=0.865$ considered here the stable period-two orbit is created in the second parameter region.
	
	Returning to Fig. \ref{bifa1} (b), the emerging attractors can be grouped into the following types: II (equilibrium), I-IV-V and II--$i$. The first resonance attractor is the five-period cycle discussed above. The phase space, in the case where only this cycle and the equilibrium coexist as attractors, is shown in Fig. \ref{basins2.5}(a) and the associated basins of attraction in Fig. \ref{basins2.5} (e). As is visually apparent, the entropies in Table \ref{tblentropy} have increased, confirming the more complex nature of the basins of attraction. Its corresponding tongue overlaps with, and in fact contains, several of the remaining tongues, suggesting that multistability, along with other resonance periodic orbits, will be observed. Several examples are shown in Fig. \ref{bifa1}(b): as $a$ varies, the parameters first enter the region of the eight-Arnold tongue, shown in Fig. \ref{bifs1}, giving rise to a period-eight cycle which, following the generic pattern discussed previously, ultimately leads to the formation of a chaotic saddle. The phase space in which the II--$8$ family period-eight cycle coexists with the II--$5$ family period-five cycle and the fixed point is shown in Fig. \ref{basins2.5}(b). It is worth noting that the basin of attraction of the eight-cycle, as shown in Fig. \ref{basins2.5}(f) is essentially intermingled within the basins of the other attractors. Similar considerations apply to the other resonance higher-order periodic orbits shown in Fig. \ref{bifa1}. When the parameters ultimately cross NS$_{2}^1$, the equilibrium loses stability and a stable invariant curve emerges, coexisting with the II--$i$ attractors. Note that because the parameter values are far from the points from which the Arnold tongues emanate, the corresponding periodic orbits are no longer necessarily close to the invariant curve in a geometric sense, as shown for example in Fig. \ref{basins2.5} (g) and (h). The basins of attraction of the invariant curve coexisting with the II--5 and II--15 attractors, and with the II--5 attractor after the destruction of the II--15 one, are shown in Fig. \ref{basins2}(e) and (f). It is worth noting that the basin and boundary basin entropies attain high values in the first case and decrease in the latter case, while still remaining above the fractality threshold--cf. Table \ref{tblentropy}.

	Concluding the study of the parameter region of Fig.~\ref{bifs1}, we note that for large values of $k$ no periodic attractors exist, as all have been destroyed by the upper branch of the corresponding Arnold tongue. However, the chaotic saddle, which as marked in Fig.~\ref{bifa2}(c) causes transient chaos, becomes attracting when the basin boundary is crossed in parameter space, namely through a reverse boundary crisis. The chaotic attractor remains the sole attractor of the system for large $k$. Fig.~\ref{bifa1}(c) shows the bifurcations involving the various periodic attractors, the invariant curve, as well as the emergence of the chaotic attractor, as $a=19.3$ is fixed and $k$ increases. We note that the invariant curve undergoes a sequence of global bifurcations associated with its collision with the stable manifold of a resonant saddle cycle, its reappearance through the reverse process, and its final destruction. Fig. \ref{basins3} shows the phase space and the basins of attraction of the coexisting attractors as the bifurcations of the invariant curve unfold. It is worth noting that the II--$7$ and II--$12$ periodic attractors appear very close to the invariant curve as a result of their proximity to the corresponding Arnold tongues; cf., in contrast, the II--$5$ cycle. Moreover, the entropies in Table \ref{tblentropy} attain high values, especially in the case of three coexisting attractors shown in Fig. \ref{basins3}(c), confirming the visual complexity of the basins of attraction shown in Fig. \ref{basins3}. The same situation will also arise in the following discussion, when the second region of the bifurcation diagram is examined.

	\subsection{Dynamics within the second region}\label{sec3.2}
	\begin{figure}
		\centering
		\includegraphics[width=4.5in]{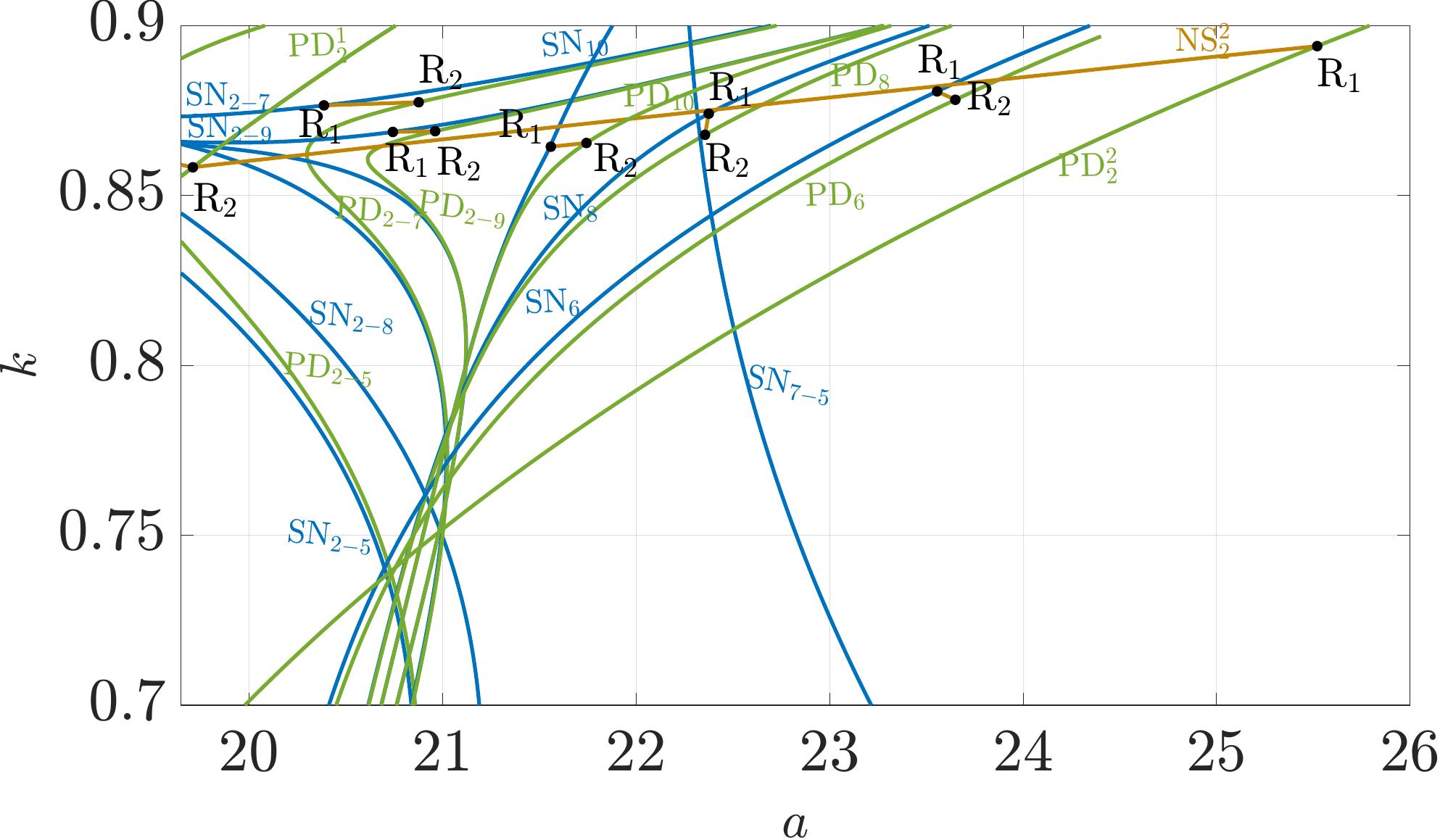}
		\caption{Bifurcation curves in the second region. Saddle-node bifurcations are denoted by SN, period-doubling bifurcations by PD, and Neimark-Sacker bifurcations by NS. Bifurcation curves of the same type share the same color. The index of each curve refers to the attractor family of Table~\ref{tbl1}. For clarity, the SN$_{6-j}$ curves are denoted by SN$_j,$ while the labels of the NS curves connecting R$_1$ and R$_2$ points are, except NS$_2^2$, have been omitted. The superscript in the PD and NS curves refers to their order in the respective family. The labels R$_1$ correspond to $1:1$ strong resonances, while the labels R$_2$ to $1:2$ strong resonances.}
		\label{bifs2}
	\end{figure}
	
	We continue with the study of the second region of the parameter space, shown in Fig.~\ref{bifs2}. To illustrate the emerging attractors as the various bifurcations unfold, we fix $k=0.865$ and vary the learning rate, i.e., $a$, focusing first on Fig. \ref{bifa2}(a). As in the last region of Fig.~\ref{bifa1}(b), the invariant curve and the five-period orbit coexist for smaller values of $a$, with the latter undergoing a period-doubling route to chaos as each of the PD curves is crossed, the first of which is shown in Fig.~\ref{bifs2}. Of course, within very narrow regions of the parameter space, we observe periodic attractors of the I-IV-V type, originating from the presence of the chaotic saddle. As the parameters enter the seven- and nine-Arnold tongues, shown in Fig.~\ref{bifs2}, the corresponding periodic cycles appear. Note at this point that the tongues have the qualitative structure discussed previously: the upper branch exhibits a codimension-2 R$_1$ resonance from which an NS curve emanates (not labeled in Fig. \ref{bifs2} for the sake of clarity), connecting it with a codimension-2 R$_2$ resonance point located on the PD curve of the corresponding cycle of the tongue. The dynamics induced by this structure is qualitatively similar to that discussed for the five-period cycle.

	\begin{figure}[t!]
		\centering
		\subfigure[]{
			\includegraphics[width=0.45\linewidth]{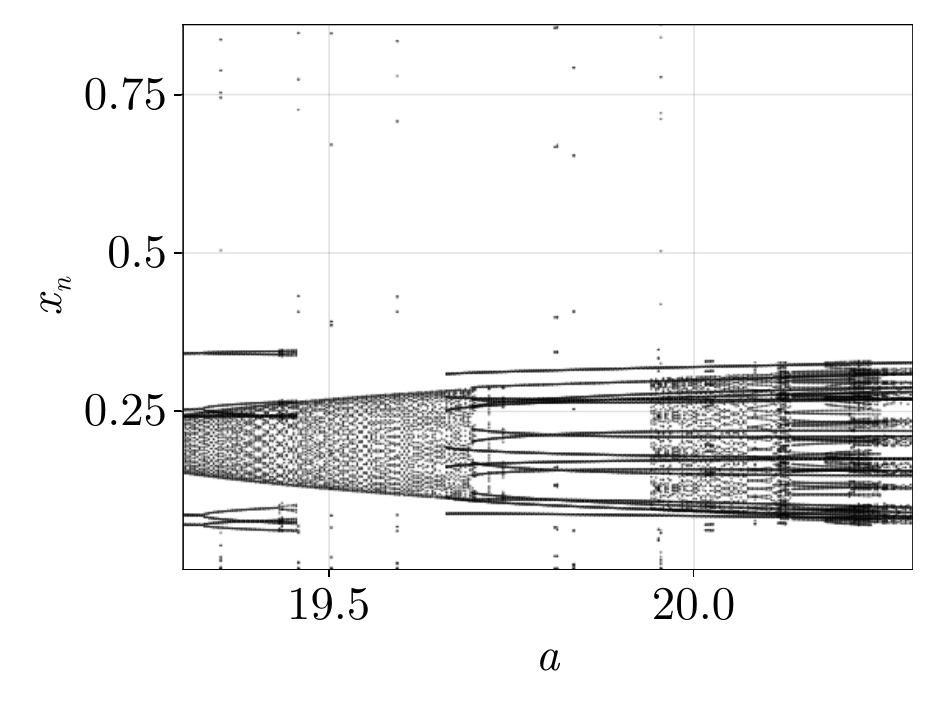}
		}
		\subfigure[]{
			\includegraphics[width=0.45\linewidth]{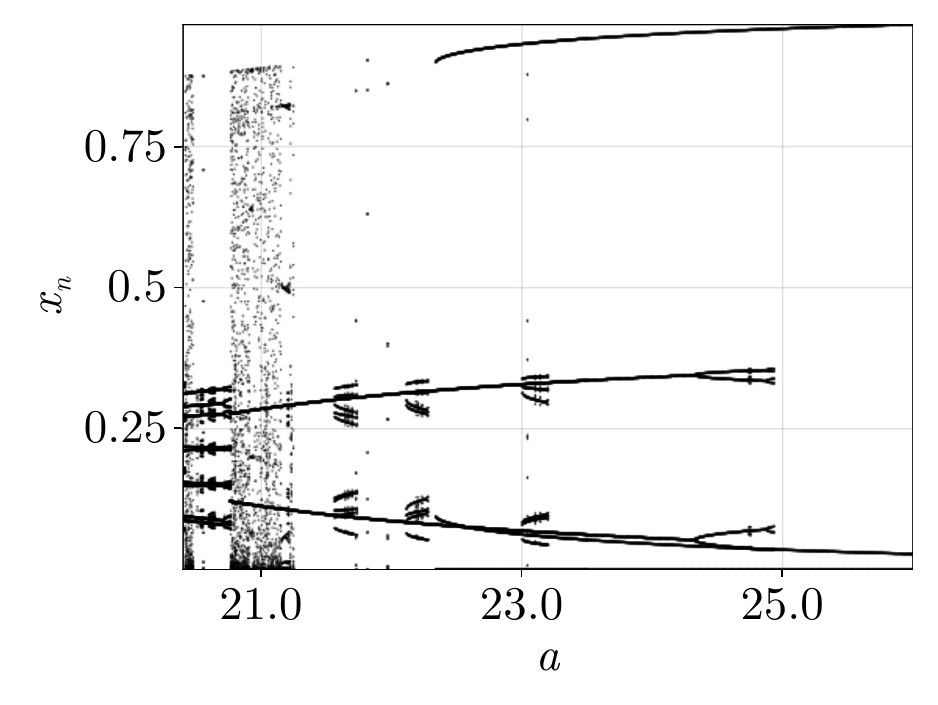}
		}
		\caption{The various emerging attractors as $k=0.865$ is held fixed and $a$ is gradually increased. Each attractor's birth or destruction corresponds to the crossing of a bifurcation curve of Fig.~\ref{bifs2} or to a crisis.}
		\label{bifa2}
		%\vspace*{12pt}
	\end{figure}

	The invariant curve, born at NS$_2^1$ follows the familiar bifurcation pattern of disappearing and reappearing already encountered in the examination of the first region. For higher values of $a$, we observe the coexistence of several different periodic orbits with overlapping resonant tongues (not shown for clarity in Fig.~\ref{bifs2}). An indicative example of overlap between the seven- and nine-tongues (shown in Fig.~\ref{bifs2}) and those of periods 16 and 25 is presented in Fig.~\ref{coex_16_25}, which illustrates in (a) the phase space and in (b) the complexity of the basins of attraction. Apart from the obvious high values of the entropies, it is worth noting that in this case $S_b = S_{bb}$, meaning that every point in the basin is actually a boundary point. This represents an interesting case of complete mixing with $S_b = S_{bb}$, but characterized by non-intermingled boundaries \cite{entropy3}, as the entropies are not equal to the Wada index \cite{entropy3}--which is also visually apparent from the figure. 
	
	The presence of both the I-IV-V and II--$i$ chaotic saddles results in large transiently chaotic intervals; Fig.~\ref{coex_16_25}(c) shows an orbit spending a long time near the regions of the I-IV-V and II--$25$ chaotic saddles before finally converging to the seven-cycle. As expected, the seven-cycle undergoes a period-doubling route to chaos (the first PD curve, PD$_{2-7}$ is shown in Fig.~\ref{bifs2}), following the generic pattern of the II--$i$ family of attractors. The time spent near the saddles gradually increases, and eventually the I-IV-V chaotic saddle gains stability via a reverse boundary crisis bifurcation--cf. Fig.~\ref{bifa2}(b). The phase space and the basins of attraction when the chaotic attractor coexists with the nine-period cycle are shown in Fig.~\ref{basins4}(a) and (d), respectively--note that the basin of the nine-cycle can be loosely thought of as being organized around the (unstable) invariant curve and that $S_b$ attains a significantly low value, while $S_{bb}<0.439$. Moreover, geometrically the nine-cycle captures roughly the region of the invariant curve. As illustrated in Fig.~\ref{bifa2}(b), further increasing $a$ results in the destruction of the I-IV-V attractor through a boundary crisis, as well as in the emergence and disappearance of other high-period orbits as the corresponding tongues are crossed. After the destruction of the II--$9$ attractor, the chaotic attractor reemerges and coexists with a period-two cycle.
	
	\begin{figure}[t!]
		\centering
		\subfigure[]{
			\includegraphics[width=0.31\linewidth]{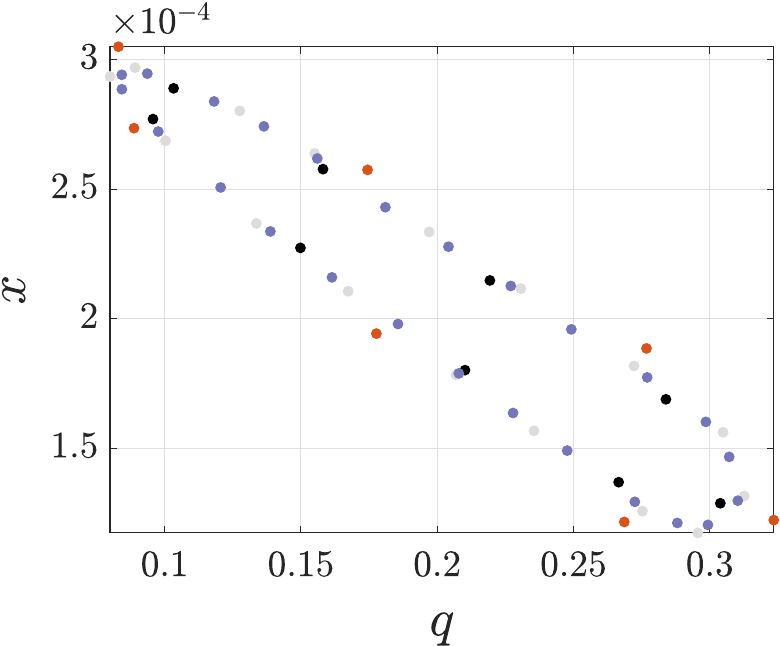}
		}
		\subfigure[]{
			\includegraphics[width=0.31\linewidth]{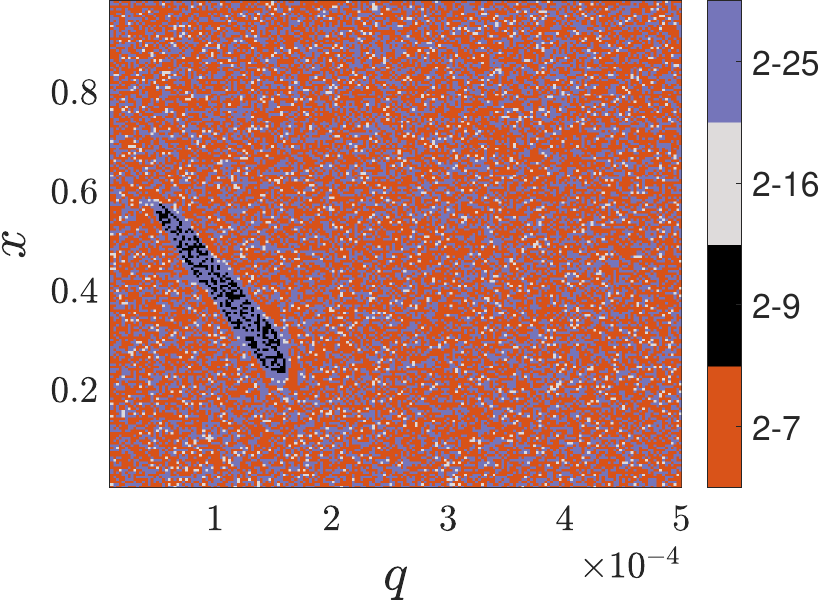}
		}
		\subfigure[]{
			\includegraphics[width=0.31\linewidth]{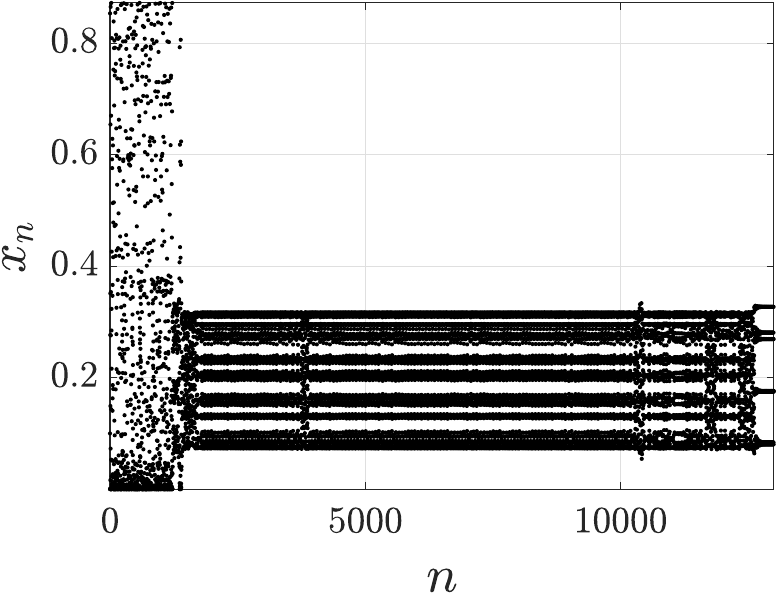}
		}
		\caption{Coexisting II--7, I--9, II--16, and I--25 attractors in the phase space (a) and their basins of attraction in (b), $a=20.14,\; k=0.865$. Note the intermingling of the basins. (c) Transient chaos due to the I-IV-V and II--$i$ chaotic saddles of an orbit converging to the period-seven attractor.}
		\label{coex_16_25}
		%\vspace*{12pt}
	\end{figure}
	
	Regarding this two cycle, we note that as also discussed previously, NS$_{2}^1$ exhibits a codimension-2 R$_2$ resonance point through which PD$_2^1$ passes. Since, for the value of $k$ under consideration, the PD curve is crossed from above, the initially emerging period-two orbit is of saddle type. It gains stability, corresponding to the attractor of Fig. \ref{bifa2}(b), upon crossing NS$_2^2$ emanating from the R$_2$ point and connecting it with an R$_1$ resonance point on the PD curve of the corresponding period-two orbit, denoted by PD$_2^2$. Since it originates from the the fixed point and to highlight its difference from the II--$i$ families, we continue to denote its by family II. Its basin of attraction, shown in Fig. \ref{basins4}(e), is organized around the around the two points of the cycle (cf. the phase space in Fig. \ref{basins4} (b)) and, as is expected, the associated entropies attain significantly lower values compared to the cases of intermingled basins discussed earlier. We note that although the structure organized around the R$_2$ point appears similar to that examined in the case of the Arnold tongues, the normal-form coefficient in this case is positive; thus, the two-invariant curve that is destroyed when the period-two orbit gains stability is not stable.

	\begin{figure}[t!]
		\centering
		\subfigure[]{
			\includegraphics[width=0.31\linewidth]{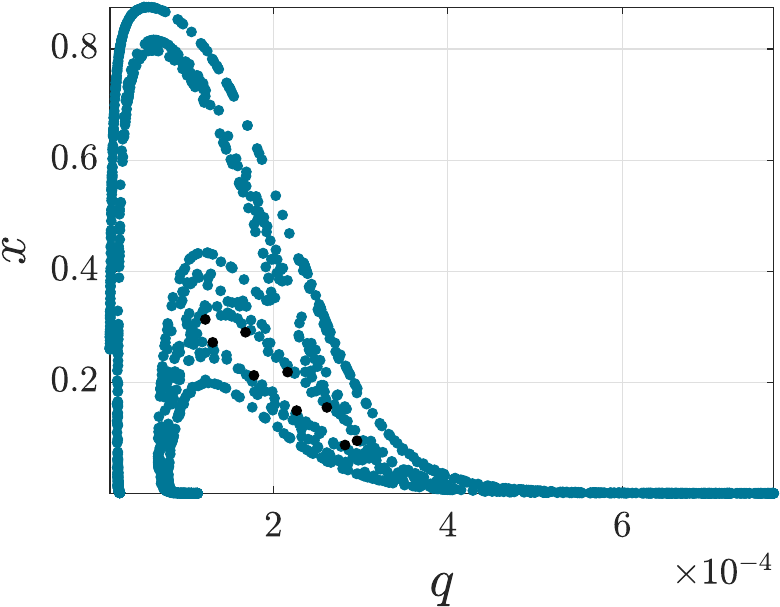}
		}
		\hfill
		\subfigure[]{
			\includegraphics[width=0.31\linewidth]{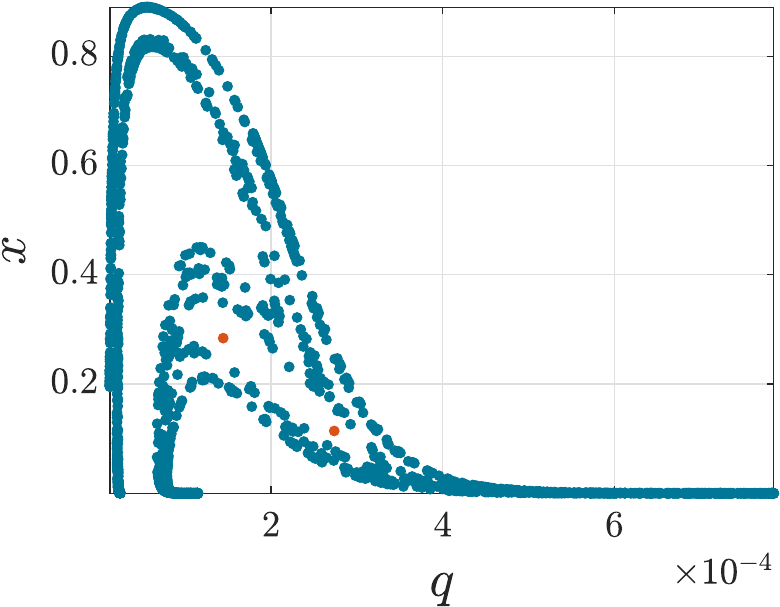}
		}
		\hfill
		\subfigure[]{
			\includegraphics[width=0.31\linewidth]{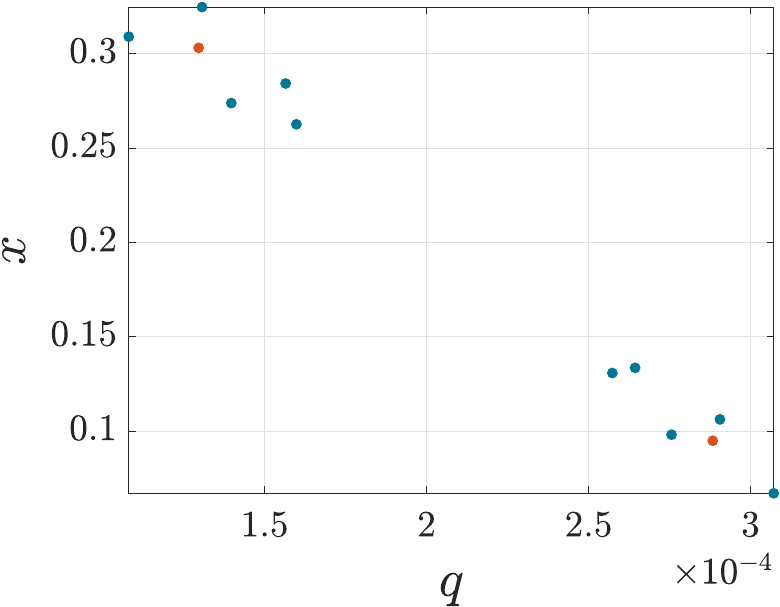}
		}
		\hfill
		\subfigure[]{
			\includegraphics[width=0.31\linewidth]{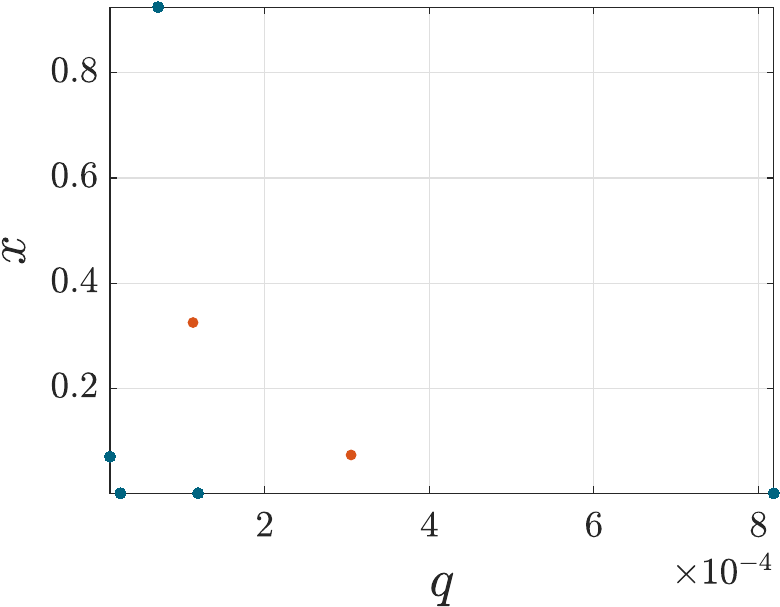}
		}
		\hfill
		\subfigure[]{
			\includegraphics[width=0.31\linewidth]{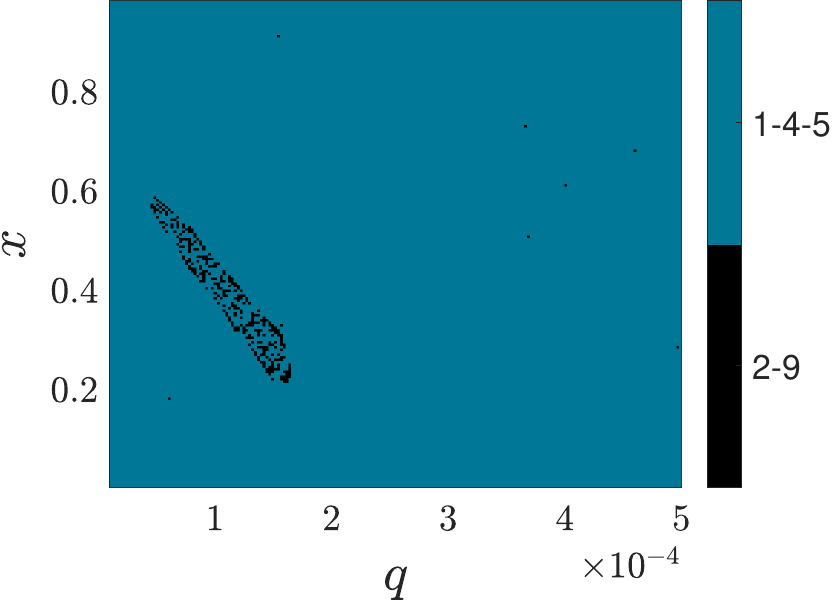}
		}
		\hfill
		\subfigure[]{
			\includegraphics[width=0.31\linewidth]{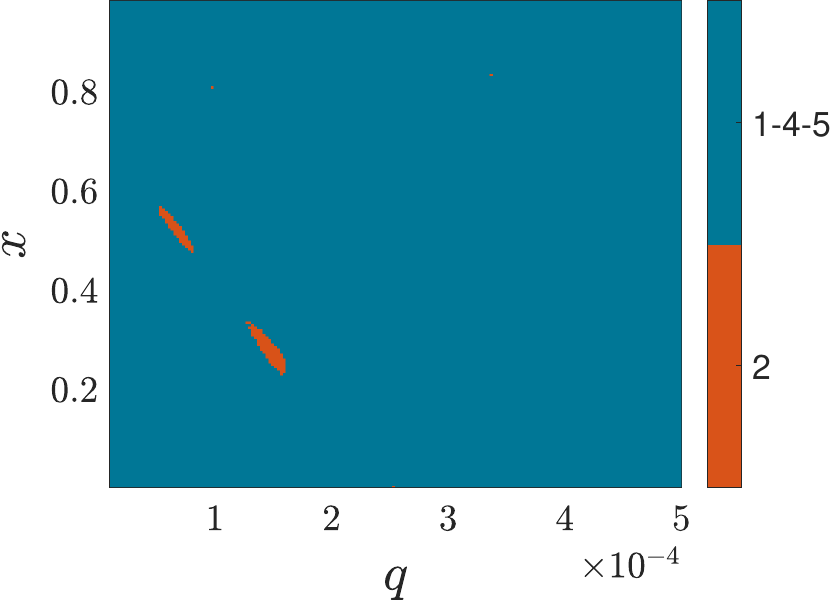}
		}
		\subfigure[]{
			\includegraphics[width=0.31\linewidth]{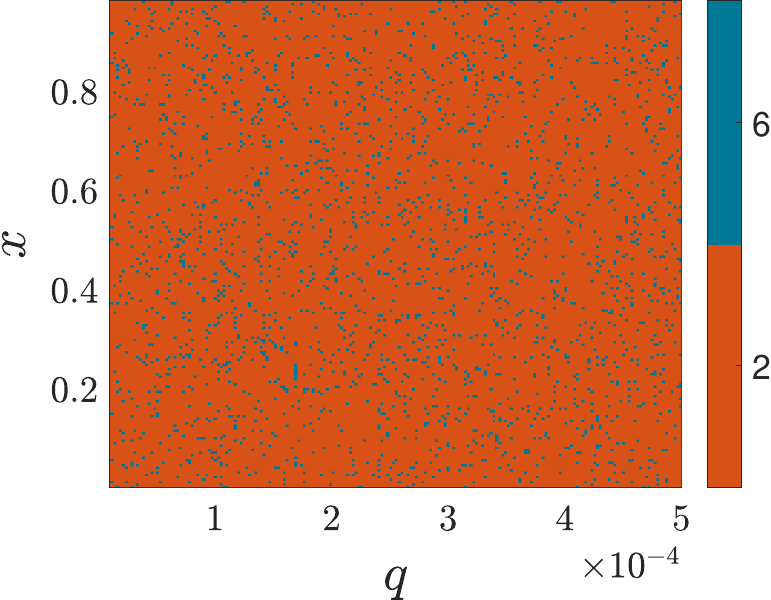}
		}
		\subfigure[]{
			\includegraphics[width=0.31\linewidth]{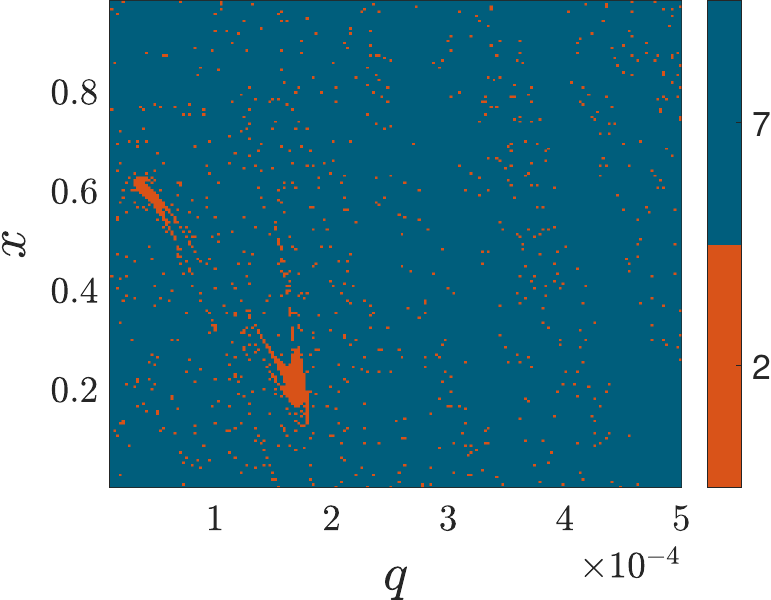}
		}
		
		\caption{Phase space (a)-(d) and associated basins of attraction (e)-(h) with $k=0.865$ fixed, corresponding to Fig.~\ref{bifa2} (a) and (b). The numerical labels $1,2,\ldots$ correspond to the Roman-numeral family indices, defined in Table~\ref{tbl1}. (a) and (e): $a=20.43$, the period-nine attractor of the II--$9$ family and the chaotic attractor of the I-IV-V family coexist. (b) and (f): $a=20.97$, the period-two attractor of the II family and the chaotic attractor of the I-IV-V family coexist. (c) and (g): $a=21.64$, the period-two attractor of the II family and period-ten attractor of the VI family coexist. (d) and (h): $a=22.75$,  the period-two attractor of the II family and period-five attractor of the VII family coexist.}
		\label{basins4}
		%\vspace*{12pt}
	\end{figure}

	As the parameters vary further (cf. Fig. \ref{bifa2}), the chaotic attractor transitions to periodic behavior through a boundary crisis, with the periodic orbit undergoing a Feigenbaum route to chaos. The original chaotic attractor reemerges through a reverse boundary crisis and is ultimately destroyed through another boundary crisis, leaving the period-2 orbit as the sole attractor. Increasing $a$, a stable ten-cycle is born at the SN$_{10}$ curve of Fig.~\ref{bifs2}, as shown in Fig.~\ref{bifa2}(b) and in Fig. \ref{basins4}(c). Its basin of attraction is intermingled with that of the period-two cycle--cf. Fig.~\ref{basins4}(g) and Table \ref{tblentropy}, yet $S_b$ is small. The origin of this periodic attractor, however, is not the phase-locking Arnold tongues of the invariant curve encountered so far; for this reason, we denote its family by VI in Table~\ref{tbl1}. For clarity, we use as index in the associated bifurcations of the family in Fig. \ref{bifs2} the period of the cycle instead of $6-10$. The SN curve exhibits a codimension-2 R$_1$ point, following a similar generic structure to that of the primary PD$_2^1$ curve. As the parameters cross PD$_{10}$, the orbit undergoes period-doubling, while the resulting twenty-cycle is destroyed in a boundary bifurcation before the VI family transitions to chaos. Attractors of the I-IV-V family also exist within narrow parameter regions, as is clear from Fig.~\ref{bifa2}(b).
	
	The eight- and six-period cycles appearing as their corresponding SN$_8$ and SN$_6$ curves in Fig.~\ref{bifs2} are crossed, as well as other high-period cycles existing within very narrow parameter regions and shown in Fig.~\ref{bifa2}(b), follow qualitatively similar behavior to that of the ten-period cycle discussed above. For this reason, with slight abuse of notation, we also denote the families of these attractors again by VI and use as index their corresponding period. However, the period-five cycle born at SN$_{7-5}$ (c.f. Fig.~\ref{bifs2}) and shown in Fig.~\ref{bifa2}(b), VII family of Table~\ref{tbl1}, does not undergo any further bifurcation; accordingly, its SN curve exhibits no codimension-2 resonances. In fact, it remains the sole attractor of the system for sufficiently large learning rate $a$, after the period-2 orbit period-doubles twice and is ultimately destroyed in a boundary crisis (before transitioning to chaos, similarly to the VI families). The phase space and the basins of attraction when the period-two II and period-five VII attractors coexist are shown in Fig.~\ref{basins4}(d) and Fig.~\ref{basins4}(h). It is worth noting that apart from a decrease in the $S_{bb}$, suggesting that the basin boundary possesses no more a fractal nature, the five-period cycle captures most of the phase space, which, prior to its emergence, belonged to the basin of the two-period cycle. After the destruction of the family II attractor, this period-five cycle remains the sole attractor of the system for large $a.$
	
	Summarizing the dynamics analyzed in the two regions shown in Figs.~\ref{bifs1} and \ref{bifs2}, we emphasize that, despite the complexity of the bifurcations occurring in the system, the attractors can be grouped and systematically classified into seven families, as reported in Table~\ref{tbl1}. A high-level overview of the key aspects of the dynamics is as follows: the I, III, IV and V families originate from period-three, period-nine, period-four, and period-seven respectively cycles, which undergo period-doubling cascades to chaos, transition to chaotic saddles after boundary crises, and eventually merge, forming a merged chaotic saddle. Within narrow parameter regions, periodic orbits located near the chaotic saddle gain stability and are subsequently destroyed in boundary crises. Transiently chaotic behavior also observed and traces back to the chaotic saddle.
	
	The fixed point (family II) is stable, coexisting with the attractors of the I, III, IV, or the merged I/III/IV family--and we emphasize this in contrast to the traditional MWU algorithm, in which, when the fixed point is stable, it is also the unique global attractor of the system--until it loses stability via an NS or a PD bifurcation, c.f. the curves NS$_2^1$ and PD$_2$. The two curves meet at an R$_2$ codimension-2 resonance point. The NS curve, which results in a stable invariant curve when crossed above the R$_2$ point, induces overlapping phase-locking Arnold tongues, leading to the birth and coexistence of periodic orbits belonging to the II--$i$ family. Each of these tongues is characterized by qualitatively similar behavior in terms of codimension-2 resonances (R$_1$ on the upper branch of the tongue and R$_2$ on the first PD curve) and in terms of the underlying bifurcations of each II--$i$ attractor, involving Feigenbaum routes to chaos and destruction via boundary crises. The basin boundaries of the coexisting attractors are always fractal and highly intermingled.
	
	As the several II--$i$ attractors appear and disappear and the invariant curve is destroyed, the I/III/IV chaotic saddle regains stability within certain parameter regions. The codimension-two points of the period-two orbit, although structurally similar to those of the tongues, are characterized by an opposite normal-form coefficient at the R$_2$ point; thus, when the primary NS curve, i.e. NS$_2^1$, is crossed above the R$_2$ point, the period-two cycle is born as a saddle at PD$_2$ and gains stability when the secondary NS curve is crossed. The periodic orbits of the VI family, which do not originate from phase-locking tongues and appear as their SN curves are crossed, exhibit a similar pattern concerning the codimension-two points as described above. They are destroyed in a boundary crisis before transitioning to chaos. Similarly, the period-two cycle undergoes two period-doublings and is destroyed in a boundary crisis. 
	
	The system is thus left with a single, global, period-five attractor of the VII family, initially coexisting with the II and VI attractor families. This attractor undergoes no further bifurcations, and beyond the destruction of the period-two cycle, no attractors belonging either to the existing families or to new families are born. In fact, the behavior of the system for sufficiently large learning rate is predictable, and we highlight this property as being in stark contrast to the standard MWU, in the absence of extreme events.

	\subsection{Dynamics within the third region}\label{sec3.3}
	\begin{figure}[t!]
		\centering
		\subfigure[]{
			\includegraphics[width=0.49\linewidth]{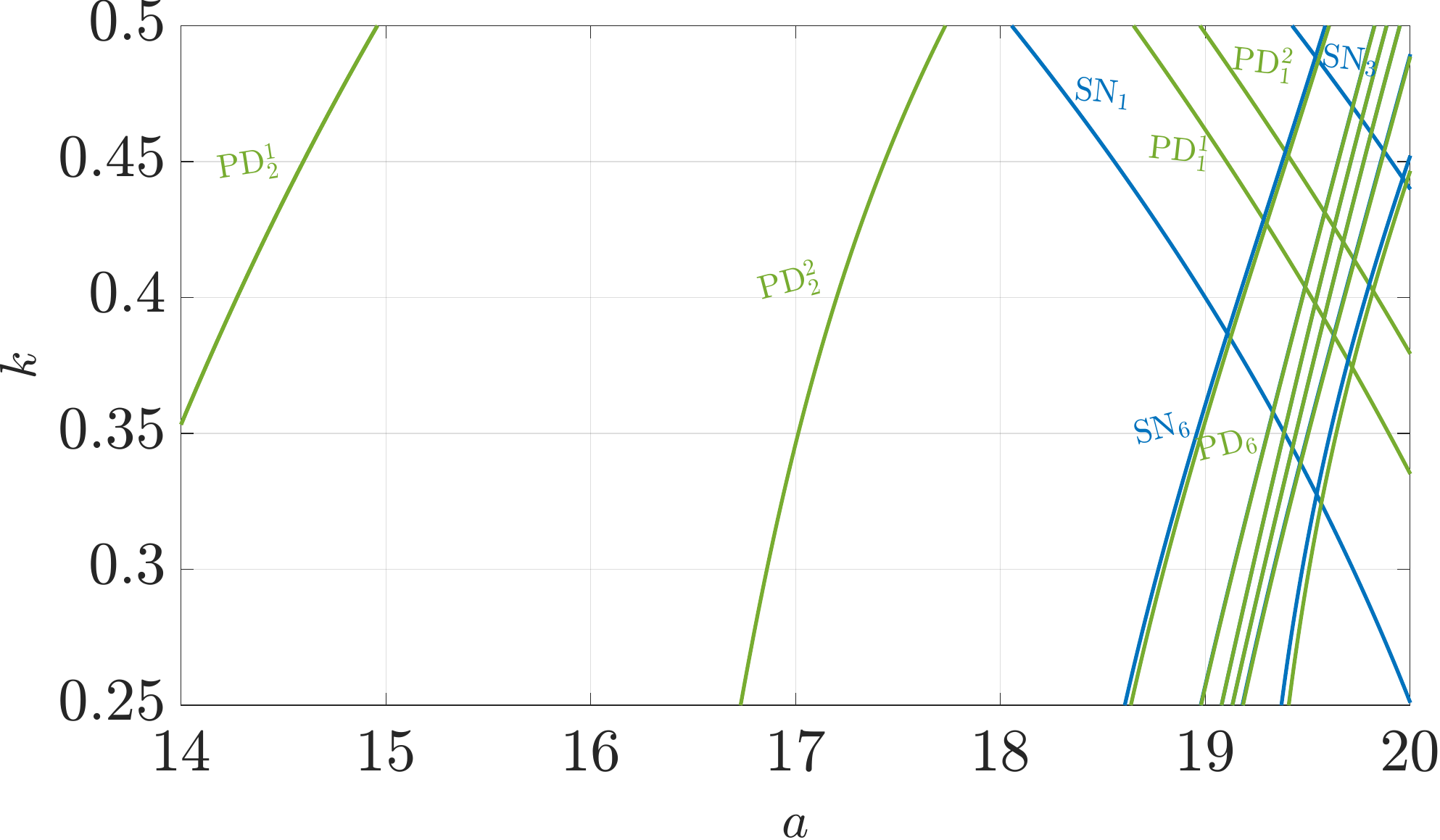}}
		\hfill
		\subfigure[]{
			\includegraphics[width=0.49\linewidth]{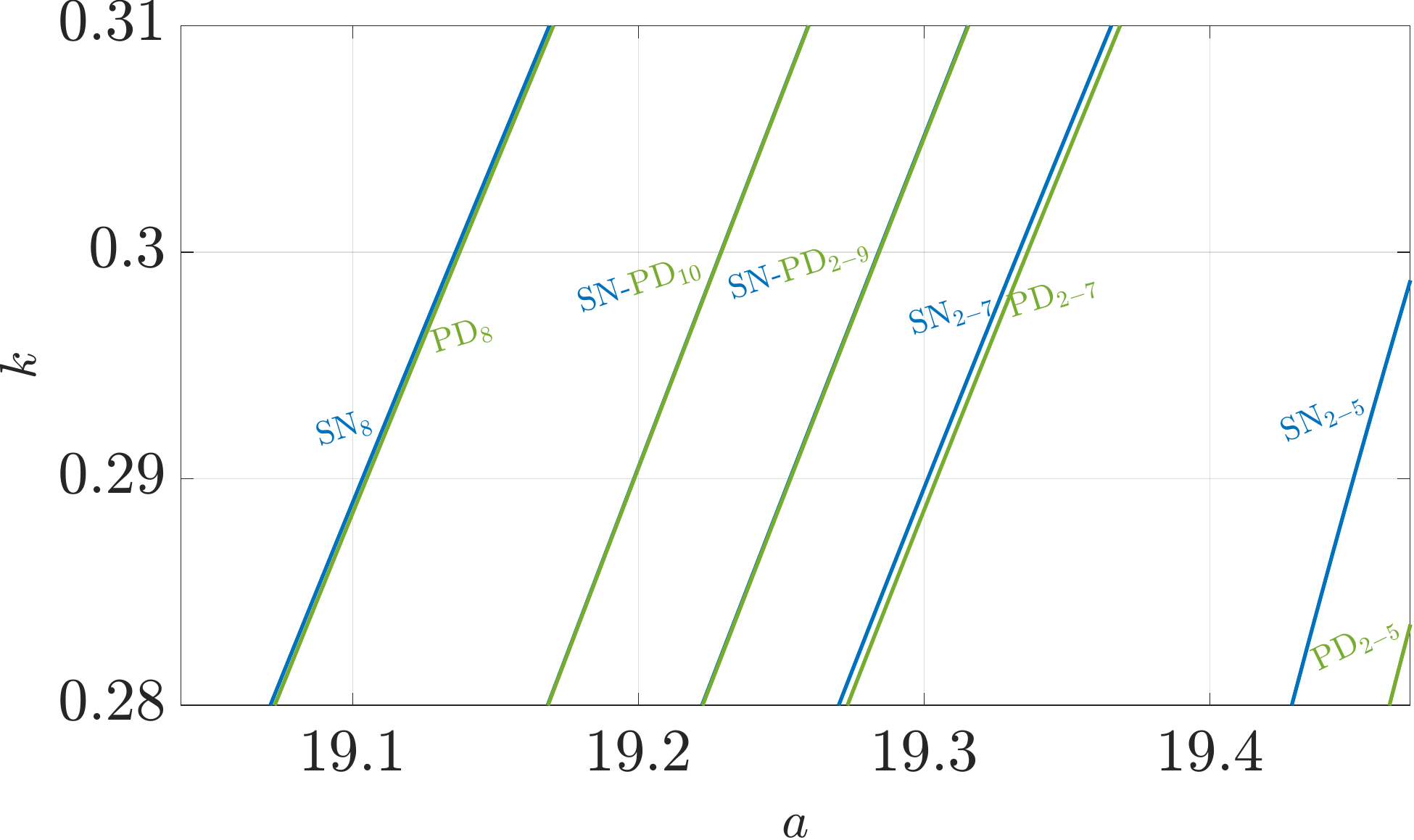}}
		\hfill
		\subfigure[]{
			\includegraphics[width=0.49\linewidth]{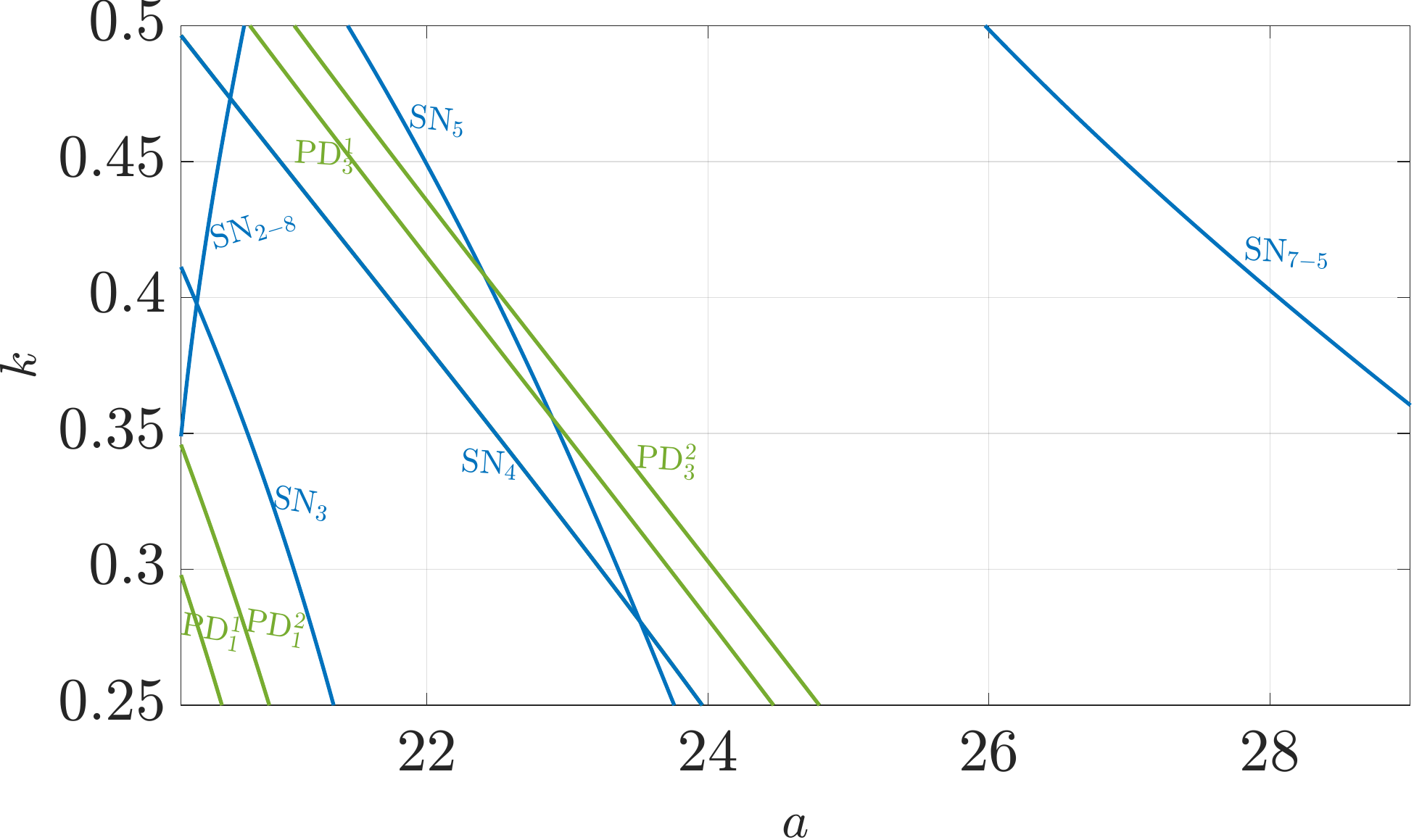}}
		\caption{Bifurcation curves in the third region, divided into two subregions shown in (a) and (c). In (b), an enlarged version of (a) is shown for clarity. Saddle-node bifurcations are denoted by SN and period-doubling bifurcations by PD. The index of each curve refers to the attractor family of Table~\ref{tbl1}. Bifurcation curves of the same type share the same color. For clarity, the SN$_{6-j}$ curves are denoted by SN$_j$. The superscript in the PD curves refers to their order in the cascade. Some SN and PD curves in (b) are so close that they are indistinguishable. }
		\label{bifs3}
		%\vspace*{12pt}
	\end{figure}
	
	From the discussion of the two regions above, we have essentially exhausted all possible bifurcations of the system, and all primary attractor families have been identified. The purpose of this subsection is not to illustrate new bifurcations, but to show how the order of emergence, the type, and the coexistence of attractors change depending on how each bifurcation curve is crossed. Moreover, it is shown that the different order of bifurcations leads to different merged attractor families, in a similar manner to the I/III/IV discussed previously. To this end, in what follows we fix $k=0.3$. The bifurcation curves are shown in Fig. \ref{bifs3}, while Fig. \ref{bifa3} shows the emerging attractors as the parameter $a$ varies.
	
	Because of the choice of the value of $k$ and based on the above discussion and Figs.~\ref{bifs1} and \ref{bifs2}, we know that the fixed point will not undergo an NS bifurcation; instead, a period-two cycle will emerge upon crossing PD$_2$. As is evident from Figs.~\ref{bifs3} and \ref{bifa3}(a), the cascade of period-doublings of the fixed point, namely of the II family, unfolds before any attractor belonging to a different family appears. In fact, a two-piece chaotic attractor emerges, which after an interior crisis expands into a one-piece attractor. This should be contrasted with the situation discussed in the previous subsections, where the period-doubled orbits were destroyed before transitioning to chaos, while coexistence with other families was also observed. 
	
	As $a$ increases, the SN curve of the period-six orbit of the VI family is crossed. This results in the appearance of a stable period-six orbit, while the chaotic attractor is destroyed through a boundary crisis--cf. the enlarged version in Fig.~\ref{bifa3}(b). 
	
	\begin{figure}[t!]
		\centering
		\subfigure[]{
			\includegraphics[width=0.31\linewidth]{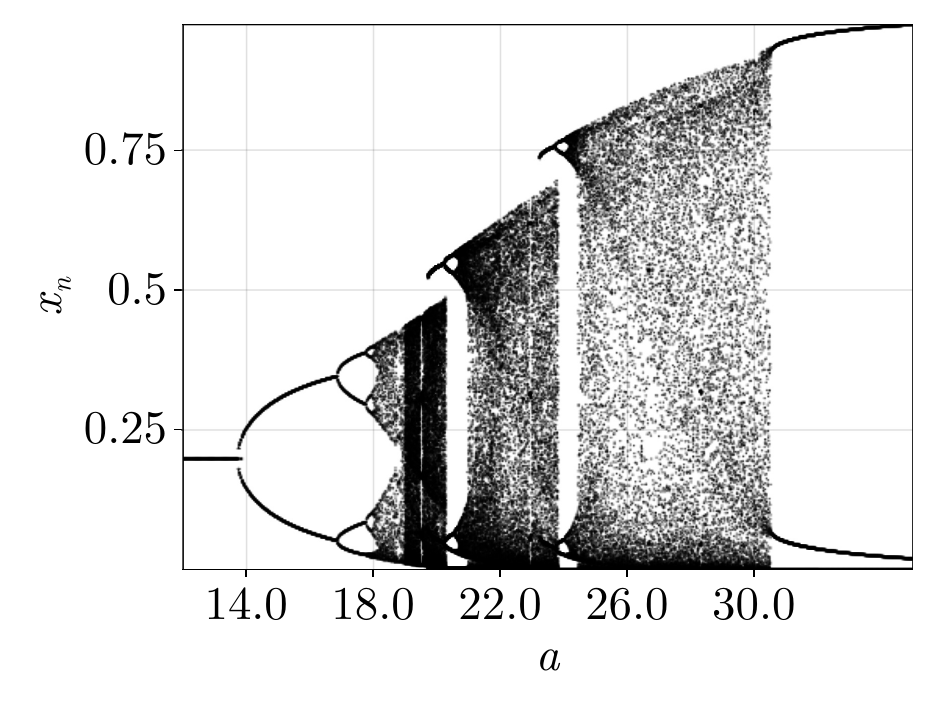}
		}
		\hfill
		\subfigure[]{
			\includegraphics[width=0.31\linewidth]{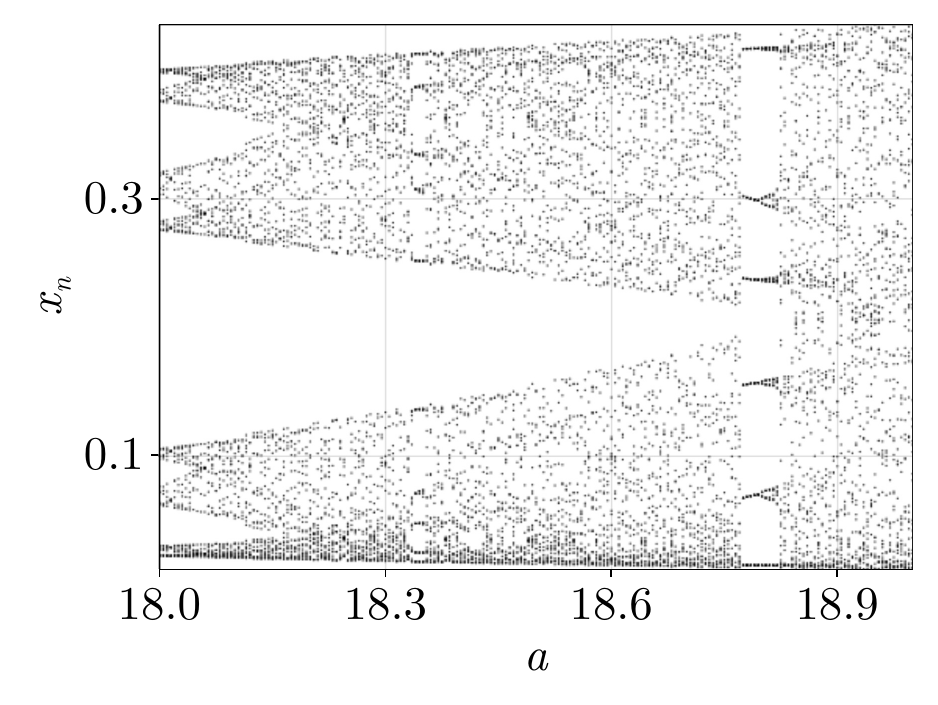}
		}
		\hfill
		\subfigure[]{
			\includegraphics[width=0.31\linewidth]{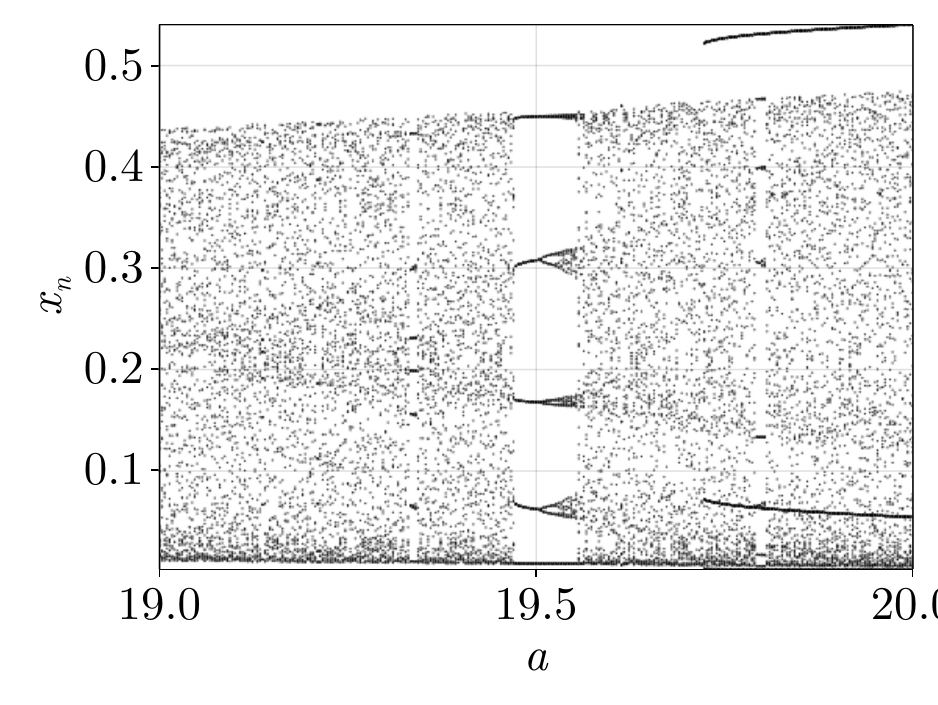}
		}
		
		\caption{(a) The various emerging attractors as $k=0.3$ is held fixed and $a$ is gradually increased. Each attractor's birth or destruction corresponds to the crossing of a bifurcation curve of Fig.~\ref{bifs3} or to a crisis. (b) and (c) correspond to enlarged versions of (a), highlighting the boundary and interior crises as well as the transition between periodic and chaotic behavior.}
		\label{bifa3}
		%\vspace*{12pt}
	\end{figure}
	
	This orbit then undergoes a period-doubling route to chaos, after which the chaotic attractor reemerges. As a result, the period-six cycle belongs to the same branch as the chaotic attractor, appearing as a periodic window. This property is in fact true for all the II--$i$ and VI periodic attractor families studied in the previous subsections. Thus, none of these families coexist with each other or with the chaotic attractor; rather, they are all embedded as periodic windows in the chaotic branch of the II family and undergo period-doubling route to chaos before the chaotic attractor reemerges. The enlarged version of Fig.~\ref{bifa3}(c) shows two examples of periodic windows of the II--$5$ and II--$7$ families. Note that the proximity of the NS and PD bifurcation curves in Fig.~\ref{bifa3}(b) implies that the corresponding periodic windows are very narrow and are difficult to distinguish. We also note that the order of appearance of each of these attractors now differs, as it depends on the order in which the corresponding SN curves of Fig.~\ref{bifs3} are crossed.
	
	However, the I, III, IV, and V families do not follow the above-discussed pattern, as they coexist with the chaotic attractor after their appearance. As shown in Fig.~\ref{bifa2}(c) and Fig.~\ref{bifa3}(a), the period-three attractor of family I, born at NS$_3$ Fig.~\ref{bifs3}, coexists with the chaotic attractor, which becomes a chaotic saddle after a boundary crisis. The period-three orbit undergoes period-doubling to chaos and merges with the II family chaotic saddle, forming a larger chaotic attractor of the merged family I/II.
	
	This attractor follows a similar pattern, exhibiting narrow periodic windows, coexisting with the four-period IV cycle, and ultimately transforming into a chaotic saddle through a boundary crisis. Correspondingly, the period-four attractor undergoes period-doubling route to chaos and expands, incorporating the chaotic saddle I/II via an interior crisis, thus forming a larger chaotic attractor of the merged family I/II/IV. We note that the III and V families appear alone only within very narrow windows before merging into the larger attractor; for this reason, we omit them from the notation of the merged attractor. The merged attractor is now qualitatively similar to that of family I/II/IV discussed in the previous sections. The chaotic regime ends with the birth of the V family period-five attractor, which becomes the sole attractor of the system for large $a$, as expected from the analysis of the previous sections.

	\subsection{A note on different choices of fixed parameters}
	In this subsection, we briefly elaborate on the initial statement of the present Section regarding the changes in the system dynamics when different fixed parameters are chosen. The fixed point always undergoes a PD bifurcation similar to the PD$_2^1$ curve discussed previously. On this curve lies a codimension-two R$_2$ point, which may not lie within the admissible parameter set, i.e., $k\in[0,1]$. As a result, the NS curve crossing this point may lie outside the admissible parameter region. In that case, no stable invariant curve is created, and the bifurcation diagram has a structure similar to Fig.~\ref{bifs3}. The overall dynamics can then be analyzed as in Subsection~\ref{sec3.3}, and Fig.~\ref{bifa3} can be regarded as a generic picture. Otherwise, if the NS curve does lie within the admissible parameter region, the generic picture of Figs.~\ref{bifs1}, \ref{bifs2}, and \ref{bifs3}, as well as the analysis conducted in the corresponding sections, applies. Note, however, that the exact periods of the periodic orbits may differ. Thus, when the extreme event is characterized by different parameters, the analysis can be carried out in a similar manner, and no qualitatively new phenomena are expected; only quantitative differences in the bifurcation curves arise.
	
	As an example, we consider the fixed parameters $b=0.3,\; p_0=p_1=10^{-5},\; \alpha=0.97,\; M=35$. For these parameter choices, the NS$_2^1$ curve of the fixed point does not lie within the admissible parameter region. As a result, we expect behavior similar to that discussed in Section~3.3. Inspecting Fig.~\ref{bifa4}, we indeed observe qualitative similarity with Fig.~\ref{bifa3}(a). Apart from the quantitative differences regarding the critical parameter values at which each attractor appears, note that the periodic attractor born after the boundary crisis of the chaotic attractor, although it does not undergo any further bifurcations similarly to the period-five VII attractor discussed in the previous sections, has period seven. 
	
	\begin{figure}
		\centering
		\includegraphics[width=0.4\linewidth]{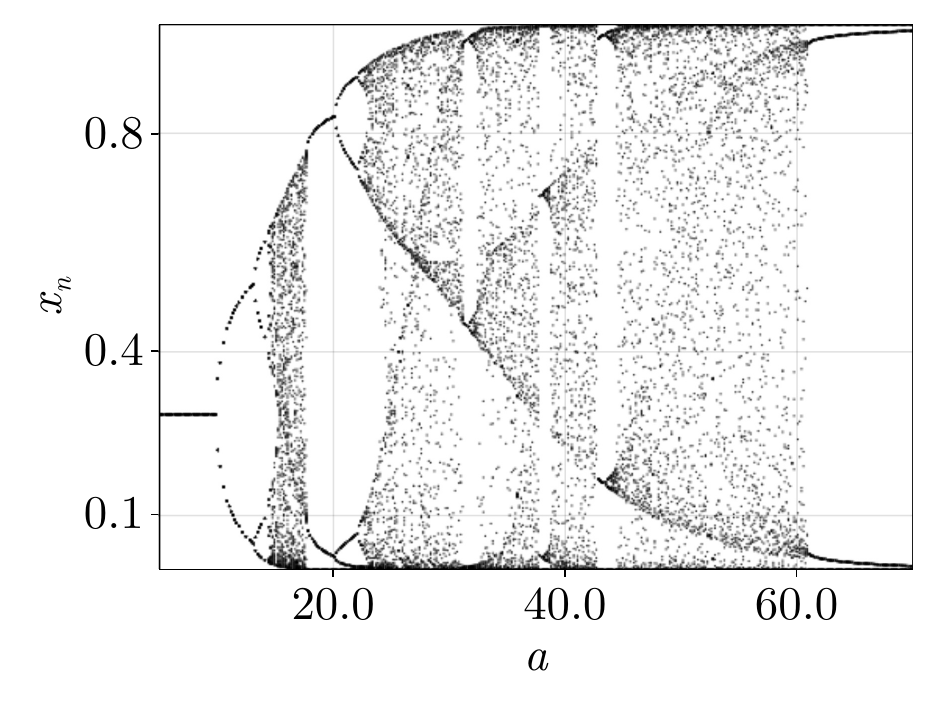}
		\caption{Attractors as $a$ varies and $k=0.865$ for the different set of fixed parameters.}
		\label{bifa4}
	\end{figure}

	\section{Summary and conclusions}\label{section3}
	In this paper, we studied the nonlinear dynamics of a risk-sensitive variant of the Multiplicative Weights Update (MWU) for two-strategy population congestion games in the presence of extreme events affecting the first resource. Agents are assumed to learn and react to the perceived risk, captured by a belief variable. This variable is updated through a standard discrete-time adaptive expectation rule that accounts for the risk signal generated by the environment. The population state evolves according to the MWU, incorporating the agent's belief about the risk of the first resource. As a result, the learning dynamics becomes two-dimensional.
	
	Dynamically, the system always admits two trivial boundary equilibria, one of which is always a saddle, while the stability of the other depends on the parameters; when stable, it is the global attractor. The nontrivial regime arises when both boundary equilibria are saddle and an interior equilibrium exists, which is stationary from a game-theoretic perspective. In that case, a detailed two-parameter bifurcation analysis with respect to the parameters controlled by the agents is conducted to identify regions of qualitatively different behavior. Despite the complexity of the dynamics, the attractors of the system can be grouped into specific families. 
	
	The fixed point loses stability via either a supercritical period-doubling (PD) or Neimark-Sacker (NS) bifurcation--the latter being absent in one-dimensional variants of MWU. Unlike the traditional MWU dynamics, in its stable regime the fixed point coexists with different families of attractors. These families originate from periodic orbits, undergo Feigenbaum cascades to chaos, crisis bifurcations and merging, giving rise to transient chaos and narrow periodic windows. The stable invariant curve born at the NS bifurcation of the fixed point leads to the creation of infinitely many Arnold tongues. The phase locking also constitutes a new phenomenon that has not been identified in traditional MWU dynamics, even in the heterogeneous setting.
	
	The dynamics within each tongue is qualitatively similar: PD and NS bifurcations occur, while codimension-two points serve as organizing centers. Similar codimension-two structures are observed for the period-two cycle originating from the fixed point. The overlap of different tongues results in coexistence among multiple periodic attractors as well as the invariant curve, with intermingled basins of attraction. As the learning rate increases, each family is destroyed through boundary crises, except for a period-five cycle which remains the sole attractor of the system. Variations of the fixed parameters result in qualitatively similar dynamics, making the described picture generic. Despite the complexity of the dynamics, the time-averages of the trajectories are proved to converge to the stationary equilibrium, similarly to the traditional MWU algorithm.
	
	We highlight that incorporating the risk associated with the extreme event into the decision-making process leads to the emergence of several phenomena that are absent from the MWU variants studied so far. These include invariant curves appearing as attractors, phase-locking phenomena, and codimension-two resonances. Notably, the stationary interior equilibrium may be stable while coexisting with both chaotic and periodic attractors. Moreover, for large values of the learning rate, the system exhibits predictable periodic behavior, in contrast to traditional MWU variants.
	
	This work aims to serve as a first step toward understanding the effect of extreme events, and more generally the coupling of game dynamics with a belief variable. Several directions remain to be explored. We highlight two major avenues. The first concerns the risk-sensitive analysis of other classes of games; for instance, one could study the effect of extreme events as perturbations of zero-sum games under replicator dynamics, where the unperturbed system possesses a Hamiltonian structure. The second involves stochastic models in which extreme events directly perturb the game dynamics, rather than being incorporated through a deterministic risk-sensitive correction, with particular emphasis on the resulting large-deviation estimates.
	
	\section*{Acknowledgement} This work has been supported through a Vannevar Bush Faculty Fellowship administered through the ONR Award N000142512059.
	
	\section*{Competing interests}
	The authors have no relevant financial or non-financial interests to disclose.
	
	\section*{Data availability}
	No datasets were generated or analysed during the current study.

	\appendix

	\section{Omitted details from Section 2.3} \label{app}
	In this appendix, we provide the technical details omitted from Section \ref{sec2.2} regarding the boundary equilibria $(0,p_0)$ and $(1,p_0+p_1)$, as well as the connection between the evaluation of the risk in \eqref{eq:c1tilde} and the CVaR risk measure. In what follows it is always assumed that $x_0\in (0,1)$.
	
	\subsection{Boundary Equilibria}\label{sec1_app}
	First, we show the following property regarding the boundary equilibria of \eqref{sys}, which was used in the proof of Proposition \ref{prop1} and is also used in the proof of the global stability of $(0,p_0)$ in Lemma \ref{lem1}.
	\begin{lemma} \label{lem_delta}
		Let $x_0\in(0,1)$ and $q_0\in(0,1),$ then there exists a $\delta_1\in(0,1)$ such that $x_n<1-\delta_1$ for all $n\geq 0.$ Furthermore, if $b>cp_0,$ then there is a $\delta\in(0,1)$ such that $x_n\in(\delta,1-\delta)$ for all $n.$
	\end{lemma}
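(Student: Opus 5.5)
The plan is to exploit two facts about \eqref{sys}. First, $q_n$ is eventually trapped in the interval $[p_0,p_0+p_1]$: it is a convex combination of $q_{n-1}$ and a point of that interval. Second, $x$ is pushed away from $1$ whenever $x$ is close to $1$, and away from $0$ whenever $x$ is close to $0$ and $q$ is close to $p_0$. Write $g(x,q)=x-b+cq$, so that $x_{n+1}=x_n/(x_n+(1-x_n)e^{a g(x_n,q_n)})$.

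\textbf{Bound on $q$.} Since $q_0\in(0,1)$ and $x_n\in(0,1)$, induction gives
\[
\min(q_0,p_0)\le q_n\le \max(q_0,p_0+p_1)
\]
for all $n$. Moreover, $\mathrm{dist}(q_n,[p_0,p_0+p_1])\le(1-k)^n\,\mathrm{dist}(q_0,[p_0,p_0+p_1])$, so $q_n\to[p_0,p_0+p_1]$ geometrically. In particular $q_n\ge 0$ always.

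\textbf{Upper bound on $x$.} Since $q_n\ge0$, we have $g(x,q)\ge x-b$. Hence whenever $x_n\ge b'$ for some fixed $b'\in(b,1)$, we get $g>0$ and $x_{n+1}<x_n$. When $x_n<b'$, monotonicity of $x\mapsto x/(x+(1-x)e^{s})$ in $x$ and in $s$ bounds $x_{n+1}$ by a quantity depending only on $b'$ and a lower bound for $g$. We have $g\ge -b$, so
\[
x_{n+1}\le \frac{b'}{b'+(1-b')e^{-ab}}=:1-\eta<1.
\]
The sequence therefore never exceeds $\max(x_0,b',1-\eta)$ once it drops below $b'$, and it can only decrease while it stays above $b'$. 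This gives $\delta_1=\min(1-x_0,1-b',\eta)/2$, or simply $1-\delta_1=\max(x_0,1-\eta)$, noting that $b'<1-\eta$.

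\textbf{Lower bound on $x$ when $b>cp_0$.} This is the main obstacle, because $q_n$ may start above $p_0+p_1$ and $cq$ could exceed $b$ at first. My first attempt is as follows. Choose $\rho>0$ with $c(p_0+\rho)<b$, and choose $\xi>0$ with $\xi-b+c(p_0+\rho)<0$. A single step can only shrink $x$ by a bounded factor: $g\le 1+c\max(q_0,p_0+p_1)=:G$, so $x_{n+1}\ge x_n e^{-aG}$. Hence over any finite window $x$ stays bounded below.

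The difficulty is that $q$ need not enter $[0,p_0+\rho]$. If $p_1$ is large, $q$ can hover near $p_0+p_1x$, and $x$ is what we are trying to control. I would therefore use a coupling argument. Whenever $x_n<\xi$ over a stretch of steps, the $q$-update is driven by the signal $p_0+p_1x_n<p_0+p_1\xi$, so $q$ contracts geometrically toward $[0,p_0+p_1\xi]$. Picking $\xi$ and $\rho$ small enough that $c(p_0+p_1\xi+\rho)<b-\xi$, after a bounded number $m$ of consecutive steps with $x<\xi$ (with $m$ depending only on $k$ and the initial bounds) we obtain $g<0$, so $x$ increases while it stays below $\xi$. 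Combining this, $x$ can fall below $\xi$ only by the factor $e^{-aG}$ per step, for at most $m$ steps before it starts increasing. This yields
\[
x_n\ge \min\bigl(x_0,\ \xi e^{-aGm}\bigr)\,e^{-aG}\quad\text{for all } n,
\]
where $x_0$ enters only at the start and through the finitely many initial steps. Taking $\delta$ to be the minimum of this bound and $\delta_1$ finishes the proof. The care needed is in making $m$ uniform; this uses the geometric contraction of $q$ with rate $(1-k)$ from any $q\le \max(q_0,p_0+p_1)$.
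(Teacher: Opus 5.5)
Your proposal is correct and follows essentially the same route as the paper. For $\delta_1$ you use monotone decrease of $x$ near $x=1$; you give an explicit bound where the paper uses compactness. For the lower bound you use a maximal-run argument: while $x<\xi$, $q$ contracts geometrically toward $p_0+p_1\xi$. So after a uniform number $m$ of steps the term $x-b+cq$ is negative and $x$ increases. Your $\xi$, $p_0+p_1\xi+\rho$ and $m$ play the roles of the paper's $\varepsilon_2$, $\bar q$ and $L$.

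One small slip: a run beginning at $n=0$ can shrink $x_0$ by a factor $e^{-aG}$ per step for up to $m$ steps. The explicit bound should therefore read $\min\bigl(x_0,\xi e^{-aG}\bigr)e^{-aGm}$ rather than $\min\bigl(x_0,\xi e^{-aGm}\bigr)e^{-aG}$. This does not affect the existence of $\delta$.
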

	
	\begin{proof}
		For this proof it is more convenient to work with the logits $y_n=\log \left(x_n/(1-x_n) \right).$ Then, using \eqref{sys} we have
		\[
		y_{n+1}=y_n-a(x_n-b+cq_n).
		\]
		Let $\varepsilon_1=(1-b)/2.$ If \(x_n\geq 1-\varepsilon_1\), then $x_n-b+cq_n>0$ and, thus, $y_{n+1}<y_n$, implying $x_{n+1}<x_n.$ By continuity of the \(x\)-component of the map on the compact set \([0,1-\varepsilon_1]\times[0,1]\), there exists \(\mu<1\) such that  $x_n\leq \mu<1$ whenever $(x_{n-1},q_{n-1})\in[0,1-\varepsilon_1]\times[0,1]$. As a result, there is a $1>\delta_1>0$ such that $x_n<1-\delta_1$ for all $n.$
		
		Now suppose that $b>cp_0.$ Let $\bar q\in(p_0,b/c)$ and \(\varepsilon_2>0\) small enough so that 
		\[
		p_0+p_1\varepsilon_2<\bar q, \; \; \varepsilon_2 - b + c\bar q<0.
		\]
		Let $\theta=p_0+p_1\varepsilon_2<\bar q$ and $\beta=b-c\bar q-\varepsilon_2>0$. If \(x_n\leq\varepsilon_2\) and \(q_n\leq\bar q\), then
		\[
		x_n-b+cq_n\leq \varepsilon_2 - b+c\bar q=-\beta,
		\]
		and hence $y_{n+1}\geq y_n+a\beta.$ Thus, once \(x_n\) is sufficiently small and \(q_n\leq\bar q\), the \(x\)-coordinate is pushed away from zero. It remains to control the indices for which \(x_n\leq\varepsilon_2\) but \(q_n\) may still be larger than \(\bar q\). Consider $N$ indices such that \(x_s,x_{s+1},\ldots,x_{s+N-1}\leq\varepsilon_2\). Then 
		\[
		q_{s+N}
		\leq
		(1-k)^N q_s+\bigl(1-(1-k)^N\bigr)\theta
		\leq
		(1-k)^N+\bigl(1-(1-k)^N\bigr)\theta.
		\]
		But since \(\theta<\bar q\), there is an $L\geq 1$ such that 
		\[
		(1-k)^L+\bigl(1-(1-k)^L\bigr)\theta\leq \bar q.
		\]
		As a result, if $L\leq N-1$, then \(q_{s+L}\leq\bar q\). Moreover, $q_{s+L+1}< \bar q$ and, thus, $q_{s+j}\leq \bar q$ for every $N-1\geq j\geq L,$ since $x_{s+j}\leq \varepsilon_2$ continues to hold. That is, after the first $L$ steps $y_n,$ and hence $x_n$, begins to increase, as long as $x_n\leq \varepsilon_2$, suggesting that there cannot be infinitely many consecutive indices with $x_n<\varepsilon_2$.
		
		Now let $A=a(1-b+c)$ and $y_{\varepsilon}=\log \left( \varepsilon_2/( 1-\varepsilon_2) \right)$. Then, $y_{n+1}\geq y_n-A$ for all $n$. Consider a maximal block of consecutive indices starting at $s>0$ and of length $N$ during which \(x_n\leq\varepsilon_2\). First, we have $x_{s-1}>\varepsilon_2$, implying $y_s\geq y_{\varepsilon}-A$. Based on the previous discussion, the logits $y_n$ can decrease by at most $LA,$ as after the first $L$ steps, $q_n \leq \bar q$ and, so, $y_n$ increases. Therefore, within any such block we have
		\[
		y_{n} \geq y_\varepsilon-(L+1)A,\; n=s,s+1,\dots,s+N-1.
		\]
		If there is a block with $s=0,$ then one has $y_n\geq y_0-LA.$ Outside the region \(x\leq\varepsilon_2\), we have \(y>y_{\varepsilon}\) and, thus, there exists a constant $Y$ such that $y_n\geq Y$ for all $n.$ Equivalently, there exists $\delta_2 \in(0,1)$ such that $x_n > \delta_2$ for all $n$. Combining this result with the upper bound we obtain the claim.
	\end{proof}

	Next we show that if $b<cp_0,$ then the dynamics induced by \eqref{sys} is trivial.
	\begin{lemma}\label{lem1}
		If $b<cp_0,$ then the equilibrium $(0,p_0)$ is globally attracting.
	\end{lemma}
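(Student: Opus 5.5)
We work with the logits $y_n=\log\left(x_n/(1-x_n)\right)$, as in the proof of Lemma~\ref{lem_delta}, for which
\[
y_{n+1}=y_n-a\left(x_n-b+cq_n\right).
\]
The aim is to show that $q_n$ eventually exceeds a threshold strictly above $b/c$. From then on, $y_n$ decreases by a uniform positive amount at each step, so $x_n\to 0$. The linear belief update then forces $q_n\to p_0$.

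\emph{Step 1: the belief is eventually bounded below.} Since $x_n\ge 0$, the update \eqref{eq_q} gives
\[
q_{n+1}\ge (1-k)q_n+kp_0 .
\]
Iterating, and using $q_0\ge 0$, we obtain
\[
q_n\ge p_0-(1-k)^n\left(p_0-q_0\right)\ge p_0-(1-k)^n p_0 .
\]
Since $b<cp_0$, we can fix $\hat q$ with $b/c<\hat q<p_0$. Set $\beta=c\hat q-b>0$. There is then an index $n_0$ such that $q_n\ge\hat q$ for all $n\ge n_0$. (When $k=1$ this already holds from $n=1$.)

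\emph{Step 2: the population state decays.} For $n\ge n_0$ we have $x_n-b+cq_n\ge c\hat q-b=\beta$. Hence
\[
y_{n+1}\le y_n-a\beta ,
\]
so $y_n\to-\infty$ linearly and $x_n\to 0$. In fact the decay is geometric: $x_n\le e^{y_n}\le C e^{-a\beta n}$.

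\emph{Step 3: the belief converges.} Set $e_n=q_n-p_0$. Then
\[
e_{n+1}=(1-k)e_n+kp_1x_n .
\]
This is a stable linear recursion driven by a forcing term $kp_1x_n$ that tends to zero, so $e_n\to 0$. Explicitly,
\[
e_n=(1-k)^n e_0+kp_1\sum_{i<n}(1-k)^{n-1-i}x_i ,
\]
and the convolution of a summable geometric kernel with a null sequence is itself null. Therefore $(x_n,q_n)\to(0,p_0)$ for every initial condition with $x_0\in(0,1)$ and $q_0\in[0,1]$. Together with the local stability of $(0,p_0)$, which the paper already notes holds when $b<cp_0$, this gives global attractivity. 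Orbits with $x_0=0$ stay on the invariant line $x=0$, on which $q_n\to p_0$ directly. The excluded case $x_0=1$ is the invariant boundary $x=1$, which contains the saddle $(1,p_0+p_1)$; there "global" is understood relative to $x_0\in(0,1)$, as assumed throughout the appendix.

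The only delicate point is Step 1. We need $q_n$ to be bounded below strictly above $b/c$, uniformly and after finitely many steps, and not merely in the limit. This requires the strict inequality $b<cp_0$ to leave room for the threshold $\hat q$. Everything after that is a one-line monotonicity argument on the logits. Lemma~\ref{lem_delta} is not needed here, except to note that the upper bound $x_n<1-\delta_1$ keeps the logits finite, so the computation is well defined.
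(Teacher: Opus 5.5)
Your proof is correct. It shares the paper's first step: you fix a threshold strictly between $b/c$ and $p_0$ and use $q_{n+1}\ge(1-k)q_n+kp_0$ to push $q_n$ above it after finitely many steps. You obtain the decay of $x_n$ differently, however.

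\textbf{How the paper does it.} The paper stays in the original coordinate. It dominates $x_{n+1}$ by the one-dimensional map $h(x)=x/\bigl(x+(1-x)\exp(a(x-b+q'c))\bigr)$ and then runs a two-phase argument. First it gets a uniform decrement $\ell_\eta$ on the compact interval $[\eta,1-\delta]$, which needs the upper bound from Lemma~\ref{lem_delta}. Then it uses the contraction $h(x)\le Cx$ on $[0,\eta)$. It simply asserts that the convergence of $q_n$ follows.

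\textbf{What your route buys.} In logit coordinates the lower bound $x_n-b+cq_n\ge c\hat q-b$ holds uniformly for all $x_n\in[0,1)$. So you get a constant decrement $a\beta$ at every step from $n_0$ on. This removes the compactness argument and the appeal to Lemma~\ref{lem_delta}, and gives an explicit geometric rate. Your Step~3 also supplies the convergence $q_n\to p_0$ that the paper leaves implicit.

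Even your residual use of Lemma~\ref{lem_delta} is unnecessary. Each $y_n$ is finite simply because the map sends $(0,1)$ into itself, so the logit recursion is exact at every step.

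The paper's comparison-map formulation is the template it reuses for the CVaR system in Lemma~\ref{lem3}. Your logit shortcut would work there too: $m(q_n)\to 1$ and $M>b$ again give an eventually uniform positive lower bound on the exponent.

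Finally, you do not need local stability at the end. Convergence of every orbit with $x_0\in(0,1)$ is already what ``globally attracting'' asserts. Local stability only matters if you want global asymptotic stability.
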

	
	\begin{proof}
		Let $q'\in(b/c,p_0).$ Inductively, we have that $q_n-p_0\geq (1-k)^n(q_0-p_0).$ Thus, there exists $n_0$ such that for all $n\geq n_0,$ we have $q_n\geq q'.$ Then, for $n\geq n_0$ we have that
		\[
		x_{n+1}\leq\frac{x_{n}}{x_{n}+(1-x_{n}) \exp{\left(a(x_n-b+q'c) \right)}}=h(x_n)<x_n
		\]
		Let $1>C>e^{-a(q'c-b)}$. There exists an $\eta>0$ such that $h(x)\leq Cx$ for $x< \eta.$ From Lemma \ref{lem_delta}, there exist a $\delta>0$ such that $x_n\leq 1-\delta$ for all $n$. For $x\in[\eta,1-\delta],$ since $h(x)<x,$ there exists $\ell_\eta>0$ such that $h(x)\leq x- \ell_\eta$. If for some $n_1\geq n_0$ we have $x_{n_1} \in [\eta, 1-\delta],$ then $x_{n_1+1}\leq h(x_{n_1})\leq x_{n_1} -\ell_\eta$. As a result, there is an $n_2\geq n_0$ such that $x_n \in [0,\eta]$ for all $n\geq n_2.$ This implies that $x_{n+1}\leq Cx_n$ for all $n\geq n_2,$ yielding $\lim x_n =0.$ The convergence of $q_n$ follows.
	\end{proof}

	\subsection{Connection to the CVaR risk measure}\label{sec2_app}
	Suppose that the agents evaluate the extreme-event component of action \(1\) through the Conditional Value-at-Risk measure \(\mathrm{CVaR}_\alpha\), at confidence level \(\alpha\in(0,1)\). Then, the perceived cost of action \(1\), instead of \eqref{eq:c1tilde}, is now given by
	\[\tilde c_1(x_n,q_n)
	=
	N\left(\gamma x_n+M\min\left(1,\frac{q_n}{1-\alpha}\right)\right).
	\]
	Following the same steps as in the derivation of \eqref{sys}, we arrive at the following system
	\begin{align*}
		x_{n+1}&=\frac{x_n}{x_n+(1-x_n)\exp\!\left(
			a\left(x_n-b+M\min\!\left(1,\frac{q_n}{1-\alpha}\right)\right)
			\right)} \nonumber\\
		q_{n+1}&=(1-k)q_n+k(p_0+p_1x_n),
	\end{align*}
	which governs the dynamics of the agents. We generally assume that $p_0, \; p_1\leq 1-\alpha$, which is natural since they are associated with the extreme-event probabilities. We denote the above map by $\tilde f$, i.e., $(x,q)\mapsto\tilde f(x,q)$. In what follows, $x_n$ and $q_n$ refer to the $n$-th iterate of the system defined by $\tilde f$. The purpose of this Section is to discuss the connections between the dynamics induced by the maps $f$ and $\tilde f$ in a semi-analytical manner.
	
	The discussion in Section \ref{sec2.2} on the equilibria of the system, from both dynamical and game-theoretical perspectives, can be repeated verbatim for the system defined by $\tilde f$. In particular, the content of Lemmas \ref{lem_delta} and \ref{lem1} also holds in this case. For this reason in what follows it is enough to assume that $b>cp_0.$ However, the result of Proposition \ref{prop1} does not directly transfer. In general, one only has that
	\[
	\lim_{n \to \infty} \frac{1}{n+1} \sum_{i=0}^n \left(x_i+M\min \left(1, \frac{q_i}{1-\alpha} \right) \right)=b,
	\]
	which is proved by following the same steps as in Proposition \ref{prop1}. 
	
	Apart from the similar behavior of the equilibria, the two systems also share deeper connections in terms of their respective dynamics. To this end, we first show the following.
	\begin{lemma}\label{lem2}
		We cannot have $q_n>p_0+p_1$ for all $n.$
	\end{lemma}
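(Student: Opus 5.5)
The plan is to work directly with the belief recursion in $\tilde f$. The $q$-component of $\tilde f$ is the same affine update as in \eqref{sys}, so the argument involves no $\min$ operator at all. The natural quantity to track is the excess
\[
d_n = q_n-(p_0+p_1)
\]
of the belief over the largest admissible risk signal. The idea is to show that $d_n$ satisfies a contracting affine inequality with a strictly negative forcing term. Once that is in place, $d_n$ must become negative after finitely many steps, which contradicts $q_n>p_0+p_1$ for all $n$.

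The first step is to invoke the first part of Lemma \ref{lem_delta}. As noted above, it carries over verbatim to $\tilde f$: its proof only uses that the exponent $x_n-b+M\min(1,q_n/(1-\alpha))$ is positive once $x_n\ge 1-\varepsilon_1$, and that the $x$-component is continuous on a compact set. This gives a $\delta_1\in(0,1)$ with $x_n<1-\delta_1$ for all $n$. Substituting into \eqref{eq_q} gives
\[
q_{n+1}=(1-k)q_n+k(p_0+p_1x_n)\le (1-k)q_n+k\bigl(p_0+p_1(1-\delta_1)\bigr),
\]
that is,
\[
d_{n+1}\le (1-k)\,d_n-kp_1\delta_1 .
\]

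The second step iterates this inequality:
\[
d_n\le (1-k)^n d_0-p_1\delta_1\bigl(1-(1-k)^n\bigr).
\]
Since $k\in(0,1]$, the first term tends to $0$ and the second tends to $-p_1\delta_1<0$. Hence $d_n<0$, i.e. $q_n<p_0+p_1$, for all sufficiently large $n$, which contradicts the assumption that $q_n>p_0+p_1$ for all $n$. The case $k=1$ is immediate from the first step alone, because then $q_{n+1}=p_0+p_1x_n<p_0+p_1$ already for $n=0$.

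The main point needing care is the strictness provided by $\delta_1$. Without it, the bound $x_n\le 1$ only yields $d_{n+1}\le(1-k)d_n$, which allows $q_n$ to decrease to $p_0+p_1$ while staying above it forever. For the same reason the argument needs $p_1>0$, which is the setting of interest; here $p_1=10^{-3}$ in the experiments. If $p_1=0$ the signal is constant, $d_n=(1-k)^n d_0$, and the claim fails whenever $q_0>p_0$ and $k<1$. I would therefore state explicitly that $p_1>0$ is used, and check that the transfer of Lemma \ref{lem_delta} to $\tilde f$ is legitimate.
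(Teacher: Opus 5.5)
Your proposal is correct and follows essentially the same route as the paper: both use the bound $x_n\le 1-\delta$ from Lemma~\ref{lem_delta} and the explicit solution of the affine recursion for $q_n-(p_0+p_1)$. The paper argues by contradiction, while you show directly that $q_n<p_0+p_1$ eventually; your remark that $p_1>0$ is needed is apt, since the paper's limit argument also silently relies on $\delta p_1>0$.
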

	\begin{proof}
		Assume the opposite. From Lemma \ref{lem_delta}, there exists a $\delta>0$ such that $x_n\leq 1-\delta$ for all $n.$ Now from the dynamics of $q$ in \eqref{sys}, we have that
		\[
		q_n-(p_0+p_1)=(1-k)^n (q_0-(p_0+p_1))-kp_1 \sum_{j=0}^{n-1} (1-k)^{n-1-j} (1-x_j)
		\]
		and, as a result,
		\[
		(1-k)^n(q_0-p_0-p_1)> kp_1 \sum_{j=0}^{n-1}(1-k)^{n-1-j}(1-x_j)\geq\delta p_1 (1-(1-k)^n).
		\]
		Letting $n\to \infty$ yields the desired contradiction.   
	\end{proof}
	
	Using Lemma \ref{lem2} and induction, we obtain that $q_n\leq p_0+p_1$ for all $n\geq n_0$. As a result, if $p_0+p_1\leq 1-\alpha$, then the dynamical phenomena present in the systems defined by $f$ and $\tilde f$ coincide. Requiring the condition $p_0+p_1\leq 1-\alpha$ to hold is again natural, recalling the definitions of the parameters $p_0$, $p_1$, and $\alpha$. If, however, this condition does not hold, we can still show that for small to moderate values of $a$, the long-term dynamics of the two systems coincide. To this end, we need first the following lemmas.
	\begin{lemma}\label{lem3}
		Suppose $1-\alpha< p_0+p_1$ and let $x_c=(1-\alpha-p_0)/p_1.$ Then $x_n>x_c$ cannot hold for all $n.$
	\end{lemma}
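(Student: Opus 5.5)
We argue by contradiction, in the spirit of Lemma~\ref{lem2}. Suppose $x_n>x_c$ for all $n$. Note first that $x_c<1$ precisely because $1-\alpha<p_0+p_1$. If $x_c\le 0$, i.e.\ $p_0\ge 1-\alpha$, then $b>cp_0\ge c(1-\alpha)=M$, which is impossible since $b<1<M$. Hence we may assume $x_c\in(0,1)$.

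The first step is to show that, under the standing hypothesis, $q_n$ is eventually at least $1-\alpha$. Iterating the $q$-equation gives
\[
q_n-(1-\alpha)=(1-k)^n\bigl(q_0-(1-\alpha)\bigr)+kp_1\sum_{j=0}^{n-1}(1-k)^{n-1-j}(x_j-x_c).
\]
The sum is strictly positive. However, positivity alone does not give $q_n\ge 1-\alpha$ eventually, since $x_j-x_c$ might tend to zero. For this reason I would not aim for that; instead I would work directly with the $x$-dynamics, using the bound $q_n\ge (1-\alpha)-(1-k)^n\,|q_0-(1-\alpha)|$. This bound shows that $\min(1,q_n/(1-\alpha))\to 1$ up to an exponentially small error, so the $\min$ term is asymptotically equal to $1$.

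The second step uses the logits $y_n=\log(x_n/(1-x_n))$, which satisfy
\[
y_{n+1}=y_n-a\Bigl(x_n-b+M\min\bigl(1,q_n/(1-\alpha)\bigr)\Bigr).
\]
For $n\ge n_0$ we have $M\min(1,q_n/(1-\alpha))\ge M-\eta$, where $\eta$ is small. Moreover $x_n>x_c>0$ and $M>1>b$. Choosing $\eta<M-b$, we obtain $y_{n+1}\le y_n-a(x_c+M-\eta-b)$, so $y_n$ decreases by at least the fixed positive amount $a(M-b-\eta)$ at every step. Therefore $y_n\to-\infty$, that is, $x_n\to 0$. This contradicts $x_n>x_c>0$ for all $n$.

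The main obstacle is justifying $\min(1,q_n/(1-\alpha))\ge 1-\eta/M$ eventually. The bound from the first step settles this cleanly only once the transient term $(1-k)^n|q_0-(1-\alpha)|$ has decayed, and the hypothesis $x_n>x_c$ guarantees that the forcing term is nonnegative. In fact a cruder route also works, which avoids the $\min$ issue entirely. Once $q_n\ge p_0+p_1x_c-\text{(transient)}=1-\alpha-\text{(transient)}$, the drift is bounded below by $x_c-b+M-o(1)>0$. This uses only $M>1$, so it does not require the case split on $x_c$ versus $b$.
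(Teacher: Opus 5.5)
Your proof is correct. Its first half coincides with the paper's: both iterate the $q$-recursion, use $p_0+p_1x_c=1-\alpha$ to get $q_n\ge(1-\alpha)-(1-k)^n(\cdot)$, and conclude $\min(1,q_n/(1-\alpha))\to1$.

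The second half takes a genuinely different route.

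\textbf{The paper's route.} It writes $x_{n+1}=h(x_n)+e_n$ with $h=g(\cdot,1)$. It invokes Lemma~\ref{lem_delta} to confine $x_n$ to $[\delta,1-\delta]$, and uses uniform continuity of $g$ to get $e_n\to0$. It then repeats the argument of Lemma~\ref{lem1}: $h(x)\le x-\ell_\eta$ on $[\eta,1-\delta]$ with $\eta<x_c$, so the orbit is eventually pushed below $\eta$.

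\textbf{Your route.} You use the exact additive logit recursion. There the drift $x_n-b+M\min(1,q_n/(1-\alpha))$ is eventually bounded below by $M-b-\eta>0$, uniformly in $x_n$, so $y_n\to-\infty$ at a linear rate.

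\textbf{What each buys.} Your route dispenses with Lemma~\ref{lem_delta}, with uniform continuity, and with the need to stay away from $x=1$. Staying away from $x=1$ is why the paper needs the cutoff $1-\delta$: the gap $x-h(x)$ vanishes there. Your route also gives a quantitative conclusion: geometric decay of $x_n$ and an explicit bound on the time needed to drop below $x_c$. The paper's route mainly has the advantage of reusing the template of Lemma~\ref{lem1}.

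\textbf{Positivity of $x_c$.} You also make explicit that $x_c>0$ follows from the standing assumption $b>cp_0$ together with $b<1<M$. The paper uses this silently when it picks $\eta\in(0,x_c)$. Without it the lemma would be false, since $x_n>0\ge x_c$ would hold trivially.

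\textbf{Your final paragraph.} It adds nothing and can be dropped. The ``cruder route'' is the same argument restated, and no case split on $x_c$ versus $b$ ever arises.
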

	
	\begin{proof}
		Assume the opposite. Let $m(q)=\min{\left(1,q/(1-\alpha) \right)}.$ Then, by the dynamics of $q$  we have that
		\[
		q_n\geq (1-\alpha)-(1-k)^np_0-(1-k)^np_1x_c.
		\]
		As a result, $\lim m(q_n)=1.$ From Lemma \ref{lem_delta}, there exists a $\delta>0$ such that $\delta\leq x_n\leq 1-\delta$ for all $n.$ By the uniform continuity of 
		\[
		g(x,m)=\frac{x}{x+(1-x)\exp{\left (a(x-b+Mm) \right)}}
		\]
		on $[\delta,1-\delta]\times [0,1]$ we get that $|e_n| \to 0$, where we have set $e_n=f(x_n,q_n)-f(x_n,1-\alpha)$. 
		
		Now we write $h(x)=g(x,1)$ and $x_{n+1}=h(x_n)+e_n.$ Let $1>C>e^{-a(M-b)}$. Similar to the proof of Lemma \ref{lem1}, there exists an $\eta \in (0,x_c)$ such that $h(x)\leq Cx$ for $x< \eta$ and an $\ell_\eta>0$ such that $h(x)\leq x- \ell_\eta$ for $x\in[\eta,1-\delta],$ where $\delta$ is such that $\delta\leq x_n\leq 1-\delta$ for all $n.$ Let $n_0$ be such that $|e_n|< \ell_\eta/2$ for $n \geq n_0.$ Thus, if for some $n\geq n_0$ we have $x_n\in [\eta, 1-\delta],$ then $x_{n+1}\leq x_n-\ell_\eta/2.$ Therefore, there exists an $n_2\geq n_0$ such that $x_{n_2} \in [0,\eta],$ yielding a contradiction.
	\end{proof}
	
	\begin{lemma}\label{lem4}
		Suppose $1-\alpha< p_0+p_1$ and let $x_c=(1-\alpha-p_0)/p_1.$ If $x_{n_0}\leq x_c,$ then $q_n\geq1-\alpha$ cannot hold for all $n.$
	\end{lemma}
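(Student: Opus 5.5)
The plan is to argue by contradiction, as in Lemmas \ref{lem2} and \ref{lem3}. Assume that $q_n\geq 1-\alpha$ for all $n\geq n_0$. Then $m(q_n)=\min(1,q_n/(1-\alpha))=1$ for all $n\geq n_0$. So from time $n_0$ on, the $x$-component of $\tilde f$ reduces to the one-dimensional map
\[
x_{n+1}=h(x_n),\qquad h(x)=\frac{x}{x+(1-x)\exp\left(a(x-b+M)\right)},
\]
with $h$ as in the proof of Lemma \ref{lem3}. Because $M>1>b$, the exponent $a(x-b+M)$ is strictly positive on $[0,1]$. Hence $h(x)<x$ for every $x\in(0,1)$. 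The trajectory starts with $x_0\in(0,1)$, and the map preserves $(0,1)$. Therefore the sequence $(x_n)_{n\geq n_0}$ is strictly decreasing.

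Next we use the hypothesis $x_{n_0}\leq x_c$. For all $n\geq n_0+1$ this gives $x_n\leq x_{n_0+1}<x_{n_0}\leq x_c$. We then set $\theta=p_0+p_1x_{n_0+1}$. By the definition of $x_c$ we have $p_0+p_1x_c=1-\alpha$, so $\theta<1-\alpha$, with a fixed positive gap (this needs $p_1>0$, which is implicit since $x_c$ is defined). The signal therefore satisfies $p_0+p_1x_n\leq\theta$ for all $n\geq n_0+1$.

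Finally we iterate the belief update $q_{n+1}=(1-k)q_n+k(p_0+p_1x_n)$ from $n_0+1$. This gives
\[
q_{n_0+1+j}\leq (1-k)^j q_{n_0+1}+\bigl(1-(1-k)^j\bigr)\theta .
\]
Here $q_{n_0+1}$ is bounded, and $k\in(0,1]$. Letting $j\to\infty$, we get $\limsup q_n\leq\theta<1-\alpha$. This contradicts $q_n\geq 1-\alpha$ for all $n\geq n_0$.

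The only delicate point is ensuring the gap $\theta<1-\alpha$ is strict and uniform, rather than merely having $p_0+p_1x_n\leq 1-\alpha$. If $x_{n_0}=x_c$ exactly, the weak bound would allow $q_n$ to approach $1-\alpha$ from above without ever crossing it. Using the strict decrease $x_{n_0+1}<x_{n_0}$, which follows from $h(x)<x$, removes this issue. No convergence $x_n\to0$, and no argument of the type used in Lemma \ref{lem1}, is needed.
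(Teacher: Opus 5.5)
Your proof is correct and follows essentially the paper's argument. You argue by contradiction, show that $x_n$ drops strictly below $x_c$ after one step, and iterate the belief update to get $\limsup q_n<1-\alpha$. The only difference is cosmetic: you get the strict gap from the monotone decrease $h(x)<x$ and the orbit-dependent value $\theta=p_0+p_1x_{n_0+1}$, while the paper uses the uniform bound $\tilde x=\sup_{0\le x\le x_c}h(x)<x_c$.
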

	
	\begin{proof}
		Assume the opposite. First note that
		\[
		x_{n_0+1} \leq \sup_{0\leq x \leq x_c} \frac{x}{x+(1-x) \exp{\left(  a(x-b+M)\right)}}=\tilde x<x_c
		\]
		and so $x_n<x_c$ for all $n> n_0.$ Thus, for $n> n_0$ we have
		\[
		q_{n+1}=(1-k)q_n+k(p_0+p_1x_n)< (1-k)q_n +k(p_0+p_1\tilde x).
		\]
		As a result, $\limsup q_n\leq p_0+p_1\tilde x<  1-\alpha,$ contradiction.
	\end{proof}
	
	The next lemma uses Lemmas \ref{lem3} and \ref{lem4} to argue that for $a$ up to a certain threshold, the maps $f$ and $\tilde f$ induce the same attractors.
	
	\begin{lemma}\label{lem5}
		There exists an $a_0$ such that for all $a \in (0,a_0),$ $x_0 \in (0,1)$ and $q_0>0,$ there exists an $n_0$ such that $q_n\leq1-\alpha$ for all $n\geq n_0$.
		
	\end{lemma}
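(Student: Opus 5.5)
The case $p_0+p_1\le 1-\alpha$ is already settled by Lemma~\ref{lem2} and the induction following it, so we assume $1-\alpha<p_0+p_1$ and set $x_c=(1-\alpha-p_0)/p_1\in(0,1)$. Note first that the region $\{q\le 1-\alpha\}$ is forward invariant as long as $x$ stays at most $x_c$: if $q_n\le 1-\alpha$ and $x_n\le x_c$, then $q_{n+1}\le(1-k)(1-\alpha)+k(p_0+p_1x_c)=1-\alpha$. The plan is therefore to find, for small $a$, a trapping region
\[
T=\{(x,q): x\le x_c,\ q\le 1-\alpha\}
\]
(or a slightly smaller one) that every orbit eventually enters. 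Lemmas~\ref{lem3} and~\ref{lem4} supply the entry; smallness of $a$ supplies the invariance in $x$.

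\textbf{Entry into $T$.} By Lemma~\ref{lem3} there is an $n_1$ with $x_{n_1}\le x_c$. By Lemma~\ref{lem4} (its proof in fact shows that once $x\le x_c$ while $q\ge 1-\alpha$, $x$ stays below $\tilde x<x_c$), there is a first $n_2\ge n_1$ with $q_{n_2}<1-\alpha$. We need $x_{n_2}\le x_c$ as well. While $q\ge 1-\alpha$ we have $m(q)=1$, and since $M>1>b$ the cost satisfies $x-b+M>0$; hence $x$ strictly decreases, and $x$ remains at most $\tilde x$ on the interval $[n_1+1,n_2]$. So we may take $(x_{n_2},q_{n_2})\in T$.

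\textbf{Invariance of $T$ for small $a$.} From the logit form $y_{n+1}=y_n-a\,(x_n-b+Mm(q_n))$ we get $|y_{n+1}-y_n|\le a(1+M)$. Hence $x_{n+1}$ differs from $x_n$ by at most $O(a)$, uniformly; in fact $|x_{n+1}-x_n|\le x_n(1-x_n)\bigl(e^{a(1+M)}-1\bigr)$.

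The key observation is that, in the coordinates $(x,q)$, the fixed point $(x^\star,\cdot)$ of $\tilde f$ inside $\{q\le 1-\alpha\}$ satisfies $x^\star<x_c$. This is because $x^\star-b+cq^\star=0$ with $q^\star=p_0+p_1x^\star<1-\alpha$ requires $p_0+p_1x^\star<1-\alpha$; one checks this from the formulas, or treats it as the standing nontrivial case. Near $x=x_c$ with $q$ close to $1-\alpha$, the drift $x-b+Mm(q)$ is then strictly positive. The intended argument is:
\begin{itemize}
\item Replace $T$ by $T_\eta=\{x\le x_c,\ q\le 1-\alpha\}$ and check that $x$ can exceed $x_c$ only from a point with $x\in(x_c-Ca,x_c]$.
\item At such points, $x$ increases only if $x-b+Mm(q)<0$, i.e. $q<(b-x)(1-\alpha)/M$.
\item Then bound $q_{n+1}\le (1-k)q_n+k(p_0+p_1x_n)$ to show $q$ stays below $1-\alpha$ during the short excursion above $x_c$, and that $x$ returns.
\end{itemize}

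\textbf{Main obstacle.} The delicate step is the last one: showing that once the orbit is in $T$, the $q$-component never again reaches $1-\alpha$. Orbits can overshoot $x_c$ slightly, by $O(a)$, and $q$ lags behind. I would control this with a Lyapunov-type bound. For $a$ small the map is a small perturbation of the identity in $x$, and for $a<a_0$ (e.g.\ below the global-stability threshold analogous to $2/(b(1-b))$) one expects $x_n$ to converge to $x^\star<x_c$. Then $\limsup q_n\le p_0+p_1\limsup x_n<1-\alpha$, giving the required $n_0$.

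Concretely, I would first try to prove $\limsup_n x_n\le x_c-\kappa$ for some $\kappa>0$, using the fact that the $x$-step is of size $O(a)$ while the restoring drift near $x_c$ is bounded below by a positive constant once $q$ is near its eventual range. Choosing $a_0$ so that $a_0(1+M)$ is less than this margin closes the argument.
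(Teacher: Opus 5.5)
Your entry step is sound and matches the paper's use of Lemma~\ref{lem3}, the monotone decrease of $x$ while $q\ge 1-\alpha$ (because $M>1>b$), and Lemma~\ref{lem4}. The gap is the invariance step, which is the real content of the lemma and which you leave as the ``main obstacle.'' Your region $T=\{x\le x_c,\ q\le 1-\alpha\}$ is not forward invariant for any $a>0$. When $x_c<b$ and $q$ is small, the drift $x_c-b+Mq/(1-\alpha)$ at $x=x_c$ is negative, so $x$ is pushed above $x_c$. The dangerous configurations are therefore those with small $q$, not $q$ near $1-\alpha$, which is where you check that the drift is positive. Your excursion analysis is never carried out.

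The fallback, that for $a$ below an analogue of $2/(b(1-b))$ orbits converge to $x^\star$, is not available: no global-stability result is proved for this two-dimensional map. Your last paragraph likewise asserts that the drift near $x_c$ is bounded below by a positive constant ``once $q$ is near its eventual range.'' It does not say what that range is or why the drift is positive there, and that is exactly the point requiring proof.

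The missing ingredient is a lower cutoff on $q$. Put $q_-=(1-\alpha)(b-x_c)/M$. Since $q_{n+1}\ge(1-k)q_n+kp_0$, one has $q_n\ge q_-$ for all large $n$ provided $q_-<p_0$. This is where the standing assumption $p_0,p_1\le 1-\alpha$ enters. Indeed $q_-<(1-\alpha)(1-x_c)$, and $(1-\alpha)(1-x_c)\le p_0$ is equivalent to $(1-\alpha-p_0)(1-\alpha-p_1)\ge 0$.

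For $q\ge q_-$ we have $Mm(q)\ge b-x_c$, so with $z_n=x_n/(1-x_n)$,
\[
z_{n+1}\le z_n\,e^{a(x_c-x_n)} .
\]
Now choose $a_0$ so that $\max_{0\le x\le x_c}\frac{x}{1-x}e^{a(x_c-x)}\le\frac{x_c}{1-x_c}$ for $a\le a_0$; for instance $a_0=4$ works, since $\frac{d}{dx}\log\frac{x}{1-x}\ge 4$. This makes the rectangle $[0,x_c]\times[q_-,1-\alpha]$ forward invariant. The $q$-bounds follow from your convex-combination remark together with $p_0>q_-$. Running your entry argument after the time from which $q_n\ge q_-$ then finishes the proof.

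Your $\limsup$ strategy can also be made rigorous with the same ingredient: $\liminf q_n\ge p_0>q_-$ makes the drift strictly positive on a band around $x_c$. Without establishing $q_-<p_0$, equivalently $x_c-b+Mp_0/(1-\alpha)>0$, neither version closes.
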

	\begin{proof}
		Let $x_c=(1-\alpha-p_0)/p_1$ and $q_-=(1-\alpha)(b-x_c)/M.$ Then, under the assumptions on the parameters, we have that $q_-<(1-\alpha)(1-x_c)<p_0.$ Define 
		\[
		a_0= \max \left \{a\geq 0: \max_{0\leq x\leq x_c} \frac{x}{1-x} e^{-a(x-x_c)}\leq \frac{x_c}{1-x_c} \right\},
		\]
		and let $z_n=x_n/(1-x_n)$. Then, 
		\[
		z_{n+1}\leq z_n e^{a(x_c-x_n)}.
		\]
		For $a\leq a_0,$ if $x_n\leq x_c$ and $q_n\in[q_-,1-\alpha],$ then $q_-<q_{n+1}\leq 1-\alpha$ and 
		\[
		z_{n+1}\leq \frac{x_c}{1-x_c},
		\]
		from which $x_{n+1}\leq x_c$. As a result, for $a\leq a_0$ the rectangle $[0,x_c] \times [q_-,1-\alpha]$ is forward-invariant.
		
		Now notice that since
		\[
		q_{n+1}\geq (1-k)q_n+kp_0,
		\]
		there exists $N_0$ such that $q_n\geq q_-$ for all $n\geq N_0$. By Lemma \ref{lem3}, if at some $N_1\geq N_0$ we have $x_{N_1}\geq x_c,$ then there exists $N\geq N_1$ such that $x_{N}\leq x_c.$ If $q_N\leq 1-\alpha,$ then $x_n \leq x_c$ for all $n\geq N$ by the forward-invariance property we showed previously. If $q_N > 1-\alpha,$ then $x_{N+1}<x_N\leq x_c$ since $M>1>b.$ In any case we conclude that $x_n \leq x_c$ for all $n\geq N.$
		
		Using Lemma \ref{lem4} and forward-invariance, we get that there exists an $n_0>N$ such that $q_{n}<1-\alpha$ for all $n\geq n_0.$ By the property of forward invariance the result follows.
	\end{proof}
	
	The previous results suggest that, under the natural assumption $1-\alpha\geq p_0+p_1$, or, when this condition does not hold, but provided that $a\leq a_0$, the sequence $q_n$ eventually becomes and remains smaller than $1-\alpha$. Thus, the tail dynamics of the two systems, and consequently their attractors, are identical. For larger values of $a$, one can numerically verify that the attractors of $f$ and $\tilde f$ no longer coincide, but remain qualitatively similar. As an example, consider the parameters $b=0.9,\; p_0=10^{-5},\; p_1=4\times10^{-5},\; k=0.75,\; M=1.1,\; 1-\alpha=3.5 \times 10^{-5},\; a=60.2$. For these values, there exist infinitely many indices $n$ for which $q_n>1-\alpha$. Both system defined by $f$ and that by $\tilde f$ have a unique attractor, which is chaotic, shown in Fig.~\ref{large_alpha}(a) and (b), respectively. The same behavior can be observed for other parameter choices.
	
	\begin{figure}[h!]
		\centering
		\subfigure[]{
			\includegraphics[width=0.31\linewidth]{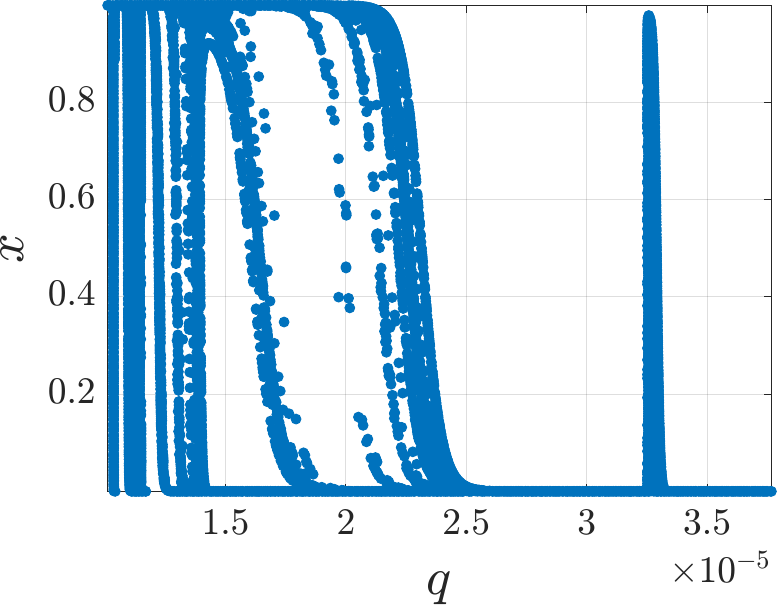}
		}
		\subfigure[]{
			\includegraphics[width=0.31\linewidth]{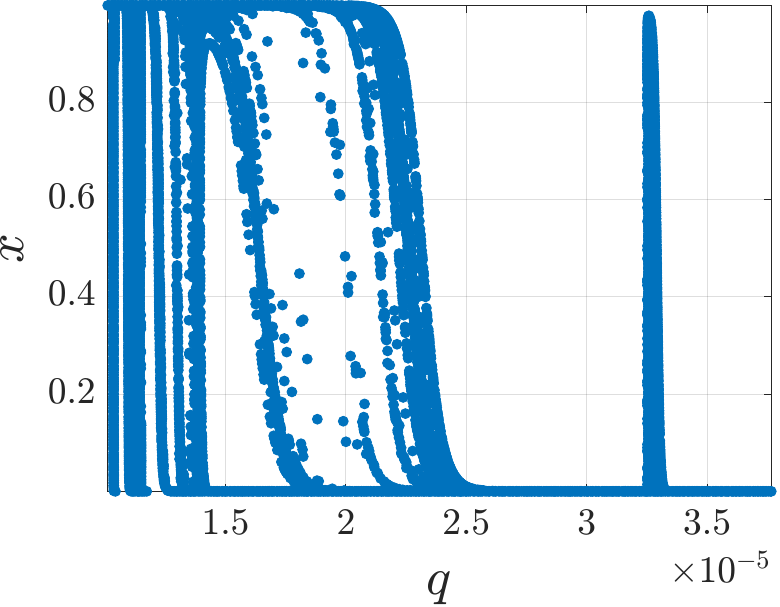}
		}
		\caption{(a) Attractor of the system defined by the map $ f$. (b) Attractor of the system defined by the map $\tilde f.$}
		\label{large_alpha}
		%\vspace*{12pt}
	\end{figure}

	\bibliographystyle{plain}
	
	\bibliography{bibliography}

\end{document}